\tikzstyle{none}=[inner sep=0pt]
\definecolor{green}{rgb}{0.000,1.000,0.000}
\definecolor{black}{rgb}{0.000,0.000,0.000}
\definecolor{red}{rgb}{1.000,0.000,0.012}
\definecolor{white}{rgb}{1.000,1.000,1.000}
\definecolor{lightgray}{rgb}{0.8,0.8,0.8}
\definecolor{darkgray}{rgb}{0.4,0.4,0.4}
\definecolor{gray}{rgb}{0.6,0.6,0.6}
\tikzstyle{simple}=[-,draw=black,line width=0.75]
\tikzstyle{dash}=[-,dashed,draw=black,line width=0.75]
\tikzstyle{graydash}=[-,dashed,draw=gray,line width=0.75]
\tikzstyle{wide}=[-,draw=black,line width=2.0]
\tikzstyle{dasharrow}=[->,dashed,draw=black, line width=0.75]
\tikzstyle{vertex}=[circle,fill=black,draw=black, scale = 0.3]
\tikzstyle{bigvertex}=[circle,fill=black,draw=black, scale = 0.5]
\newcommand{\mycirc}[1][black]{\Large\textcolor{#1}{\ensuremath\bullet}}
\definecolor{goldenyellow}{rgb}{1.0, 0.87, 0.0}
\definecolor{limegreen}{rgb}{0.2, 0.8, 0.2}
\definecolor{pansypurple}{rgb}{0.47, 0.09, 0.29}
\newenvironment{dense_itemize}{%
\begin{list}{$\bullet$}%
{\setlength{\topsep}{1mm}%
\setlength{\partopsep}{0mm}%
\setlength{\parskip}{0mm}%
\setlength{\parsep}{0mm}%
\setlength{\itemsep}{0mm}%
\setlength{\labelwidth}{4mm}%
\setlength{\leftmargin}{0mm}%
\addtolength{\leftmargin}{\labelwidth}%
\addtolength{\leftmargin}{\labelsep}%
\setlength{\itemindent}{0mm}}}%
{\end{list}}
\def\QED{\ensuremath{{\square}}}
\def\markatright#1{\leavevmode\unskip\nobreak\quad\hspace*{\fill}{#1}}
\newenvironment{proof}
 {\begin{trivlist}\item[\hskip\labelsep{\bf Proof.}]}
 {\markatright{\QED}\end{trivlist}}
\newcommand{\todo}[1]
{
    \marginpar
        {\bf {[!!!]}}
        {\bf {#1}}
}
\def\comic#1#2#3{\parbox{#1}{\centering\includegraphics[width=#1]{#2}\\{\footnotesize #3}}}
\def\comicII#1#2{\parbox{#1}{\centering\includegraphics[width=#1]{#2}}}
\newtheorem{observation}{Observation}
\newtheorem{lemma}{Lemma}
\newtheorem{theorem}{Theorem}
\newtheorem{corollary}{Corollary}
\title{Combinatorics and complexity of guarding polygons with edge and point 2-transmitters\thanks{Abstracts of part of this work appeared in the informal workshops FWCG~\cite{cfils-npppe-14} and 
EuroCG~\cite{cfils-ce2ta-15}}}
\author{Sarah Cannon\thanks{{Supported by a Clare Boothe Luce Graduate Fellowship and NSF DGE-1148903.}}\\
College of Computing, Georgia Institute of Technology, Atlanta, USA.\\
 Email: {\tt sarah.cannon@gatech.edu}
\and 
Thomas G. Fai\\
School of Engineering and Applied Sciences, Harvard University, MA, USA. \\
Email: {\tt tfai@seas.harvard.edu}
\and 
Justin Iwerks\\
Mathematics Department, The Spence School, NY, USA. \\
Email: {\tt jiwerks@gmail.com}
\and
Undine Leopold\\
Mathematics Department, TU Chemnitz, Germany. \\
Email: {\tt undine.leopold@mathematik.tu-chemnitz.de}
\and
Christiane Schmidt\thanks{Supported by the Israeli Centers of Research Excellence (I-CORE) program (Center No. 4/11).}\\
The Rachel and Selim Benin School of Computer Science and Engineering,\\ The Hebrew University of Jerusalem. \\
Email: {\tt cschmidt@cs.huji.ac.il}
}
\begin{document}
\date{}
\maketitle

%\vspace*{-2cm}
\abstract{
We consider a generalization of the classical Art Gallery Problem, where instead of a light source, the guards, called $k$-{\it transmitters}, model a wireless device with a signal that can pass through at most $k$ walls. We show it is NP-hard to compute a minimum cover of point 2-transmitters, point $k$-transmitters, and edge 2-transmitters in a simple polygon. The point 2-transmitter result extends to orthogonal polygons.
In addition, we give necessity and sufficiency results for the number of edge 2-transmitters in general, monotone, orthogonal monotone, and orthogonal polygons.%\todo{somewhere that preliminary abstracts in FWCG etc?}

%We provide sufficiency and necessity results for edge 2-transmitters in various polygon classes: general polygons, monotone polygons, monotone, orthogonal polygons and orthogonal polygons.
%We show how many edge 2-transmitters are always sufficient and sometimes necessary for various polygon classes: general polygons, monotone polygons, monotone, orthogonal polygons and orthogonal polygons.
%\vspace*{-.3cm}
%\todo{}
%We show that the following problems are NP-hard: Computing a minimum cover of point 2-transmitters, point k-transmitters and edge 2-transmitters in a simple polygon; the point 2-transmitter result extends to orthogonal polygons.
}

%\noindent{\bf Introduction.}
%\vspace*{-.2cm}
\section{Introduction}
%\vspace*{-.2cm}

%%%%%%%%%%%%%%%%%%%%%%%%%%%%%%%
% intro FWCG
%%%%%%%%%%%%%%%%%%%%%%%%%%%%%%%
%\iffalse
%\todo{update}
The traditional art gallery problem (AGP) considers placing guards in an art gallery---modeled by a polygon---so that every point in the room can be seen by some guard. A similar question asks how to place wireless routers so that an entire room has a strong %wireless 
signal.  %Observation shows that it is often not only the distance from a modem, but also the number of walls a signal has to pass through, that influences signal strength. 
Observation shows that often not only the distance from a modem, but also the number of walls a signal has to pass through, influences signal strength. 

Aichholzer et al.~\cite{affhhuv-mimp-09} first formalized this problem by considering {\it $k$-modems} %, also called 
({\it $k$-transmitters}), devices whose wireless 
signal can pass through at most $k$ walls. 
Since 2010, little progress has been made on the problem of $k$-transmitters, or even the problem of 2-transmitters,  despite reaching a wide audience as the topic of a computational geometry column by Joseph O'Rourke~\cite{o-cgc52-12} in the SIGACT News in 2012.
%In this first publication, they gave bounds on the number of $k$-transmitters that are always sufficient and sometimes necessary to cover monotone polygons, as well monotone orthogonal polygons, for which the bounds they give are tight. 
Analogous to the original 
AGP ($k=0$), two main questions can be considered:
\begin{itemize}
%(1) Given a polygon $P$,  can one efficiently find the minimum number of $k$-transmitters necessary to cover $P$?
\item[(1)] Given a polygon $P$, can a minimum cardinality $k$-transmitter cover be computed efficiently?
\item[(2)] Given a class of polygons of $n$ vertices, what are lower and upper bounds on the number of guards needed to cover a polygon from this class? %Here, we answer (1)---the complexity question.
\end{itemize}

For the classical AGP, %art gallery problem 
%the first problem was shown to be NP-hard for many variants: % O'Rourke and Supowit~\cite{os-snpdp-83} gave a reduction from 3SAT, for guards
the complexity question (1) was answered with NP-hardness for many variants.
O'Rourke and Supowit~\cite{os-snpdp-83} gave a reduction from 3SAT, for polygons with holes and guards
restricted to lie on vertices. 
%e.g., 
Lee
and Lin~\cite{ll-ccagp-86} %showed NP-hardness %also 
gave the result for simple polygons.
This result was extended to point guards (that are allowed to be located anywhere inside of $P$) by Aggarwal
(see~\cite{o-agta-87}); Schuchardt and Hecker~\cite{sh-tnhag-95} gave NP-hardness proofs for rectilinear simple
polygons, both for point and vertex guards. 
%for simple polygons. 
The complexity of the $k$-/ 2-transmitter problem had not previously been settled, and in this paper, we prove the minimum point 2-transmitter, the minimum point $k$-transmitter, and the minimum edge 2-transmitter problems to be NP-hard in simple polygons. The minimum point 2-transmitter result also holds for simple, orthogonal polygons.
%In~\cite{cfils-npppe-14} we show the 
%minimum point 2/$k$-transmitter and edge 2-transmitter problems are NP-hard. %minimum point 2-transmitter, point $k$-transmitter and edge 2-transmitter problems are NP-hard. %, the minimum point $k$-transmitter and the minimum edge 2-transmitter problem are NP-hard.
\iffalse
This result was extended to point guards (that are allowed to be located anywhere inside of $P$) by Aggarwal
(see~\cite{o-agta-87}); Schuchardt and Hecker~\cite{sh-tnhag-95} gave NP-hardness proofs for rectilinear simple
polygons, both for point and vertex guards. 
\fi

Answers to (2) are often referred to as ``Art Gallery theorems'', e.g. Chv\'atal's tight bound of $\lfloor \frac{n}{3}\rfloor$ for simple polygons~\cite{c-ctpg-75}.
%Classical results %art gallery theorems 
%include, e.g., Chv\'atal's tight bound of $\lfloor \frac{n}{3}\rfloor$ for simple polygons~\cite{c-ctpg-75}. 
%include, e.g., that $\lfloor \frac{n}{3}\rfloor$ guards are always sufficient and sometimes necessary to monitor a simple polygon of $n$ vertices, as first shown by Chv\'atal %\cite{c-ctpg-75}. 
Fisk~\cite{f-spcwt-78} later gave a short and simple proof for Chv\'atal's result. 
In the case of orthogonal polygons, the bound becomes $\lfloor \frac{n}{4}\rfloor$, as shown by Kahn et al.~\cite{kkk-tgrfw-83}.
%For the $k$-transmitter problem the focus so far has been on the second type of question.
%For the $k$-transmitter problem, 

For $k$-transmitters, Aichholzer % monotone polygons, and s
et al.~\cite{affhhuv-mimp-09}~showed %that 
$\lceil \frac{n}{2k}\rceil$ $k$-transmitters are always sufficient and $\lceil \frac{n}{2k+4}\rceil$ $k$-transmitters are sometimes necessary to cover a monotone $n$-gon% polygon with $n$ vertices
\footnote{The stated lower bound of $\lceil n/(2k+2) \rceil$ given in \cite{affhhuv-mimp-09} is a typo, and the example only necessitates $\lceil n/(2k+4) \rceil$ 2-transmitters.}; for monotone orthogonal polygons they gave a tight bound of $\lceil \frac{n-2}{2k+4}\rceil$ $k$-transmitters, for $k$ even and $k=1$.
%They also showed that for monotone orthogonal polygons, $\lceil \frac{n-2}{2k+4}\rceil$ $k$tr %are both always sufficient and sometimes necessary. 
%is a tight bound. 
Aichholzer et al.~\cite{aff-mimp-14} improved the bounds on monotone polygons to a tight value of $\lceil \frac{n-2}{2k+3}\rceil$.
%Fabila-Monroy et al.~\cite{ffuv-mimp-14} improved the bounds on monotone polygons to a tight value of $\lceil \frac{n-2}{2k+3}\rceil$. %(for $k$ even). %\todo{cite improved bounds of Aichholzer et al.}
%($k$ even), for odd $k$ they show that $\lceil \frac{n-2}{2k+2}\rceil$ are always sufficient, while there are also monotone polygons that necessitate $\lceil \frac{n-2}{2k+3}\rceil$ $k$-transmitters.
%Subsequent 
In addition, they gave tight bounds for  monotone orthogonal polygons for all values of $k$.
Other publications explored $k$-transmitter coverage of regions other than simple polygons, such as coverage of the plane in the presence of line or line segment obstacles \cite{bbal-cktpo-10,mvu-mip-09}. For example, Ballinger et al. established that for disjoint segments in the plane, where each segment has one of two slopes and the entire plane is to be covered, $\lceil \frac{1}{2}(\frac{5}{6}^{\log(k+1)}n+1)\rceil$ $k$-transmitters are always sufficient, and $\lceil \frac{n+1}{2k+2}\rceil$ $k$-transmitters are sometimes necessary. For polygons, Ballinger et al. concentrated on a class of spiral polygons, so called {\it spirangles}, and established that $\lfloor\frac{n}{8}\rfloor$ 2-transmitters are necessary and sufficient. For simple $n$-gons the authors provided a lower bound of $\lfloor n/6 \rfloor$ 2-transmitters. We improve this bound in Section~\ref{sec:point}.
%The authors of \cite{bbal-cktpo-10} also specifically considered 2-transmitters, as we do, and showed that $\lfloor n/6 \rfloor$ 2-transmitters are sometimes necessary to cover a polygon with $n$ vertices.

%	Little progress has been on the problem of $k$-transmitters, or even the problem of 2-transmitters, since 2010, despite reaching a wide audience as the topic of a computational geometry column by Joseph O'Rourke\cite{o-cgc52-12} in the SIGACT News in 2012. % In~\cite{cfils-npppe-14} we show that the minimum point 2-transmitter, the minimum point $k$-transmitter and the minimum edge 2-transmitter problem are NP-hard.

For the classical AGP variants involving guards with different capabilities have been considered; %variants were considered:
%Variants of the classical AGP have been considered, 
for example, {\it edge guards} monitor each point of the polygon that is visible to some point of the edge.
The computational complexity of the minimum edge guard problem was settled by Lee and Lin~\cite{ll-ccagp-86} who proved it to be NP-hard.
 Bjorling-Sachs~\cite{b-egrp-98} showed a tight bound of $\lfloor \frac{3n+4}{16}\rfloor$ edge guards for 
orthogonal polygons.
%Some Art Gallery Theorems for edge guards were provided by Bjorling-Sachs~\cite{b-egrp-98}, who showed $\lfloor \frac{3n+4}{16}\rfloor$  %guards are sometimes necessary and always sufficient to cover
%is a tight bound for 
%rectlinear polygons. 
For general polygons $\lfloor\frac{3n}{10}  \rfloor + 1$ edge guards are always sufficient and $\lfloor \frac{n}{4} \rfloor$ are sometimes necessary~\cite{s-rrag-92}, and no tighter bounds are known.
%To the best of our knowledge, 
%Research thus far has focused on point $k$-transmitters, and Mahdavi et al. \cite{msg-copskt-14} considered sliding cameras along segments in orthogonal galleries.
%Research on $k$-transmitters thus far has focused on point-shaped guards, and Mahdavi et al. \cite{msg-copskt-14} considered sliding cameras along segments in orthogonal galleries.

\iffalse
For the classical AGP variants of guards have been considered: e.g., {\it edge guards} that are able to monitor all points of the polygon that are visible from some point along an edge. The edge guard problem asks for the minimum number of edge guards to cover a given polygon $P$.
The computational complexity of this problem was also settled by Lee and Lin~\cite{ll-ccagp-86} who proved it to be NP-hard. 
Some Art Gallery Theorems for edge guards were provided by Bjorling-Sachs~\cite{b-egrp-98}, who showed that $\lfloor \frac{3n+4}{16}\rfloor$ guards are sometimes necessary and always sufficient to cover rectlinear polygons. For general polygons $\lfloor 3n/10 \rfloor + 1$ edge guards are always sufficient~\cite{s-rrag-92}.
To the best of our knowledge, research on $k$-transmitters thus far has focused on point-shaped guards, and Mahdavi et al. \cite{msg-copskt-14} considered sliding cameras along edges in orthogonal galleries.
\fi
%\fi
%%%%%%%%%%%%%%%%%%%%%%%%%%%%%%
% end intro as in EuroCG
%%%%%%%%%%%%%%%%%%%%%%%%%%%%%%

Other problems related to $k$-transmitter coverage have also been considered. Already in 1988, Dean et al.~\cite{jls-rphs-88} considered a problem in which single edges become transparent. While for ordinary visibility the AGP equates to finding a cover of star-shaped polygons, Dean et al.~defined pseudo-star-shaped polygons to include parts that are visible through single edges. The authors concentrated on testing whether a polygon is pseudo-star-shaped, that is, whether there exists one of these more powerful guards that completely covers the input polygon.
Moreover, Mouawad and Shermer~\cite{ms-sp-94} considered the so-called {\it Superman problem}: given a polygon $P$ and its subpolygon $K$, for a point $x$ in the exterior of $P$, how many edges of $P$ must be made intransparent or opaque so that $x$ cannot see a point of $K$?

{\bf Our Results.} 
Our focus is on finding covers of lower power transmitters, that is, mainly 2-transmitters. This is in line with the work of Ballinger et al.~\cite{bbal-cktpo-10} and is motivated both by practical applications and by virtue of being the natural extension of classical Art Gallery results, that is, results for $k=0$. 

We provide NP-hardness results for several problem variants in Section~\ref{sec:np}. In Section~\ref{sec:point} we provide observations on point 2-transmitter covers and a lower bound for the number of point 2-transmitters in general polygons. We give sufficiency and necessity results for edge 2-transmitters in Section~\ref{sec:edge}; these results are summarized in Table~\ref{tab:edge}.

\iffalse
{\bf Our Results.} In this paper we mainly consider edge 2-transmitters, %a summary on our results can be found in 
see Table~\ref{tab:edge} for a result summary.
\fi
\iffalse
%%%%%%%%%%%%%%%%%%%
% one column table
%%%%%%%%%%%%%%%%%%%
\begin{table}%[t!]

\begin{centering}
\footnotesize
%\scriptsize
%\vspace{2mm}
\begin{tabular}{|l|c|c|}
\hline
Edge 2- & Always  & Sometimes  \\
 transmitters & Sufficient &  Necessary \\ \hline 
 General & $\lfloor 3n/10 \rfloor + 1$ & $\lfloor n/6 \rfloor$ \\ 
  Polygons & \cite{s-rrag-92} & (Theorem \ref{thm:general-edge-lower}) \\ \hline
  Monotone (Mon.) & $\lceil (n-3)/8\rceil$ & $ \lceil (n-2)/9 \rceil $ \\ 
  Polygons& (Theorem \ref{thm:m-edge-upper}) & (Theorem \ref{thm:m-edge-lower}) \\ \hline
  Mon. Orthogonal  & $\lceil (n-2)/10\rceil$ & $ \lceil (n-2)/10 \rceil $ \\ 
 Polygons & (Theorem \ref{thm:mo-edge-upper}) & (Theorem \ref{thm:mo-edge-lower}) \\ \hline
 Orthogonal  & $\lfloor (3n+4)/16\rfloor$ & $ \lceil (n-2)/10 \rceil $ \\ 
 Polygons & \cite{b-egrp-98} & (Theorem \ref{thm:mo-edge-lower}) \\ \hline
\end{tabular}

\end{centering}

%\vspace{2mm}
%\caption{\small Our sufficiency and necessity results.}
\caption{\small Results for edge 2-transmitters.}
%A summary of our sufficiency and necessity results for edge 2-transmitters.}
\label{tab:edge}
\vspace*{-.5cm}
\end{table}
\fi

%%%%%%%%%%%%%%%%%%%
% one column table short
%%%%%%%%%%%%%%%%%%%
\begin{table}%[t!]

\begin{centering}
\small
%\scriptsize
%\vspace{2mm}
\begin{tabular}{|c|c|c|}
\hline
  Polygon Class & Always Sufficient &  Sometimes Necessary \\ \hline 
 General & $\lfloor 3n/10 \rfloor + 1 $ ~\cite{s-rrag-92} & $\lfloor n/6 \rfloor$ (Th.~\ref{thm:general-edge-lower}) \\ \hline
  Monotone  & $\lceil (n-3)/8\rceil$ (Th.~\ref{thm:m-edge-upper}) & $ \lceil (n-2)/9 \rceil $ (Th.~\ref{thm:m-edge-lower})\\ \hline
  Monotone Orthogonal  & $\lceil (n-2)/10\rceil$ (Th.~\ref{thm:mo-edge-upper})& $ \lceil (n-2)/10 \rceil $ (Th.~\ref{thm:mo-edge-lower}) \\ \hline
 Orthogonal  & $\lfloor (3n+4)/16\rfloor$~\cite{b-egrp-98} & $ \lceil (n-2)/10 \rceil $ (Th.~\ref{thm:mo-edge-lower}) \\\hline
\end{tabular}

\end{centering}

%\vspace{2mm}
\caption{ \small A summary of results on the number of edge 2-transmitters sufficient or necessary to cover a polygon with $n$ vertices.
}
%A summary of our sufficiency and necessity results for edge 2-transmitters.}
\label{tab:edge}
\vspace*{-.5cm}
\end{table}

%%%%%%%%%%%%%%%%%%%
% two column table
%%%%%%%%%%%%%%%%%%%
\iffalse
\begin{table*}

\begin{centering}

\vspace{2mm}
\begin{tabular}{|l|c|c|}
\hline
EDGE 2-TRANSMITTERS & Always Sufficient & Sometimes Necessary \\ \hline 
 General Polygons & $\lfloor 3n/10 \rfloor + 1$ & $\lfloor n/6 \rfloor$ \\ 
  & \cite{s-rrag-92} & (Theorem \ref{thm:general-edge-lower}) \\ \hline
  Monotone Polygons & $\lceil (n-3)/8\rceil$ & $ \lceil (n-2)/9 \rceil $ \\ 
  & (Theorem \ref{thm:m-edge-upper}) & (Theorem \ref{thm:m-edge-lower}) \\ \hline
  Monotone Orthogonal Polygons & $\lceil (n-2)/10\rceil$ & $ \lceil (n-2)/10 \rceil $ \\ 
  & (Theorem \ref{thm:mo-edge-upper}) & (Theorem \ref{thm:mo-edge-lower}) \\ \hline
 Orthogonal Polygons & $\lfloor (3n+4)/16\rfloor$ & $ \lceil (n-2)/10 \rceil $ \\ 
  & \cite{b-egrp-98} & (Theorem \ref{thm:mo-edge-lower}) \\ \hline
\end{tabular}

\end{centering}

\vspace{2mm}
\caption{A summary of our sufficiency and necessity results for edge 2-transmitters.}
\label{tab:edge}
\end{table*}
\fi

\vspace*{-.3cm}
%%!TEX root = main.tex

%\noindent{\bf Notations and Preliminaries.}
\section{Notations and Preliminaries.}

\iffalse
We consider a (closed) polygonal region $P$ with $n$ vertices that may have holes, i.e.,  that does not have to be simply connected. A {\it simple} polygon has no holes, i.e., the boundary of $P$ has one connected component.
A polygon is {\it monotone} w.r.t. a given direction $d$ if each line $\ell$ orthogonal to $d$ intersects the polygon in at most one connected component. That is, the intersection of each such line $\ell$ with $P$ is either empty, or consists of a single point or a single connected line segment. For simplicity, we assume monotone polygons are monotone w.r.t. the $x$-axis (every vertical line has a connected intersection with the polygon's interior)---any monotone polygon can be rotated to comply with this assumption. 
A polygon is {\it orthogonal} if all interior angles at vertices (between polygon edges) are 90 or 270 degrees. 
We assume monotone orthogonal polygons are monotone w.r.t. the $x$-axis and all edges are parallel to the $x$- or $y$-axis. Note that any orthogonal polygon has exactly $n/2$ vertical edges and exactly $n/2$ horizontal edges; we assume there exist no degenerate vertices with interior angle of 180 degrees.
\fi

In a polygon $P$, a point $q\in P$ is {\it 2-visible} from $p\in \mathbb{R}^2$ if the straight-line connection $\overline{pq}$ intersects $P$ in at most two connected components. % If $q\in P$ is $2$-visible from $p\in P$ for any two points $p, q \in P$, then $P$ is called {\it $2$-convex}. If instead all $q \in P$ are $2$-visible from a fixed $p \in P$, then $P$ is {\it $2$-star-shaped}.
%Note that as $2$ is even, and we are interested in coverage of the polygon, we can in the following restrict to points located in the polygon.

For a point $p\in P$, we define the {\it 2-visibility region} of $p$, 2VR($p$), as the set of points in $P$ that are 2-visible from $p$. For a set $S\subseteq P$, 2VR$(S):=\cup_{p\in S} \mbox{2VR}(p)$. A set $C\subseteq P$ is a {\em 2-transmitter cover} if 2VR$(C)=P$.

Points used for a 2-transmitter cover are called {\it (point) 2-transmitters}. By comparison, an {\it edge 2-transmitter} $e$ can monitor all points of $P$ that are 2-visible from some %point of $e$.
$q\in e$: 2VR$(e) =\cup_{q\in e} \mbox{2VR}(q)$. In this article, we focus on 2-transmitters in a polygon $P$ (edges or points of $P$). Most results for point 2-transmitters generalize to arbitrary locations.

Analogously, we can define $k$-visibility: a point $q\in P$ is {\it k-visible} from $p\in \mathbb{R}^2$ if the straight-line connection $\overline{pq}$ intersects $P$ in at most $k$ connected components.
For a point $p$, the {\it k-visibility region} of $p$, $k$VR($p$), is the set of points in $P$ that are $k$-visible from $p$. For a set $S\subseteq P$, $k$VR$(S):=\cup_{p\in S} k\mbox{VR}(p)$. A set $C\subseteq P$ is a {\em k-transmitter cover} if 2VR$(C)=P$.
Points used for a $k$-transmitter cover are called {\it (point) k-transmitters}.

%Like for the AGP we define {\it edge 2-transmitters}: an edge 2-transmitter $e$ can monitor all points of $P$ that are 2-visible to some point of $e$.

%Analogous to the extension of the classical Art Gallery Problem we define {\it edge (diagonal, mobile) 2-transmitters}: the set of points of $P$ that are visible to an edge (a diagonal, (an edge or diagonal)) 2-transmitter $e$ is the set of points that are 2-visible to some point of $e$. 

%%!TEX root = main.tex
%\vspace*{-.3cm}
\section{NP-hardness results}\label{sec:np}
%\noindent{\bf NP-hardness results.}

\iffalse
\begin{figure*}[t]
\centering
    \comic{.18\textwidth}{images/np-1-eps-converted-to}{(a)}\hfill
       \comic{.18\textwidth}{images/np-1-2-eps-converted-to}{(b)}\hfill
\comic{.3\textwidth}{images/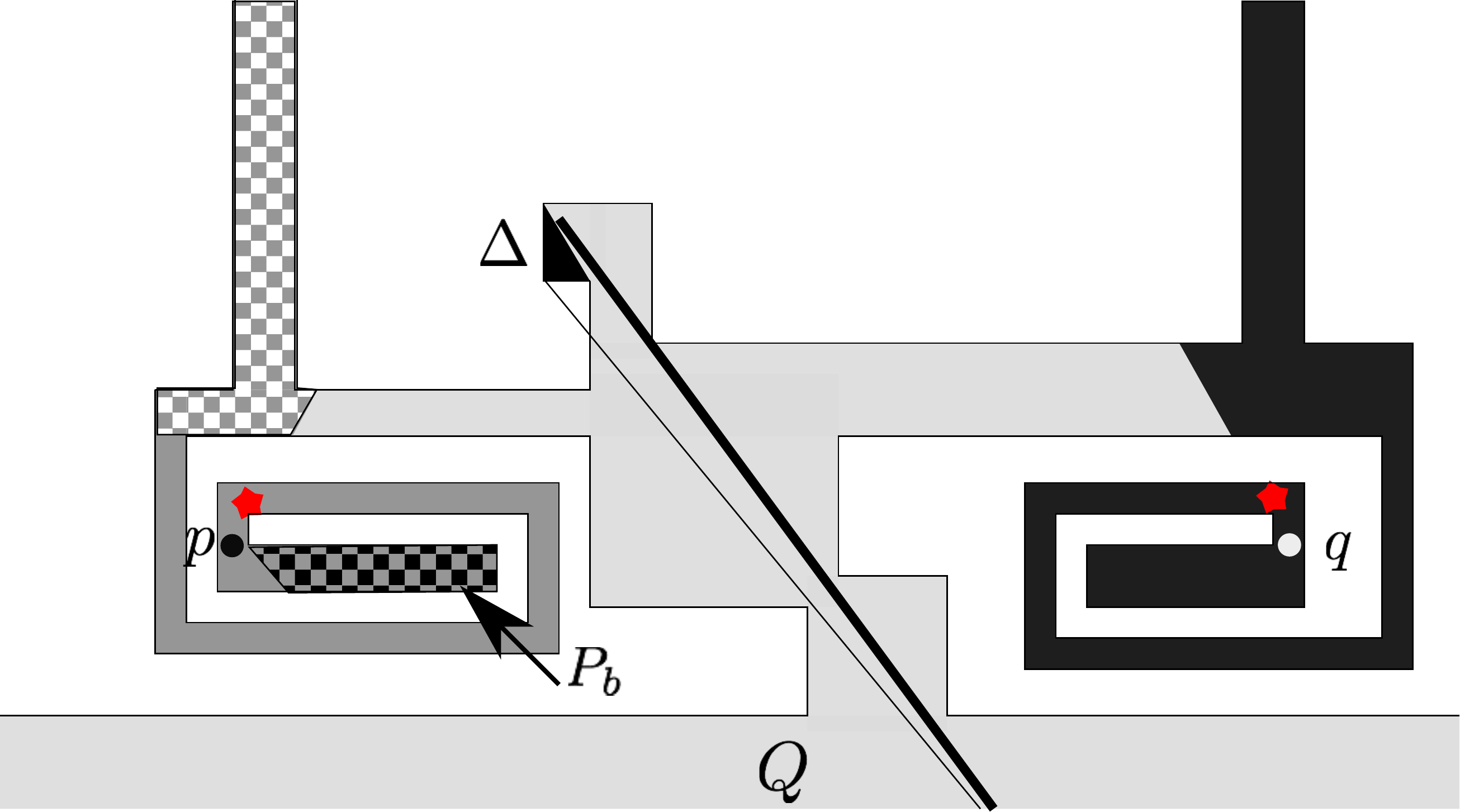}{(c)}
         \caption{\label{fig:np1}\small  An example of our NP-hardness reduction in the proof of Theorem\ref{thm:MP2TC}. (a) $L$ in black and the resulting spike box in red. (b) The additional construction for 2-transmitters.  (c) An orthogonal spike construction: the polygon is shaded in gray, one of the lines from $L$ is shown in bold black.
         }
\end{figure*}
\fi

%\iffalse
In this section, we provide NP-hardness results for finding the minimum number of point 2-transmitters in a 2-transmitter cover in general simple and in orthogonal simple polygons, and for point $k$-transmitters in simple polygons. In addition, we show that finding the minimum number of edge 2-transmitters in an edge 2-transmitter cover is NP-hard in general simple polygons. \\
%\fi

\iffalse
{\bf Point 2-transmitters and point $k$-transmitters}
We consider the following problem:
\fi

\noindent{\bf Minimum Point 2-Transmitter [$k$-transmitter] Cover (MP2TC)} \\{\bf[MP$k$TC] Problem:}\\
%\vspace*{-.3cm}
{\bf Given:} A polygon $P$.\\
{\bf Task:} Find the minimum cardinality 2-transmitter [$k$-transmitter] cover of $P$.
%\vspace*{-.5cm}
\begin{theorem}\label{th:MP2TC}
MP2TC is NP-hard for simple polygons.
\end{theorem}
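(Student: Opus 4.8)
The plan is to give a polynomial-time reduction from a known NP-hard covering problem whose structure matches the ``spike'' gadgets naturally afforded by $2$-visibility. A convenient choice is the problem of \emph{stabbing a family of lines by a minimum number of points}: given lines $L=\{\ell_1,\dots,\ell_m\}$ in the plane, find a smallest point set meeting every $\ell_i$. This is NP-hard, being the projective dual (incidence-preserving) of the point-line cover problem. From an instance $L$ I will construct, in polynomial time, a single \emph{simple} polygon $P_L$ and a fixed computable integer $c$ so that the minimum point $2$-transmitter cover of $P_L$ has size $\tau(L)+c$, where $\tau(L)$ is the minimum stabbing number of $L$. An algorithm for MP2TC would then decide stabbing.

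For the construction I will place all lines inside a large convex ``core'' and, for each $\ell_i$, carve a long, thin \emph{spike} (a narrow pocket) into the boundary whose axis lies along $\ell_i$ and whose deep apex I call $a_i$. The intended invariant is the classical one used in art-gallery hardness proofs for deep thin spikes: $a_i$ can be reached by a sight line from a point $x$ only if $x$ is (almost) aligned with the spike's axis, i.e.\ $x$ lies in an arbitrarily thin sliver around $\ell_i$ inside $P_L$. Taking the spikes thin enough and the lines in general position, the only points that simultaneously see several apexes are neighborhoods of the pairwise line intersections. Hence a point set seeing all apexes corresponds to a set of points meeting all lines, and conversely; a constant number of transmitters at fixed locations cover the core itself, accounting for the additive $c$.

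The twist specific to $2$-transmitters---and the source of the extra gadgetry---is that a sight line may now cross up to two walls, so a single thin spike can be ``cheated'' from off-axis by leaving and re-entering $P_L$. To restore the on-axis restriction I will nest a second, slightly offset pair of walls at the mouth of each spike, so that any off-axis line reaching $a_i$ must cross more than two walls and thus fail to be $2$-visible, while on-axis sight lines still reach $a_i$ within the two-crossing budget. Verifying that this nested mouth simultaneously (i)~keeps $a_i$ $2$-visible from every point of $\ell_i\cap P_L$ and (ii)~blocks it from everywhere else is the heart of the argument.

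The correctness proof then splits in two directions. Given a stabbing set $S$ of size $\tau(L)$, placing a transmitter at each point (realized in the interior of $P_L$) covers every apex, and the fixed core transmitters complete the cover, yielding size $\tau(L)+c$. Conversely, from any cover $C$ I will argue, via a standard candidate-position and perturbation argument, that the apex-covering transmitters may be taken at line intersections; projecting $C$ onto the lines whose apexes it sees gives a stabbing set of size at most $|C|-c$. The main obstacle is entirely in the visibility model: because $2$-transmitters see through two walls, I must build a simple polygon, with no holes and polynomially bounded coordinates, in which each apex is visible along its own line and nowhere else and in which distinct spikes create no accidental long-range sightlines. Taming this generosity of $2$-visibility, rather than the combinatorial bookkeeping, is where the real work lies.
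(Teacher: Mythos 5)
Your proposal is correct and follows essentially the same route as the paper: a reduction from minimum line cover (line stabbing), a ``spike box'' with one thin spike per line whose apex is $2$-visible only from points on that line, and an extra nested gadget at each spike mouth --- the paper's ``crown'' of two additional spikes --- to prevent off-axis sightlines from exploiting the two-wall budget. The only cosmetic differences are that the paper uses two spikes per line (one at each intersection with the bounding square) and states an exact equivalence of cover sizes rather than an additive constant $c$ for covering the core.
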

%\vspace*{-.6cm}
\begin{proof}
%The reduction is 
We reduce from the Minimum Line Cover Problem, %Brod\`{e}n et al.~\cite{bhn-gl2lp-01} proved it to be NP-hard:
 shown to be NP-hard by Brod\'{e}n et al.~\cite{bhn-gl2lp-01}:

\noindent{\bf Minimum Line Cover Problem (MLCP)}\\
{\bf Given:} A set $L$ of non-parallel lines in the plane.\\
{\bf Task:} Find the minimum set $S$ of points such that there is at least one point in $S$ on each line in $L$.

\iffalse
The reduction is from the problem of computing 
a minimum-cardinality set of points for
covering a given set of lines $S$. Applying geometric duality to a result of 
Megiddo and Tamir~\cite{mt-cllfp-82} yields NP-hardness of this  problem we use for reduction; Megiddo and Tamir showed that it is NP-hard to find a minimum-cardinality set of lines that cover a given set of points in the plane.
\fi

%\iffalse
\begin{figure}[t!]
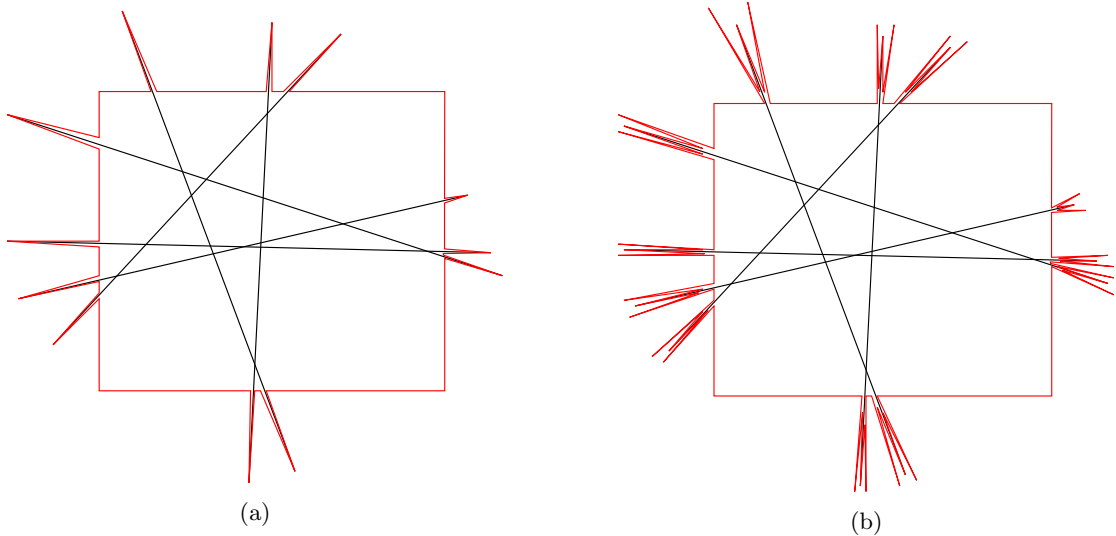

\centering
\hspace*{.05\textwidth}
    \comic{.4\textwidth}{np-1-eps-converted-to}{(a)}\hfill
       \comic{.4\textwidth}{np-1-2-eps-converted-to}{(b)}\hspace*{.05\textwidth}
         \caption{\label{fig:np1}\small An example of our NP-hardness reduction from MLCP to MP2TC in the proof of Theorem~\ref{th:MP2TC}. (a) $L$ in black and the resulting spike box in red. (b) The additional construction for 2-transmitters. 
         }
\end{figure}
%\fi

For a given set of lines $L$ we construct a ``spike box'' $P$: an (axis-aligned) square $Q$ that contains all intersection points plus two narrow spikes per line at the intersections with $Q$; see Figure~\ref{fig:np1}(a).
%and for each line we add two narrow spikes---at both intersections with the square, see Fig.~\ref{fig:np1}(a).
If we consider computing the minimum 0-transmitter cover of $P$, at least one 0-transmitter must lie on each line. Hence, this problem is equivalent to a minimum line cover.

%If we were considering 0-transmitters, then the visibility regions of any pair of points in the constructed polygon would overlap, hence, computing the minimum 0-transmitter cover (also known as the Art Gallery Problem) is equivalent to a minimum line cover. %\todo{Is this how NP-hardness for the art gallery problem was originally shown? Include a reference? Christiane: no it is not, Joe Mitchell had this proof, but the one which was published is the one by Lee and Lin that we extend for edge guards, this one was a reduction from 3SAT.}

However, for the case of 2-transmitters we must slightly modify the %classical 
spike-box construction. At each spike we add a small ``crown'': two additional spikes, resulting in a polygon $P(L)$. The spikes ensure that points in the central spike are visible only to points in $Q$ along the original line from $L$%(now with a certain thickness in $P(L)$)
; see Figure~\ref{fig:np1}(b).
This yields that a minimum point cover of $L$ is equivalent to a minimum 2-transmitter cover of $P(L)$.

%Again, a similar claim holds: A minimum point cover of the given set of lines $S$ is equivalent to a minimum 2-transmitter cover of $P(S)$.
%\vspace*{-.45cm}
\end{proof}
%\vspace*{-.2cm}
Observe that we can easily extend the above result to point-$k$-transmitters: we enlarge the ``crowns'' and add $k/2$ spikes to each side of the central spike that relates to the input line from $L$. This yields:

%we add further spikes to the construction of the ``crowns''---when we consider $k$-transmitters we add $k/2$ spikes to each side of the central spike that relates to the input line from the stabbing problem. This easily results in the following statement:

\iffalse
\noindent{\bf Minimum Point $k$-Transmitter Cover (MP$k$TC) Problem:}\\
{\bf Given:} A polygon $P$.\\
{\bf Task:} Find the minimum cardinality $k$-transmitter cover of $P$.\\

Obviously, we can extend the above result for general $k$-transmitters by adding further spikes to the construction of the ``crowns'': when we consider $k$-transmitters we add $k/2$ spikes to each side of the central spike that relates to the input line from the stabbing problem. This easily results in the following statement:
\fi
%\vspace*{-.3cm}
\begin{corollary}
MP$k$TC is NP-hard for simple polygons.
\end{corollary}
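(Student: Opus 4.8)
The plan is to extend the reduction from the Minimum Line Cover Problem (MLCP) used in Theorem~\ref{th:MP2TC} from $2$-transmitters to general $k$-transmitters, reusing the ``spike box'' construction essentially verbatim and only modifying the ``crowns'' so that the central spike of each line can be seen \emph{only} by $k$-transmitters placed on the original input line from $L$. First I would take the same axis-aligned square $Q$ containing all intersection points of the lines in $L$, and for each line add two narrow central spikes at its intersections with $Q$, exactly as before. The goal is to construct, for each central spike, a surrounding gadget whose walls force any $k$-transmitter that covers the tip of the central spike to lie along the original line $\ell\in L$ inside $Q$.

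The key modification is the crown. In the $k=2$ case the crown consists of two extra spikes flanking the central spike, which create two additional wall-crossings on any ray reaching the tip that does not come straight down the line $\ell$. For general (even) $k$, I would add $k/2$ extra spikes to each side of the central spike, so that a sightline from $Q$ to the tip of the central spike that deviates from $\ell$ must cross the two walls of the central spike \emph{plus} the walls of enough flanking spikes to accumulate more than $k$ connected components of intersection with $P$, and is therefore not $k$-visible. A sightline coming down $\ell$ itself, by contrast, crosses only the two walls bounding the central spike, giving at most the allowed number of components; by making the spikes sufficiently narrow and appropriately nested, one arranges that the tip of each central spike is $k$-visible precisely from the points of $Q$ lying on $\ell$. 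This establishes the exact analogue of the $k=2$ property: the only way to $k$-see every central spike tip is to place, for each line, at least one $k$-transmitter on that line.

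Given this geometric property, the correctness argument is identical to the one for MP2TC. I would argue that a point set $S$ is a minimum line cover of $L$ if and only if the corresponding placement of $k$-transmitters (one at each point of $S$, viewed as a point of $P(L)$ inside $Q$) is a minimum $k$-transmitter cover of the constructed polygon $P(L)$. In one direction, if $S$ stabs every line then every central spike tip is covered (by the property above) and the bounded region $Q$ together with the spikes is covered by transmitters lying on lines through it; in the other direction, any $k$-transmitter cover must, by the crown property, include for each line a transmitter lying on that line, yielding a line cover of no larger cardinality. Since the construction has polynomial size and MLCP is NP-hard~\cite{bhn-gl2lp-01}, NP-hardness of MP$k$TC follows.

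The main obstacle I expect is the quantitative tuning of the crown geometry: one must verify rigorously that $k/2$ flanking spikes per side suffice to block \emph{every} off-line sightline to the central tip while simultaneously not blocking the on-line sightline, and that the spikes can be made narrow enough to nest without interfering with one another or with neighbouring lines' gadgets. This requires a careful count of how many connected components of $\overline{pq}\cap P$ a deviating ray must produce, and an argument that the adversary cannot ``thread'' a ray between spikes to reduce the crossing count below $k+1$. Handling odd $k$ (where the natural symmetric $k/2$ splitting is not an integer) may need a small asymmetric adjustment, adding $\lceil k/2\rceil$ spikes on one side and $\lfloor k/2\rfloor$ on the other, or an extra single wall, which I would address separately once the even case is settled.
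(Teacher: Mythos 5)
Your proposal takes essentially the same route as the paper: the paper also derives the corollary by reusing the MP2TC spike-box reduction from MLCP and simply enlarging each ``crown'' with $k/2$ additional spikes on each side of the central spike, so that only $k$-transmitters lying on the original line inside $Q$ can cover the central tip. The paper records this in a single sentence, so the quantitative tuning and odd-$k$ issues you flag are left implicit there as well; your writeup is, if anything, more explicit about what would need to be verified.
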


%\iffalse
\begin{figure}[t]


\centering
\comicII{.8\textwidth}{ortho-np-bw.pdf}
         \caption{\label{fig:np2}\small  An orthogonal spike construction, used in the proof of Theorem~\ref{th:MP2TC-ortho}: $P(L)$ is shaded in light gray, a line of $L$ is shown in bold black.
         }
\end{figure}
%\fi

%In the following we show that Th.~\ref{th:MP2TC} remains true if we restrict to the class of orthogonal polygons. However, the MLCP does not remain NP-hard if we use orthogonal lines only. In addition, a simple ``rectilinearization'' of the above gadget does not work either, so, we use more care to construct a gadget that serves the same cause:
%\vspace*{-.6cm}
\begin{theorem} \label{th:MP2TC-ortho}
MP2TC is NP-hard for orthogonal, simple polygons.
\end{theorem}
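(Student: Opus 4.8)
The plan is to reduce again from the Minimum Line Cover Problem, following the scheme of Theorem~\ref{th:MP2TC} but substituting an orthogonal gadget for the slanted spike-and-crown construction, as indicated in Figure~\ref{fig:np2}. First I would place all pairwise intersection points of the lines $L$ inside an axis-aligned square $Q$, which is already orthogonal, so that only the gadgets attached along $\partial Q$ need to be made orthogonal. For each line $\ell\in L$ I attach, outside $Q$, an orthogonal ``staircase'' gadget containing a distinguished deep vertex $d_\ell$, and I aim to engineer it so that $d_\ell$ is 2-visible from exactly the points of $Q$ lying on $\ell$; equivalently, $\mbox{2VR}(d_\ell)\cap Q = \ell\cap Q$ (up to an arbitrarily small neighborhood).

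In the slanted construction the two crown spikes act as blockers: any sight line reaching the tip $d_\ell$ while crossing at most two walls is forced to run along $\ell$, since an off-axis line would have to pierce a crown spike and thereby exceed the two-wall budget. I would reproduce this effect with axis-parallel blockers. Arranging a short orthogonal staircase of thin teeth around the slanted approach direction of $\ell$, I would ensure that the only corridor from $Q$ to $d_\ell$ crossing at most two edges is the thin slanted tube aligned with $\ell$, while every off-axis sight line crosses at least three edges and is therefore not 2-visible. Making the teeth thin and the staircase deep shrinks $\mbox{2VR}(d_\ell)$ into an arbitrarily small neighborhood of $\ell\cap Q$.

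Granting this gadget, the reduction closes exactly as in Theorem~\ref{th:MP2TC}. A minimum hitting set $S$ for $L$, placed at its points inside $Q$, 2-sees every $d_\ell$ (each point of $S$ lies on some line) and also 2-sees all of $Q$ and the shallow portions of the gadgets, yielding a 2-transmitter cover of the same cardinality. Conversely, any 2-transmitter cover must contain, for each $\ell$, a transmitter that 2-sees $d_\ell$, and such a transmitter must lie on $\ell$; the projection of these transmitters into $Q$ is a line cover of $L$ of equal size. Hence the minimum 2-transmitter cover of the orthogonal polygon $P(L)$ equals the minimum line cover of $L$, and NP-hardness follows from~\cite{bhn-gl2lp-01}.

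The hard part will be the orthogonal gadget itself and its visibility analysis: I must show that axis-parallel walls alone can confine the 2-visible region of the deep vertex $d_\ell$ to a neighborhood of the slanted line $\ell$, and I must quantify how thin and deep the staircase must be so that, simultaneously for all lines, no point of $Q$ off $\ell$ can 2-see $d_\ell$ and no transmitter placed for one line accidentally 2-sees the deep vertex of another. I would also check that $Q$ together with the shallow gadget parts remains fully 2-coverable from the hitting-set points, that the gadgets attached at the various crossing angles along $\partial Q$ do not overlap or interfere, and that the resulting orthogonal polygon has size polynomial in the input, so that the two optima coincide exactly and the reduction is valid.
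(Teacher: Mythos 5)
Your high-level plan coincides with the paper's (reduce from a line-cover problem, replace the slanted crown gadget by an orthogonal one attached to the spike box), but the entire difficulty of the theorem lives in the gadget you defer, and the two properties you demand of it are precisely the ones the paper could not obtain. First, you reduce from the general MLCP with arbitrary slopes. In an orthogonal polygon every wall is axis-parallel, so the slanted ``corridor'' along $\ell$ must be carved out by a staircase of blockers whose coordinates encode the slope of $\ell$; the paper sidesteps the resulting encoding problem by instead reducing from the 4-slope variant MLCP4 of Biedl et al.~\cite{biikm-gp-11} and rotating by $22.5^{\circ}$, explicitly noting that the slope restriction is what guarantees the constructed polygon can be polynomially encoded. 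You flag polynomial size as something ``to check,'' but with arbitrary slopes it is not a routine check --- it is the reason the paper changes the source problem.

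Second, and more substantively, you assert that the gadget can be built so that everything outside the deep vertex $d_\ell$ is 2-visible from $Q$, giving \emph{exact} equality between the two optima. The paper's gadget (Figure~\ref{fig:np2}) does not achieve this and gives evidence that it cannot be achieved cheaply: the orthogonal blockers needed to confine the 2-visibility of the deep region $\Delta$ to the line necessarily create further pockets ($p$, $q$, and the regions $P_b$) that are not 2-visible from $Q$ at all, forcing two dedicated 2-transmitters inside each spike gadget and yielding the offset correspondence ``line cover of size $\ell$ iff 2-transmitter cover of size $\ell+4k$'' rather than your clean equality. An offset reduction is still perfectly valid for NP-hardness, but you must then argue (as the paper does) that the auxiliary guards are forced, pinned to specific locations, and useless for covering any other gadget's deep region --- an argument entirely absent from your writeup. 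As it stands, the proposal is a restatement of the goal plus an unconstructed gadget whose claimed properties are stronger than what the paper's construction attains, so the proof has a genuine gap.
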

%\vspace*{-.4cm}

\begin{proof}

Again we prove the statement by reduction from the MLCP. However, we do not reduce from the original problem, but rather a variant that was proven to be NP-hard as well. As Biedl et al.~\cite{biikm-gp-11} showed in 2011, the MLCP remains NP-hard even if the given lines (parallelity allowed) have only one out of 4 slopes (horizontal, vertical, diagonal and off-diagonal in the octagonal grid), a problem we call MLCP4. %:%: horizontal, vertical, diagonal and off-diagonal in the octagonal grid in their construction. 
%So, the problem we use for our reduction is the following:\\
\iffalse
\noindent{\bf MLCP with 4 Slopes (MLCP4)}\\
%\noindent{\bf Minimum Line Cover Problem with 4 Slopes (MLCP4)}\\
{\bf Given:} A set $L$ of lines in the plane, each line having one out of 4 slopes (with an angle of 45 or 90$^\circ$ in between non-parallel line pairs).\\
{\bf Task:} Find the minimum set $S$ of points such that there is at least one point in $S$ on each line in $L$.%\\

\fi
The slope constraint ensures that our constructed polygon can be polynomially encoded.

\iffalse
Again we prove the statement by reduction from the problem of covering a given set of lines by a minimum number of points. In fact this problem remains NP-hard if the given lines have only one out of 4 slopes \todo{Reference!!}. This ensures that our constructed polygon can be encoded.
\fi

\iffalse
\begin{figure}[thb]
\center
    \comicII{.4\textwidth}{images/ortho-np-bw.pdf}
         \caption{\label{fig:ortho-np}\small An orthogonal spike construction: the polygon is shaded in gray, one of the lines from $S$ is shown in bold black.
         }
\end{figure}
\fi

\iffalse
\begin{figure}[t]
\center
    \comicII{.8\textwidth}{images/ortho-np-eps-converted-to}
         \caption{\label{fig:ortho-np}\small An orthogonal spike construction: the polygon is shaded in light-gray, one of the lines from $S$ is shown in bold black.
         }
\end{figure}
\fi

Given the lines $L$ of a MLCP4 instance, we rotate all lines by 22.5$^\circ$. % and we again 
Again we construct an (axis-aligned) square $Q$ that contains all %the 
intersection points. Due to the rotation, no line is orthogonal to an edge of $Q$.
We construct a ``spike box'', but alter it slightly:
for each spike %of the spike-box 
we insert the construction shown in Figure~\ref{fig:np2}
%Fig.~\ref{fig:np2} 
(adapted to one of the four slopes), resulting in an orthogonal polygon $P_o(L)$. %The line from $L$ is indicated in bold black, the polygon's interior is shaded in light-gray. 

Let $k$ be the number of lines in $L$, and $\ell$ the size of the MLCP4 solution.
%the minimum number of points necessary to cover these $k$ lines.
%
We claim a minimum line cover of $L$ of cardinality  $\ell$ is equivalent to a minimum 2-transmitter cover of $P_o(L)$ of cardinality $\ell + 4k$.

In Figure~\ref{fig:np2},
%Fig.~\ref{fig:np2}, %consider 
$p$ and $q$ %---they 
are only 2-visible from points in the regions shaded dark gray and black, respectively, and not from $Q$. 
%In particular, no point from %the %large %axis-aligned 
%square 
%$Q$ %$\subseteq P_o(L)$ %of the spike-box 
% can see these points. 
In addition, the black triangle, labelled $\Delta$, is  2-visible %to any point to the left or right of the spike gadget, i.e., 
from no other spike gadget, and in %the spike-box square 
$Q$ only from points along the thickened line representing the line from $L$. %why red triangle, not just some red pt in that area?
If we want to place a 2-transmitter $g$ to simultaneously cover $p$ and $\Delta$, $g$ must be located in the polygon area with white squares. But then, no point in the area with black squares, $P_b$, is visible to $g$.  No point in $Q$ covers $P_b$, requiring an additional 2-transmitter within the spike gadget.
An analogous argument is used for $q$.
%Consequently, 
Thus, the minimum number of 2-transmitters that cover the spike gadget---except for $\Delta$, and parts of the lower corridors of the two spirals which are 2-visible from any point within %the spike-box square 
$Q$---is two, located at the two (red) stars. This results in $4k$ 2-transmitters, two per spike. %, two %for each 
%per spike gadget. 

The remaining %red triangles 
$\Delta$'s in all spike gadgets can be covered with $\ell$ 2-transmitters iff the lines of $L$ can be covered by $\ell$ points, which establishes the claim.

%\vspace*{-.45cm}
\end{proof}

\iffalse
{\bf Edge 2-transmitters}
We show that determining the minimum number of edge 2-transmitters needed to cover a simple polygon is also NP-hard. %\vspace{3mm}
\fi
%\vspace*{-.15cm}
%Finally, we consider edge 2-transmitter:

\noindent{\bf Minimum Edge 2-Transmitter Cover (ME2TC):}\\
{\bf Given:} A polygon $P$.\\
{\bf Task:} Find the minimum cardinality edge 2-transmitter cover of $P$.
%number of edge 2-transmitters that guarantee complete coverage of $P$ (an edge 2-transmitter cover of $P$).
%\vspace*{-.3cm}
\begin{theorem}
ME2TC is NP-hard for simple polygons.
\end{theorem}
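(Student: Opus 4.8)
The plan is to reduce from the Minimum Line Cover Problem (MLCP), exactly as in the proof of Theorem~\ref{th:MP2TC}, reusing the spike-box-with-crowns of Figure~\ref{fig:np1}(b) as the skeleton of the construction. The essential new difficulty is that an edge 2-transmitter $e$ sees $\mathrm{2VR}(e)=\cup_{q\in e}\mathrm{2VR}(q)$, so it is strictly more powerful than a point 2-transmitter: the whole construction must be shaped so that, for the features encoding the line cover, an edge buys nothing more than a cleverly placed point would. In other words, I want to engineer the polygon so that an edge effectively behaves like the single point on it that matters.

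First I would build, for each line $\ell\in L$ and each of its two crossings with the bounding square $Q$, a spike-with-crown gadget so that the tip feature (the analog of the triangle $\Delta$ in Theorem~\ref{th:MP2TC-ortho}) is 2-visible from inside $Q$ only along $\ell$. I would then attach to each gadget a fixed number of private pockets, each 2-visible from a single short edge of that gadget and from no point of $Q$; as in the orthogonal reduction, this forces a fixed number of edges per gadget---say $c$ per spike, hence a constant $c'=2ck$ forced edges for $k$ lines---which cover everything local except the $2k$ tip features. Because $c'$ depends only on the construction and not on the MLCP solution, minimizing the total cover reduces to minimizing the \emph{additional} edges needed to catch the tips.

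The reduction then hinges on the point--line incidence structure. Since each tip is 2-visible from $Q$ only along its line, an edge $e\subseteq Q$ covers the tip of $\ell$ iff $e$ contains a point of $\ell$ that still sees the tip. To keep edges from being more generous than points, I would realize every candidate guarding location as a \emph{short} edge, cut as a thin inward slit of $Q$ whose tip touches the desired location: a slit edge placed at a common intersection point of several lines sees exactly the tips of those lines (as a point there would), while a slit edge placed on a single line sees only that line's two tips. Providing one slit edge at each of the $O(k^2)$ pairwise intersection points of $L$ and one on each line makes the choice of additional edges equivalent to choosing a set of points hitting all lines. Thus a minimum edge 2-transmitter cover has size $c'+\ell$ iff the minimum line cover of $L$ has size $\ell$; since the slit edges are polynomially many and a generic rotation keeps coordinates polynomially bounded (as in Theorem~\ref{th:MP2TC-ortho}), NP-hardness follows.

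The step I expect to be the main obstacle is verifying the ``edges are no stronger than points here'' property. I must rule out long edges---especially the boundary edges of $Q$ and the walls of the slits and spikes---whose interiors sweep across several unrelated tips or slit points and so cover tips of lines with no common point. This forces two kinds of care: a generic placement (a $22.5^\circ$-style rotation) ensuring no polygon edge is collinear with a line of $L$ and no edge interior passes through two unrelated intersection points; and an occlusion analysis ensuring that every sightline from a chosen slit edge to a tip crosses \emph{exactly} the two intended crown walls and not the extra walls introduced by the slits themselves. Making all of these non-degeneracy and two-wall conditions hold simultaneously, while keeping the gadgets thin and disjoint, is the delicate part of the argument.
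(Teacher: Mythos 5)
Your route is genuinely different from the paper's: the paper does not reduce from MLCP at all for edge 2-transmitters, but instead adapts Lee and Lin's 3SAT reduction for the minimum edge guard problem \cite{ll-ccagp-86}, modifying their literal, clause, variable, and $W$ gadgets so that each forced choice of an edge guard becomes a forced choice of an edge 2-transmitter. That construction was designed for edge guards from the start, so the ``edges are no stronger than the intended points'' issue you wrestle with never arises there. However, as written your argument has a genuine gap, and it is exactly the step you flag as the main obstacle: you never establish that every edge of the constructed polygon covers only a set of tips that some single point on the corresponding lines could cover. This is not a routine perturbation matter. The polygon necessarily contains long edges --- the walls of the central spikes (which run from $\partial Q$ out to the tip, inside the narrow visibility sliver of their line), the walls of your slits, and the short gadget edges you force into every solution via private pockets --- and any one of these could see its own gadget's tip ``for free,'' or sweep across the slivers of several lines with no common point. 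If a forced edge sees its gadget's tip, the $+\ell$ term vanishes and the reduction collapses; if some available edge covers tips of two lines without a common point, the backward direction (a cover of size $c'+\ell$ yields a line cover of size $\ell$) fails. Since the entire correctness of the reduction rests on this occlusion analysis, deferring it leaves the proof incomplete rather than merely unpolished.

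A secondary point: your discretization of candidate guard locations to the $O(k^2)$ pairwise intersection points (plus one point per line) is legitimate for MLCP, but introducing $O(k^2)$ slits creates new boundary regions that must themselves be covered in the forward direction by the chosen slit edges and forced edges alone, including the unchosen slits; and each slit adds two new walls that every nearby sightline to a tip must not cross, which interacts with the two-wall budget of 2-visibility. None of this is fatal in principle, but it all lands back in the same unverified occlusion analysis. If you want to complete a proof along these lines, the honest statement of what must be proved is: for every edge $e$ of $P(L)$, the set of tip features in $\mathrm{2VR}(e)$ is contained in the set of tip features 2-visible from a single point of $Q$, and no forced edge's 2-visibility region contains any tip. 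Alternatively, following the paper and starting from a construction already built for edge guards avoids the problem entirely.
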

%\vspace*{-.33cm}
\begin{proof}
%The proof is by adapting 
We adapt the proof for the minimum edge guard problem of Lee and Lin\cite{ll-ccagp-86}, i.e., their reduction from 3SAT. 

Let $F$ be an instance of the 3SAT problem. That is, $F$ is a Boolean formula in 3-CNF: it consists of a set $\mathcal{C}=\{C_1, C_2, \ldots, C_m\}$ of $m$ clauses over $n$ variables $\mathcal{V}=\{x_1, x_2, \ldots, x_n\}$, where each clause $C_i$ consists of three literals. The 3SAT problem is to decide whether there exists a truth assignment to the variables of $F$ such that each clause is satisfied.
From $F$ we construct in polynomial time a simple polygon $P$ (that is, we represent the variables and clauses by pieces of the polygon), such that $P$ admits an edge 2-transmitter cover of size $3m+n+1$ if and only if $F$ has a truth assigment satisfying all its clauses.

As with Lee and Lin's adaptation of the point guard  AGP%version %to the proof for the edge guard version
, we need to modify the literal pattern, the variable pattern, the vertex $W$ and in our case also the clause pattern; see Figure~\ref{fig:npedges}.
For the variable pattern we make the additional assumption that each literal appears in at least 3 clauses (or we could add spikes to the rectangle wells of the variable pattern).
%(otherwise we %would have to 
%can add spikes to the rectangle wells of the variable pattern). % \todo{Should we maybe justify it somehow? i.e. maybe just add a sentence why we can assume that} .

In order to facilitate understanding of our proof, we give a brief recap of the point guard proof of Lee and Lin.
 They present a simple polygon that has a large central convex part, with clause gadgets on top and variable gadgets on bottom.  The clause gadgets consist of a triangular structure with three smaller literal gadgets, one per literal (variable or negated variable) that appears in the clause. 
For each literal gadget, only two points (vertices) can be used to simultaneously guard the spike of the literal gadget and a spike in the variable gadget. These two points correspond to the two possible truth settings of the literal. 
For each clause, this triangular structure is connected to the main body of the polygon, such that at least one of the point guards coinciding with a truth setting satisfying the clause must be used, as otherwise a part of the triangle cannot be guarded. 
The variable gadgets are used to enforce consistency amidst the truth settings of a variable that appears either as the variable or negated in different clauses. 
Per variable, two slightly tilted wells are added to the bottom of the polygon's main body. The wells of all variables can be monitored by the top left corner of the polygon's main body, the vertex $W$. 
For each variable pair of wells there is one triangle that can only be monitored by top points of the two wells of this variable. One well of the pair corresponds to a truth setting of true, and the other to a truth setting of false. Spikes are added to these wells that are an extension of lines of visibility into the wells from the two distinguished points of the literal pattern in the clause gadgets via the right top corner of a well. Thus, only if the truth setting of a literal is chosen consistently in all clause gadgets, all spikes of one well can be monitored from these points. In this case, only one additional guard is needed to cover the spikes of the other well, plus the triangle of the variable well pair. Thus there is a consistent satisfying assignment of truth values to variables if and only if the polygon has a guard set of size $3n + m + 1$.

For the NP-hardness proof for edge guard covers, Lee and Lin adapted these structures to ensure the properties given above. We do the same for edge $2$-transmitters.

\begin{figure*}[t!]
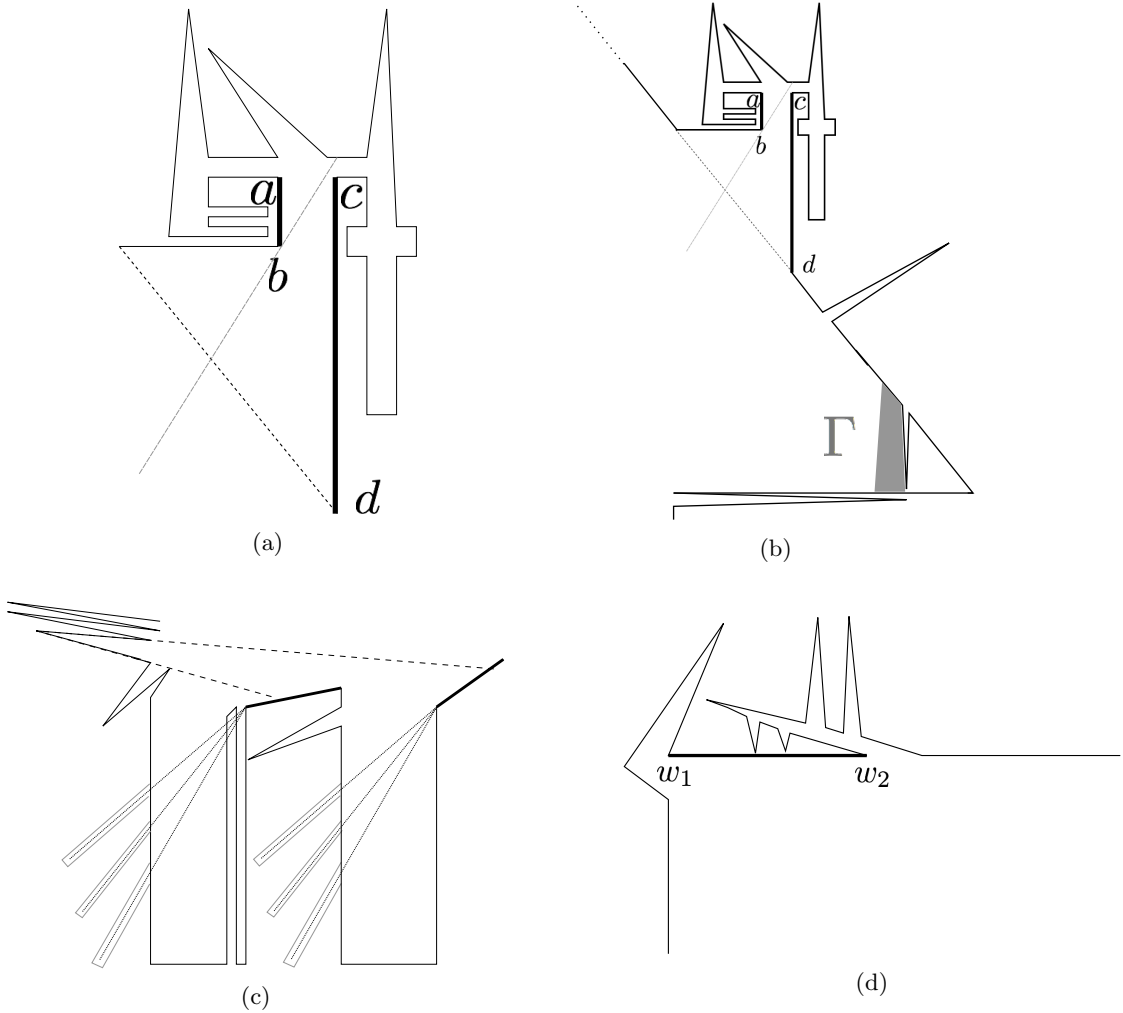

\centering
\hspace{1cm}
     \comic{.24\textwidth}{literal-edge-new}{(a)}\hspace{2cm}
       \comic{.32\textwidth}{clause-new-2}{(b)} \hspace{10cm} 
       \\ \vspace{0.5cm}
      \hfill 
	\comic{.4\textwidth}{variable-edge}{(c)}\hfill
       \comic{.4\textwidth}{W-edge}{(d)} \hfill
         \caption{\label{fig:npedges}\small (a) Literal, (b) Clause, (c) Variable, and (d) Vertex $W$  gadgets for edge 2-transmitters. Lines of sight from the variable pattern spikes are shown in gray.%(Possible spikes in (c) are shown in gray)
%(a) Literal pattern for edge 2-transmitters, (b) Clause pattern for edge 2-transtmitters, (c) Variable pattern for edge 2-transmitters (possible spikes are shown in gray), (d) Vertex $W$ for edge 2-transmitters.
       }
\end{figure*}

Figure~\ref{fig:npedges}(a) shows our literal pattern: only %edge 
2-transmitters $\overline{ab}$ and $\overline{cd}$ %(shown in bold) 
can cover the entire literal pattern; they correspond to truth settings %false and true, respectively.
that do not and do satisfy the clause, respectively.

In the clause gadget, only the $2$-transmitters $\overline{cd}$ from the literal patterns, corresponding to a truth setting that fullfils the clause, can monitor the complete gadget. See Figure~\ref{fig:npedges}(b) for the lower right corner of the triangular clause structure: the spikes around %the black shaded part of the lower right corner of the triangle 
region $\Gamma$ ensure (i) %the 
edges $\overline{ab}$ corresponding to a truth setting that does not satisfy the clause cannot %of false cannot %account for the coverage of %the triangle
cover
$\Gamma$, and %(ii) that %the triangle 
(ii) no edge outside the clause gadget can cover all of
$\Gamma$%cannot be monitored from other clause gadgets
---we add the same construction on the other side of the clause %gadget's 
triangle. 

Figure~\ref{fig:npedges}(c) shows the adaptation of the variable gadget. Depicted is the construction for one variable. 
Only the two bold %edge 
2-transmitters %shown bold 
cover %the triangle to the left (lowest of the 3 consecutive ones)
the lowest of the 3 consecutive triangles to the left and also %plus 
one of the wells and its spikes in the variable pattern.  We add two additional wells, visible from $w_1$, between variable patterns, preventing %edges from one variable pattern to cover another variable pattern.
edges inside a variable pattern from monitoring another.

We replace vertex $W$ from the AGP construction by
%The point $W$ from the AGP%point guard reduction 
%is changed into 
edge $\overline{w_1 w_2}$, see Figure~\ref{fig:npedges}(d). %(shown bold in Fig.~\ref{fig:npedges}(d)). 
Point $w_1$ serves the same purpose as $W$ in the original proof; only %the complete edge 2-transmitter 
$\overline{w_1 w_2}$ covers both the entire attached gadget (necessary to prevent choosing edges adjacent to $W$/$\overline{w_1 w_2}$, which may cover more than desired)
 and all wells in and between variable gadgets.% This is necessary, as we do not want allow the usage of long edges incident to $W$ that may cover parts of the other gadgets.

Altogether, we obtain a one-to-one correspondence between edge guards from the proof of Lee and Lin and edge 2-transmitters in our constructed polygon, proving that the constructed polygon has an edge 2-transmitter cover of size $3m + n + 1$ if and only if instance $F$ of the 3SAT problem is satisfiable.

\end{proof}
%\vspace*{-.25cm}
\vspace{1mm}

\vspace*{-.6cm}
%\input{np}
%!TEX root = main.tex

\section{Point 2-transmitters}\label{sec:point}

We begin with an observation that indicates 2-transmitter coverage problems have additional complexity when compared to 0-transmitter problems.  

%Some proofs are omitted due to space limits.\todo{?}

\begin{figure}[t]
\center
\includegraphics[scale = 0.5]{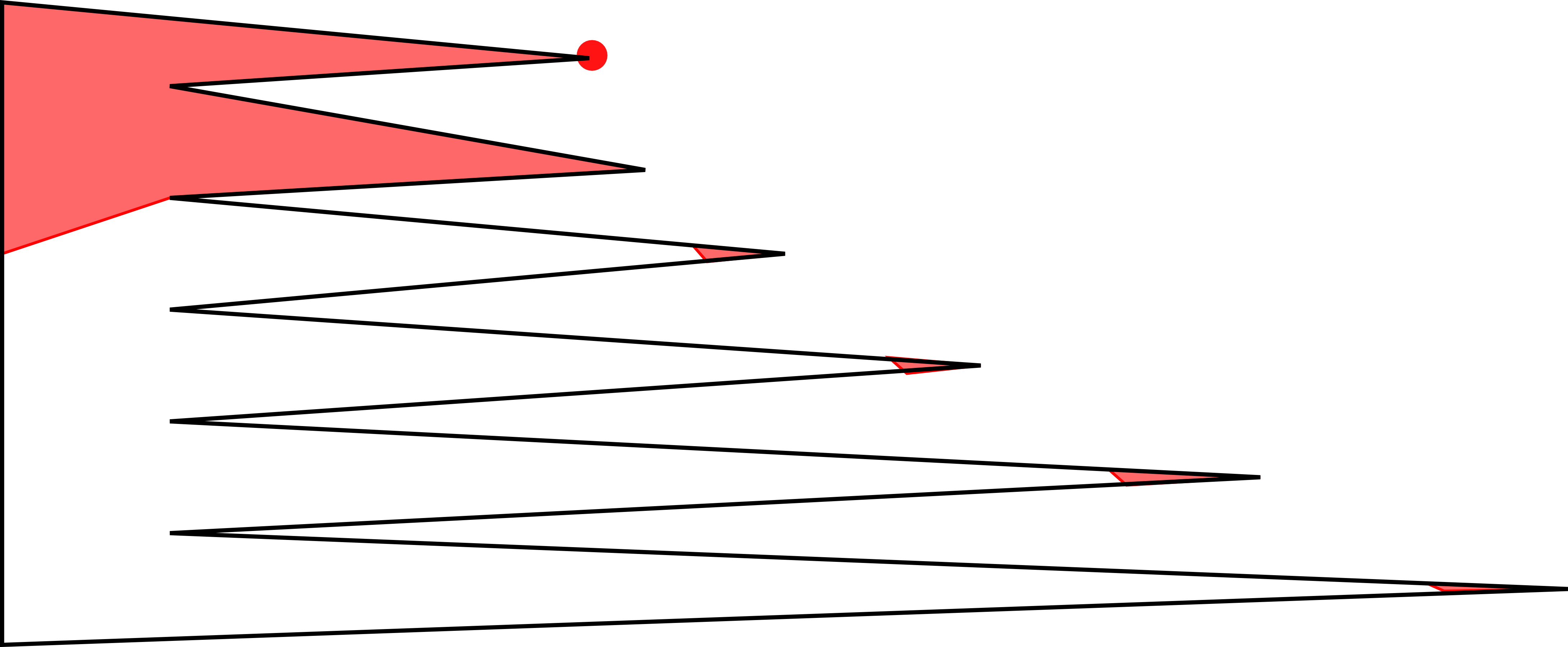}
%\hspace*{.05\textwidth}
    %\comic{.3\textwidth}{images/3rv-2trsvg}{(a)}\hfill
%\comic{.5\textwidth}{images/OnVR}%{(b)}
%\hspace*{.05\textwidth}
         \caption{\small \label{fig:pt}%\small (a) Both the blue and red point are 2-visible only to the violet region. No point in this region sees the green point, thus, two 2-transmitters are necessary.% for a cover.
%Both the blue and the red point are visible only from the violet region. %(overlay of the red region of points that see the red point and the blue region of points that see the blue point). 
%The green point is not visible to any point in the violet region. Thus, two 2-transmitters are necessary to guard this polygon. 
 The 2-visibility region of the red point (shaded light red) has $O(n)$ connected components.
         }
\end{figure}

%%%%%%%%%%%%%%%%%%%%%%%
% inserted to the eurocg version
%%%%%%%%%%%%%%%%%%%%%%%%
%\begin{observation}
%There exists a polygon with three reflex vertices that requires two 2-transmitters, see Figure~\ref{fig:pt}(a).
%\end{observation}

\iffalse
\begin{figure}[tbh]
\center
    \comicII{.3\textwidth}{images/3rv-2trsvg}
         \caption{\label{fig:3rv}\small Both the blue and the red point are visible only from the violet region (overlay of the red region of points that see the red point and the blue region of points that see the blue point). The green point is not visible to any point in the violet region. Thus, two 2-transmitters are necessary to guard this polygon.
         }
\end{figure}
\fi

\begin{observation}
In a simple polygon $P$, the  2-visibility region of a single guard can have $O(n)$ connected components. See Figure~\ref{fig:pt}(b).
\end{observation}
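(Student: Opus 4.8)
The plan is to prove the observation by explicit construction: for each $n$ I exhibit a simple polygon $P$ on $\Theta(n)$ vertices together with a point $p\in P$ whose $2$-visibility region $\mathrm{2VR}(p)$ splits into $\Theta(n)$ connected components. The guiding idea is that $2$-visibility allows a sightline to leave $P$ and re-enter it once, i.e.\ to cross exactly two boundary edges, so I can arrange many small ``pockets'' of $P$ that are each reachable from $p$ only through such a single excursion, while the polygon material lying between consecutive pockets can be reached only by sightlines crossing at least three edges and is therefore excluded from $\mathrm{2VR}(p)$. It is precisely this excluded material, interleaved with the visible pockets, that breaks $\mathrm{2VR}(p)$ into many pieces.

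Concretely, I place $p$ in a wide base chamber and, ``in front'' of it, two nearly parallel thin walls forming a double screen, with a row of tiny pockets $R_1,\dots,R_m$ beyond the screen. The two screens carry small, mutually offset slits, one offset pair per pocket, arranged so that from $p$ there is a single narrow beam of directions passing through one slit of the first screen and the matching slit of the second and landing inside $R_i$; along this beam the segment $\overline{pq}$ meets exactly the two screen-walls and nothing else, so the hit points of $R_i$ are $2$-visible. Between two consecutive beams the screens are solid, so any segment from $p$ aimed there crosses at least three boundary edges. Finally, each pocket $R_i$ is attached to the remainder of $P$ only by a thin neck routed behind the second screen, away from $p$, so that inside $P$ no pocket is joined to the base chamber or to another pocket by a path lying in $\mathrm{2VR}(p)$. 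Counting the vertices of the base chamber, the two slit-screens, and the $m$ pockets gives $n=\Theta(m)$.

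The verification then proceeds in three steps. First, for each $i$ I check that the designated beam exists and that its segments cross exactly the two screen-walls, so $R_i\cap\mathrm{2VR}(p)\neq\varnothing$. Second, I establish the complementary fact that every point in the solid strips separating consecutive beams, and every point of the necks, is reachable from $p$ only across $\ge 3$ edges and hence lies outside $\mathrm{2VR}(p)$; this is what prevents the visible pockets from merging. Third, I conclude that $\mathrm{2VR}(p)$ decomposes into the base chamber (one component) together with the $m$ pairwise disjoint sets $R_i\cap\mathrm{2VR}(p)$, which are separated inside $P$ by non-visible material, yielding $m+1=\Theta(n)$ components. Since a single segment from $p$ crosses the boundary of a simple $n$-gon at most linearly often, linear is also the most one can obtain, so the example is of the stated order.

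The main obstacle is the geometric bookkeeping of the wall crossings: the slit widths, offsets, and pocket positions must be chosen so that the $2$-visible beams are genuinely disjoint and genuinely isolated, i.e.\ that aiming between two slits \emph{forces} a third crossing and that the connecting necks never create a $2$-visible path linking two pockets or a pocket to the base chamber. Turning these ``exactly two crossings here, at least three crossings there'' statements into precise inequalities on the coordinates (rather than reading them off the figure) is the only delicate part; the remaining vertex counting and the decomposition of $\mathrm{2VR}(p)$ are routine.
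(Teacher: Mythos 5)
The paper offers no written argument for this observation---it is supported only by Figure~\ref{fig:pt}---so an explicit construction is welcome, and your overall strategy (interleaving small $2$-visible pockets with polygon material that is not $2$-visible, so that the pockets land in distinct components of $\mathrm{2VR}(p)$) is exactly the right idea and is what the paper's figure depicts. However, your double-screen gadget has a counting error at its core. Under the paper's definition, $q$ is $2$-visible from $p$ when $\overline{pq}\cap P$ has at most two connected components, i.e.\ the segment may make at most \emph{one} excursion into the exterior of $P$. A segment that traverses two disjoint thin exterior screens makes two excursions and meets $P$ in three components, so the points your designated beams hit would \emph{not} be $2$-visible. Your description is also internally inconsistent: a beam that ``passes through one slit of the first screen and the matching slit of the second'' by construction does not ``meet exactly the two screen-walls,'' and if it truly passes through both slits it crosses nothing, so the second screen is doing no work and the $2$-visibility of the pocket must come from crossing the pocket's own boundary instead.

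The isolation step fails as stated for the same reason. It is not true that ``aiming between two slits forces a third crossing'': a segment from $p$ that passes through a slit of the first screen and then crosses only the solid second screen makes a single excursion, so the material it reaches beyond the second screen \emph{is} $2$-visible. Likewise the entire strip between the two screens is $2$-visible (at most one screen is crossed to reach any of its points), and any pocket that opens onto a slit of the second screen is joined, inside $\mathrm{2VR}(p)$, to that strip---collapsing your intended components into one. The repair is to use a \emph{single} thin exterior barrier per pocket: separate each $R_i$ from $p$'s chamber by one thin exterior spike (so the beam into $R_i$ makes exactly one excursion and its hit points are $2$-visible), arrange consecutive spikes to overlap in all other directions (so the material between pockets, and the necks attaching each pocket to the rest of $P$, requires at least two excursions and is excluded from $\mathrm{2VR}(p)$), and route the necks away from $p$. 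With that change your three verification steps go through and yield $\Theta(n)$ components; your closing remark that $O(n)$ is also an upper bound is correct, since a segment crosses the boundary of an $n$-gon at most $n$ times.
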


\iffalse
\begin{figure}[thb]
\center
    \comicII{.3\textwidth}{images/OnVR}
         \caption{\label{fig:OnVR}\small The 2-visibility region of the red point, the light red subsets of the polygon, has $O(n)$ connected components.
         }
\end{figure}
\fi

\iffalse
\begin{observation}
2-transmitters are significantly more powerful than 0-transmitters in some cases, but not in others.\todo{remove? sounds a bit blabla?}
\end{observation}

For example, the spike box of Figure~\ref{fig:np1}(a) can be modified so that arbitrarily many 0-transmitters are required to see the entire polygon, whereas one 2-transmitter placed in the center is sufficient.
\fi
%%%%%%%%%%%%%%%%%%%%%%%
%  end inserted to the eurocg version
%%%%%%%%%%%%%%%%%%%%%%%%

%that we will use for our results on edge 2-transmitters. Some of the proofs are omitted due to space limits.

%We first present some observations about point 2-transmitter covers that we will use in the proofs for edge 2-transmitters. Due to space limitations, we omit some of the proofs.

We now present lemmas on point 2-transmitter covers that enable the edge 2-transmitter results. 

\begin{lemma}\label{lem:5gon}
Every 5-gon %$P$ 
can be covered by a point 2-transmitter placed anywhere %on the boundary or in the interior of $P$.
(boundary or interior).
\end{lemma}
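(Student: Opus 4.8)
The plan is to recast coverage as a statement about single lines. Since a point $p$ is a $2$-transmitter covering all of $P$ exactly when every $q\in P$ is $2$-visible from $p$, and $2$-visibility of $q$ from $p$ means $\overline{pq}\cap P$ has at most two connected components, it suffices to bound this number by two for an arbitrary chord. I would note that $\overline{pq}$ lies on the line $\ell$ through $p$ and $q$, and that $\overline{pq}\cap P$ is obtained by intersecting the components of $\ell\cap P$ with the subsegment $\overline{pq}$; this operation can only truncate or delete components, never create new ones, so the number of components of $\overline{pq}\cap P$ is at most that of $\ell\cap P$. Hence the whole lemma reduces to the clean claim: \emph{every line meets a simple pentagon in at most two connected components.}

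For the main argument I would use a parity-plus-edge-count. A bounded region $P$ has every line exterior at both of its infinite ends, and by the Jordan curve theorem a line crosses the closed curve $\partial P$ transversally an even number of times. For a line $\ell$ in general position---through no vertex of $P$ and collinear with no edge---each of the five edges is a segment met by $\ell$ in at most one point, so $\ell$ has at most five, hence at most four (by parity), transversal crossings with $\partial P$. Reading the intersection $\ell\cap P$ from one end to the other, each maximal subinterval lying in $P$ is entered once and exited once, accounting for exactly two crossings; with at most four crossings there are at most two such subintervals. This gives the bound for all lines in general position, and therefore for all chords $\overline{pq}$ whose supporting line is generic.

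It remains to dispose of degenerate lines---those passing through a vertex, containing an edge, or with an endpoint on $\partial P$---and this is where I expect the real work to be. The guiding facts I would use are: a transversal passage of $\ell$ through a vertex consumes the two edges incident to that vertex for a single transition, so it cannot manufacture extra crossings; a passage through a \emph{reflex} vertex keeps the interior of $P$ on both local sides of $\ell$, so it never splits an inside-interval; and a \emph{convex} vertex can be touched by $\ell$ from the exterior only tangentially, which is impossible strictly between two points of $P$ since both local sides of such a tangent line are exterior. Combining these with the fact that a pentagon has at most two reflex vertices (the interior angles sum to $540^{\circ}$, so at most two can exceed $180^{\circ}$) should confirm that no degenerate line exceeds two components either. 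Equivalently, one may argue by perturbing $\ell$ to a nearby generic line and checking that the component count cannot jump upward across the degenerate configurations just listed.

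The only genuinely delicate step is this last one: certifying that lines through vertices, lines supporting an edge, and chords with boundary endpoints produce no spurious third component. Everything else is driven by the single robust engine---evenness of the crossing count together with the polygon having only five edges---which forces at most four crossings and hence at most two components, exactly the bound the lemma asserts.
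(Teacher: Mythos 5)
Your proposal is correct and takes essentially the same route as the paper: the paper's entire proof is the one-line assertion that any 5-gon is 2-convex (every line meets it in at most two connected components), which is precisely the claim you reduce to and then actually justify via the even-crossing-count and five-edge bound. The degenerate cases you flag do work out as you sketch (any configuration yielding three components would need at least six edge-slots, which a pentagon lacks), so if anything your write-up supplies detail the paper omits.
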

%\iffalse
\begin{proof}
Any 5-gon is 2-convex (i.e., the intersection of any line with a 5-gon $P$ has at most two connected components). %\todo{Is 2-convex a well-known term?}
\end{proof}
%\fi

\begin{lemma}\label{lem:6gon}
Let $P$ be a 6-gon and let $e=\{v,w\}$ an edge of $P$. A point 2-transmitter at  $v$ or at $w$ covers $P$.

%For every 6-gon $P$ and each of its edges, $e=\{v,w\}$, a point 2-transmitter either at $v$ or at $w$ covers $P$.
%Every 6-gon $P$ and each edge $e={v,w}$ of $P$,  $P$ can be covered by one point 2-transmitter placed at $v$ or $w$.
%For every edge $e={v,w}$ of a 6-gon $P$, $P$ can be covered by one point 2-transmitter placed at $v$ or $w$
%Every 6-gon can be covered by one %a single 

%point 2-transmitter placed at one of the vertices on any of its edges.
%Every 6-gon can be covered by a single (point) 2-transmitter placed at one of the vertices on any of its edges.
\end{lemma}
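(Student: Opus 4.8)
The plan is to prove the contrapositive: at most one of $v,w$ can fail, i.e. if the transmitter at $v$ does not cover $P$ then the transmitter at $w$ does. Write the vertices of $P$ in cyclic order as $v_1,\dots,v_6$ with $v=v_1$ and $w=v_2$, so $e=\overline{v_1v_2}$. Recall from the definition that $q$ is $2$-visible from $p$ exactly when $\overline{pq}\cap P$ has at most two connected components, i.e.\ the segment leaves $P$ at most once. Throughout I would assume general position (no witness ray collinear with an edge), the degenerate cases following by a perturbation/limiting argument.

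The first step is a crossing count. A segment emanating from the vertex $v_1$ shares the endpoint $v_1$ with each edge incident to $v_1$, namely $\overline{v_1v_2}$ and $\overline{v_6v_1}$; two segments sharing an endpoint meet only there (or overlap, if collinear), so $\overline{v_1q}$ never crosses the interior of an edge incident to $v_1$. Hence every proper boundary crossing of $\overline{v_1q}$ lies on one of the four remaining edges $\overline{v_2v_3},\overline{v_3v_4},\overline{v_4v_5},\overline{v_5v_6}$, and since a straight segment crosses another at most once there are at most four crossings, so $\overline{v_1q}\cap P$ has at most three components. Therefore $q$ fails to be $2$-visible from $v_1$ only in the extremal case of exactly three components, which forces $\overline{v_1q}$ to cross all four of $\overline{v_2v_3},\overline{v_3v_4},\overline{v_4v_5},\overline{v_5v_6}$, once each. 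By the symmetric argument at $v_2$ (whose incident edges are $\overline{v_1v_2}$ and $\overline{v_2v_3}$), a point $q'$ failing to be $2$-visible from $v_2$ forces $\overline{v_2q'}$ to cross all four of $\overline{v_3v_4},\overline{v_4v_5},\overline{v_5v_6},\overline{v_6v_1}$, once each.

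Now suppose, for contradiction, that $v_1$ fails on some witness $a$ and $v_2$ fails on some witness $b$. By the previous paragraph both rays $\overline{v_1a}$ and $\overline{v_2b}$ cross each of the three \emph{shared} middle edges $\overline{v_3v_4},\overline{v_4v_5},\overline{v_5v_6}$; in other words the three-edge boundary chain $v_3v_4v_5v_6$ is stabbed three times by each of two distinct lines issuing from the two endpoints of the single edge $e$. The heart of the proof---and the step I expect to be the main obstacle---is to show this is impossible in a simple $6$-gon. I would argue it by recording, along each ray, the cyclic order in which the three crossings occur and matching it against the order of the edges along the chain: a three-fold stabbing forces the chain to zig-zag across the ray, which pins down the convex/reflex type of $v_4$ and $v_5$ and the side of $e$ on which $a$ (respectively $b$) must lie. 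Imposing this simultaneously for two rays anchored at opposite ends of $e$ then makes the two pairs of ``exit pockets'' interleave, contradicting the non-self-crossing of the boundary. A clearly labelled figure exhibiting the forced crossing pattern is, I expect, the cleanest way to make the interleaving step transparent.

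As a consistency check that localizes the difficulty, note that if $P$ has at most one reflex vertex it is $2$-convex---two disjoint exit pockets along a line require two reflex vertices---so every point, in particular both $v$ and $w$, sees all of $P$ and there is nothing to prove. Thus the entire content of the lemma lives in the highly non-convex (two- or three-reflex-vertex) hexagons, which is precisely the regime the crossing-count contradiction above is built to handle.
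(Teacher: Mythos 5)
There is a genuine gap, and in fact two. First, your crossing count is wrong at a reflex endpoint. You claim that failure of $2$-visibility from $v_1$ forces $\overline{v_1q}$ to cross all four non-incident edges once each, because three components need four boundary crossings. But if $v_1$ is a \emph{reflex} vertex and the segment leaves $v_1$ into the exterior cone at $v_1$, then $\{v_1\}$ is by itself a connected component of $\overline{v_1q}\cap P$; the pattern is then (point at $v_1$), gap, component, gap, component, which requires only \emph{three} crossings (enter, exit, enter), hence only three of the four non-incident edges. This situation is robust, not degenerate, so your perturbation disclaimer does not remove it; and your own closing observation shows the lemma only has content for hexagons with two or three reflex vertices, where nothing prevents $v_1$ or $v_2$ from being reflex. (A tangential touch at an interior reflex vertex even yields three components with only two crossings, though that case is non-generic.) Consequently, a witness for $v_1$ may cross only $\overline{v_2v_3},\overline{v_3v_4},\overline{v_4v_5}$ and a witness for $v_2$ only $\overline{v_4v_5},\overline{v_5v_6},\overline{v_6v_1}$, so your premise that both witnesses stab all three shared middle edges is unjustified; the case analysis is strictly larger than you set up. Second, even for the configuration you do consider, the contradiction is never actually derived: ``zig-zag,'' ``pins down the convex/reflex type,'' and ``exit pockets interleave'' name an intended picture, and you explicitly defer the main obstacle to a future figure. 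That interleaving step \emph{is} the lemma; without it you have only a reduction, not a proof.

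For comparison, the paper avoids this combinatorial analysis entirely. By the Two Ear Theorem \cite{m-phe-75}, $P$ has two non-overlapping ears; since ears at adjacent vertices always overlap, some ear is cut off by a diagonal emanating from $v$ or from $w$. Removing that triangle $T$ leaves a pentagon $P'$ having this endpoint $\tilde v$ as a vertex, and any pentagon is $2$-convex (Lemma~\ref{lem:5gon}). Because $T$ is convex, contains $\tilde v$, and is glued back along an edge of $P'$ incident to $\tilde v$, for every ray $r$ from $\tilde v$ the set $r\cap T$ is a segment containing $\tilde v$, so re-attaching $T$ cannot increase the number of components of $r\cap P$ beyond that of $r\cap P'$; hence $\tilde v$ covers $P$. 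If you want to salvage your route, you would need to redo the stabbing analysis with the weaker conclusion (at least three crossings, possibly omitting either extreme edge) and then carry out the interleaving argument case by case---possible, perhaps, but far longer than the ear-cutting proof.
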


\begin{proof}
%Let $v$, $w$ be any two adjacent vertices of a $6$-gon $P$. 
%Let $e=\{v,w\}$.
By the Two Ear Theorem \cite{m-phe-75}, there %is always 
exists a diagonal from either $v$ or $w$ that splits off a triangle $T$ from $P$. Removing $T$ leaves a 
%$2$-convex %pentagon
5-gon $P'$ which has $v$ or $w$ as one of its vertices; denote this vertex $\tilde{v}$. The addition of %triangle 
$T$ does not increase the number of connected components %(CC)
 of the polygon on any visibility ray starting at $\tilde{v}$. That is, for any ray $r$ starting at $\tilde{v}$, the number of connected components 
%CCs
 of $r\cap P'$ is the same as the number of connected components 
%CCs
 of $r \cap P$. %Therefore
Thus, all of $P$ is visible from a %single 
$2$-transmitter on $\tilde{v}$.%, which is $v$ or $w$. 
%Since this is true for any pair of adjacent vertices, $P$ is visible from at least $3$ of its vertices.
\end{proof}

%\vspace*{-.4cm}
\begin{lemma}\label{lem:monotone6gon}
Every monotone 6-gon can be covered by a single (point) 2-transmitter placed at one of its two leftmost (or rightmost) vertices.
\end{lemma}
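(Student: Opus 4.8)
The plan is to reduce the statement directly to Lemma~\ref{lem:6gon}, which already guarantees that for \emph{any} edge $\{v,w\}$ of a $6$-gon a point $2$-transmitter at $v$ or at $w$ covers the whole polygon. The single observation that makes this reduction go through is that in an $x$-monotone polygon the two leftmost vertices are always adjacent, i.e.\ they are the endpoints of an edge of $P$. Once this is established, no visibility computation is needed at all.

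First I would set up the monotone structure, writing $P$ as the region bounded below by a lower chain and above by an upper chain, both running left to right with nondecreasing $x$-coordinate, and distinguish two cases according to the leftmost feature of $P$. If $P$ has a vertical leftmost edge, its two endpoints are exactly the two leftmost vertices and they already span an edge $e$. Otherwise $P$ has a unique leftmost vertex $v_1$; let $a$ and $b$ be its neighbors along the upper and the lower chain. Every other vertex of $P$ lies strictly farther along one of the two chains, so its $x$-coordinate is at least that of $a$ or that of $b$. Hence the second-smallest $x$-coordinate among the vertices is attained at $a$ or at $b$, the second-leftmost vertex is a neighbor $v_2$ of $v_1$, and $\{v_1,v_2\}$ is an edge $e$ of $P$.

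In either case the two leftmost vertices are the endpoints of an edge $e$ of the $6$-gon $P$. Applying Lemma~\ref{lem:6gon} to $e$ yields a point $2$-transmitter at one of the two endpoints of $e$ that covers all of $P$; since both endpoints are leftmost vertices of $P$, this is exactly the claim. The rightmost case follows by the reflection $x\mapsto -x$, which preserves $x$-monotonicity and turns leftmost vertices into rightmost ones.

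The only genuine content, and thus the step I would be most careful about, is the adjacency claim: no interior chain vertex can slip in as the second-leftmost vertex. This is precisely where monotonicity enters, through the fact that $x$-coordinates are nondecreasing along each chain away from $v_1$, so the minimum $x$ over the non-$v_1$ vertices is realized at a neighbor of $v_1$. I would also record the (standard) nondegeneracy assumption that at most two vertices share the minimum $x$-coordinate, so that ``the two leftmost vertices'' is well defined and, in the presence of ties, can always be chosen to be an adjacent pair. With adjacency in hand, Lemma~\ref{lem:6gon} supplies the rest, making the proof essentially immediate.
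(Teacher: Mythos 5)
Your proposal is correct and matches the paper's own argument: the paper likewise derives this lemma directly from Lemma~\ref{lem:6gon} by noting that in a monotone polygon the leftmost and second-to-leftmost vertices are adjacent. Your write-up simply fills in the justification of that adjacency claim, which the paper states without proof.
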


%long version
\iffalse

\begin{lemma}\label{lem:monotone6gon}
Every monotone 6-gon can be covered by a single (point) 2-transmitter placed at one of its two leftmost vertices.
\end{lemma}
\fi
%\iffalse
\begin{proof}
The statement follows directly from Lemma~\ref{lem:6gon}, as in a monotone polygon, the leftmost vertex $v_1$ and the second-to-leftmost vertex $v_2$ are adjacent.
\end{proof}
%\fi

The following Splitting Lemma was shown by Aichholzer et al.~\cite{affhhuv-mimp-09}. As it plays a central role in one of our proofs for edge 2-transmitters, we present the lemma and a sketch of its proof:

%\vspace*{-.4cm}
\begin{lemma}
[Splitting Lemma, \cite{affhhuv-mimp-09}] \label{lem:split}
Let $P$ be a monotone polygon with vertices $p_1$, $p_2$,..., $p_n$, ordered from left to right. For every positive integer $m<n$, there exists a vertical line segment $l$ and two monotone polygons $L$ and $R$ such that 
\begin{dense_itemize}
\item $L$ has $m$ vertices and $R$ has $n-m+2$ vertices.
\item Either $l$ is a chord of $L$ and an edge of $R$, or $l$ is an edge of $L$ and a chord of $R$.
\item $p_m$ or $p_{m+1}$ is an endpoint of $l$.
\item Denote as $L'$ the subset of $L$ left of $l$ and denote as $R'$ the subset of $R$ right of $l$; then $P = L' \cup R'$.
\end{dense_itemize}
\end{lemma}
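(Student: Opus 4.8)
The plan is to turn this two-dimensional statement into a one-dimensional bookkeeping argument by exploiting $x$-monotonicity. Since $P$ is monotone, its boundary decomposes into an upper chain and a lower chain, each the graph of a piecewise-linear function of $x$, and the vertices $p_1,\dots,p_n$ are exactly the breakpoints of these two chains listed in increasing $x$-order; moreover every vertical line meets $P$ in a single segment. The segment $l$ I would use is the full vertical chord of $P$ at $x=x(p_m)$ (or, in the symmetric case, at $x=x(p_{m+1})$): it runs from the chosen vertex, which lies on one chain, straight to the foot point $t$ on the opposite chain, so one endpoint of $l$ is $p_m$ (resp.\ $p_{m+1}$), as the third bullet demands. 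This chord automatically splits $P$ into $P^-:=P\cap\{x\le x(p_m)\}$ and $P^+:=P\cap\{x\ge x(p_m)\}$, both again monotone since monotonicity is inherited under intersection with a halfplane $\{x\le c\}$. Hence the conditions $P=L'\cup R'$ and the placement of $l$ hold by construction, and the entire problem reduces to realizing $P^-$ and $P^+$ as the appropriate halves of monotone polygons of the prescribed sizes.

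The combinatorial core is the vertex count, and here the edge/chord dichotomy does the essential work. Cutting through the vertex $p_m$ creates exactly one new boundary point, the foot $t$. If I make $l$ an \emph{edge} of $R$, I take $R=P^+$ verbatim; then $t$ is a genuine vertex of $R$, which together with $p_m,p_{m+1},\dots,p_n$ gives precisely $n-m+2$ vertices, matching the claim. On the other side I want $L$ to have only $m$ vertices, so I must arrange that $t$ is \emph{not} a vertex of $L$. The mechanism is to declare $l$ a chord of $L$ and let $\partial L$ run \emph{straight through} $t$ (and through $p_m$) along the incident chain edges, continuing past them to the right; then $t$ and $p_m$ lie in the relative interiors of edges of $L$ and do not count, and $L$ is closed off on the right by a single apex where those two chain edges, extended rightward, meet. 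Thus $L$ retains only the deep vertices strictly left of the cut together with one apex, for a total of $m$, while $R$ contributes the foot $t$; in place of the two shared vertices of a naive split one gets two genuinely new vertices, so that $|V(L)|+|V(R)| = m+(n-m+2)=n+2$ rather than the $n+3$ of the naive cut.

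The step I expect to be the main obstacle is exactly this completion of $L$: showing that $t$ and $p_m$ can both be absorbed as flat boundary points while the right extension closes into a \emph{simple, monotone} polygon of \emph{exactly} $m$ vertices. This is delicate, because the two edges one extends may diverge to the right (so no finite apex exists on the correct side), or the extension may re-cross the chains and destroy simplicity or monotonicity. The way out is to use the remaining freedom in the construction: one may cut through either $p_m$ or $p_{m+1}$, and may assign the chord role to either side. I would therefore run a short case analysis according to which chains $p_m$ and $p_{m+1}$ lie on and to the slope configuration at the chord, and argue that in every configuration (at least) one of these combinations yields a valid monotone completion with the correct count; checking monotonicity and the ``chord of $L$, edge of $R$'' (or mirror) property in each resulting case is then routine.

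Finally, two standing technicalities would be cleared away at the outset. Vertices sharing an $x$-coordinate, in particular vertical edges of $P$, are handled by fixing a consistent tie-breaking order (or an infinitesimal shear) so that the chord $l$ and the foot $t$ are unambiguously defined. The degenerate small cases, where $m$ or $n-m+2$ drops below $3$, are either vacuous or fall directly under the convexity facts already available for $5$- and $6$-gons in Lemmas~\ref{lem:5gon} and~\ref{lem:6gon}.
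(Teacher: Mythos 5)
Your construction is essentially the paper's: cut vertically at one of the two candidate vertices, take one piece verbatim (so $l$ is an edge of it and the foot $t$ a new vertex), and cap the other piece with the triangle formed by extending the two straddled boundary edges to their common intersection point (so $l$ is a chord and the cut vertex and $t$ are absorbed into edge interiors), with the choice of capped side and of cut vertex dictated by whether the two extensions meet to the left or to the right of the cut. The paper's own argument is likewise only a sketch that defers exactly the ``careful analysis'' you flag as the delicate step, so your proposal matches it in both approach and level of rigor.
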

{\it Proof Sketch.} Consider a vertical line intersecting $P$ between $p_{m-1}$ and $p_m$.  The %polygon 
edges $e$ and $f$ this vertical line crosses, when extended, meet to either the right or left of this line (assuming they are not parallel). If they meet to the left,  then $l$ is the vertical segment in ${\rm int}(P)$ containing $p_{m-1}$, $L$ is the portion of $P$ left of $l$, and $R$ is the portion of $P$ right of $l$ together with the triangle formed by $l$ and the extensions of $e$ and $f$; see Figure~\ref{fig:split} for an example. %when $n =11$ and $m=6$. 
%Alternately, if $e$ and $f$ meet to the right, then $l$ is the vertical segment through $P$ containing $p_{m}$, $R$ is the portion of $P$ right of $l$, and and $L$ is the portion of $P$ left of $L$ together with the triangle formed by $l$ and the extensions of $e$ and $f$.  
If $e$ and $f$ meet to the right, $l$ runs through $p_m$ and the construction rules of $R$ and $L$ are swapped.
Careful analysis shows %in both cases, 
$R$ and $L$ satisfy the %necessary 
stated properties.

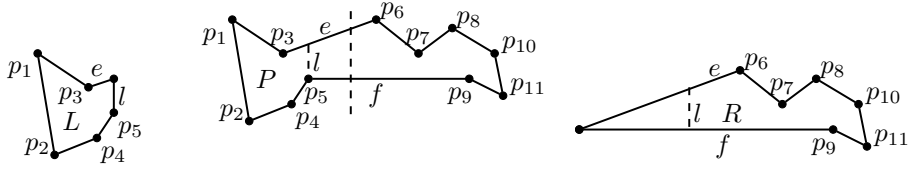
\begin{figure}

\begin{centering}

{\small
\begin{tikzpicture}[scale=0.45]
	\begin{pgfonlayer}{nodelayer}
		\node [style=vertex] (0) at (-3.25, 0.2499996) {};
		\node [style=vertex] (1) at (-2.75, -2.749999) {};
		\node [style=vertex] (2) at (-1.75, -0.7499997) {};
		\node [style=vertex] (3) at (-1.5, -2.25) {};
		\node [style=vertex] (4) at (4.749999, -2) {};
		\node [style=vertex] (5) at (3.75, -1.499999) {};
		\node [style=vertex] (6) at (-1, -1.5) {};
		\node [style=vertex] (7) at (0.9999987, 0.2500002) {};
		\node [style=vertex] (8) at (2.25, -0.7499987) {};
		\node [style=vertex] (9) at (4.499999, -0.7499987) {};
		\node [style=vertex] (10) at (3.249999, 0) {};
		\node [style=none] (11) at (0.25, 0.5) {};
		\node [style=none] (12) at (0.25, -2.75) {};
		\node [style=none] (13) at (-0.5000001, -0) {$e$};
		\node [style=none] (14) at (1, -2) {$f$};
		\node [style=none] (15) at (-0.7499997, -1) {$l$};
		\node [style=none] (16) at (-1, -1.5) {};
		\node [style=vertex] (17) at (-7.25, -3.25) {};
		\node [style=vertex] (18) at (-8.5, -3.75) {};
		\node [style=none] (19) at (-6.75, -2.5) {};
		\node [style=vertex] (20) at (-9, -0.75) {};
		\node [style=vertex] (21) at (-6.75, -2.5) {};
		\node [style=vertex] (22) at (-7.5, -1.75) {};
		\node [style=none] (23) at (-7.25, -1.25) {$e$};
		\node [style=none] (24) at (-6.5, -2) {$l$};
		\node [style=vertex] (25) at (-6.75, -1.5) {};
		\node [style=vertex] (26) at (14.5, -3) {};
		\node [style=vertex] (27) at (15.5, -3.5) {};
		\node [style=vertex] (28) at (14, -1.5) {};
		\node [style=none] (29) at (10.5, -2.5) {$l$};
		\node [style=vertex] (30) at (13, -2.25) {};
		\node [style=vertex] (31) at (15.25, -2.25) {};
		\node [style=vertex] (32) at (11.75, -1.25) {};
		\node [style=vertex] (33) at (7, -3) {};
		\node [style=none] (34) at (11.25, -3.5) {$f$};
		\node [style=none] (35) at (11, -1.25) {$e$};
		\node [style=vertex] (36) at (7, -3) {};
		\node [style=none] (37) at (10.25, -3) {};
		\node [style=none] (38) at (-8, -2.75) {$L$};
		\node [style=none] (39) at (11.5, -2.5) {$R$};
		\node [style=none] (40) at (-2.25, -1.5) {$P$};
		\node [style=none] (41) at (-1, -0.5000001) {};
		\node [style=none] (42) at (10.25, -1.75) {};
		\node [style=none] (43) at (1.5, 0.5) {$p_6$};
		\node [style=none] (44) at (-0.7499997, -2) {$p_5$};
		\node [style=none] (45) at (-6.25, -3) {$p_5$};
		\node [style=none] (46) at (12.25, -1) {$p_6$};
		\node [style=none] (47) at (-3.75, -0.2499996) {$p_1$};
		\node [style=none] (48) at (-3.25, -2.5) {$p_2$};
		\node [style=none] (49) at (-1.75, -0.2499996) {$p_3$};
		\node [style=none] (50) at (-1, -2.749999) {$p_4$};
		\node [style=none] (51) at (-6.75, -3.75) {$p_4$};
		\node [style=none] (52) at (-8, -2) {$p_3$};
		\node [style=none] (53) at (-9.5, -1.25) {$p_1$};
		\node [style=none] (54) at (-9, -3.5) {$p_2$};
		\node [style=none] (55) at (2.25, -0.2499999) {$p_7$};
		\node [style=none] (56) at (3.75, 0.25) {$p_8$};
		\node [style=none] (57) at (3.5, -2) {$p_9$};
		\node [style=none] (58) at (5.249999, -0.4999995) {$p_{10}$};
		\node [style=none] (59) at (5.5, -1.75) {$p_{11}$};
		\node [style=none] (60) at (16, -2) {$p_{10}$};
		\node [style=none] (61) at (16.25, -3.25) {$p_{11}$};
		\node [style=none] (62) at (14.25, -3.5) {$p_9$};
		\node [style=none] (63) at (13, -1.75) {$p_7$};
		\node [style=none] (64) at (14.5, -1.25) {$p_8$};
	\end{pgfonlayer}
	\begin{pgfonlayer}{edgelayer}
		\draw [style=simple] (0) to (1);
		\draw [style=simple] (1) to (3);
		\draw [style=simple] (3) to (6);
		\draw [style=simple] (6) to (5);
		\draw [style=simple] (5) to (4);
		\draw [style=simple] (9) to (10);
		\draw [style=simple] (10) to (8);
		\draw [style=simple] (8) to (7);
		\draw [style=simple] (7) to (2);
		\draw [style=simple] (2) to (0);
		\draw [style=dash] (11.center) to (12.center);
		\draw [style=simple] (9) to (4);
		\draw [style=simple] (20) to (18);
		\draw [style=simple] (18) to (17);
		\draw [style=simple] (17) to (21);
		\draw [style=simple] (25) to (22);
		\draw [style=simple] (22) to (20);
		\draw [style=simple] (19.center) to (25);
		\draw [style=simple] (26) to (27);
		\draw [style=simple] (31) to (28);
		\draw [style=simple] (28) to (30);
		\draw [style=simple] (30) to (32);
		\draw [style=simple] (31) to (27);
		\draw [style=simple] (33) to (26);
		\draw [style=simple] (32) to (36);
		\draw [style=dash] (41.center) to (6);
		\draw [style=dash] (42.center) to (37.center);
	\end{pgfonlayer}
\end{tikzpicture} }

\end{centering}
\caption{\small Application of the Splitting Lemma of \cite{affhhuv-mimp-09} to an 11-gon $P$ for $m = 6$, resulting in a 6-gon $L$ and a 7-gon $R$.}
%An application of the Splitting Lemma of \cite{affhhuv-mimp-09} to a polygon $P$ with 11 vertices for $m = 6$, resulting in a 6-gon $L$ and a 7-gon $R$.}
\label{fig:split}
\end{figure}

\iffalse
\begin{figure}

\begin{centering}

{\footnotesize
\input{images/split-left.tikz} }

\end{centering}
\caption{\small Application of the Splitting Lemma of \cite{affhhuv-mimp-09} to an 11-gon $P$ for $m = 6$, resulting in a 6-gon $L$ and a 7-gon $R$.}
%An application of the Splitting Lemma of \cite{affhhuv-mimp-09} to a polygon $P$ with 11 vertices for $m = 6$, resulting in a 6-gon $L$ and a 7-gon $R$.}
\label{fig:split}
\end{figure}
\fi

\subsection{Necessity for General Polygons}

We recall that for general polygons and standard (0-transmitting)  guards, $\lfloor n/3 \rfloor$ guards are sometimes necessary and  always sufficient to guard a polygon with $n$ vertices \cite{c-ctpg-75}. It follows that for  2-transmitters, $\lfloor n/3 \rfloor$ are also always sufficient to cover a polygon with $n$ vertices by simply considering 2-transmitters as weaker 0-transmitters instead. 
We demonstrate that $\lfloor n/5 \rfloor$ 2-transmitters are sometimes necessary to cover a polygon with $n$ vertices, improving the $\lfloor n/6 \rfloor$ bound of \cite{bbal-cktpo-10}. 

\begin{figure}[b!]

\begin{centering}

\begin{tikzpicture}[scale=0.7]

\begin{pgfonlayer}{nodelayer}

\node [style=vertex] (0) at (-1.25, 0.75) {};

\node [style=none] (1) at (5, 0.5) {};

\node [style=vertex] (2) at (2.75, 1.25) {};

\node [style=vertex] (3) at (-2.25, 1) {};

\node [style=bigvertex] (4) at (1, 1.5) {};

\node [style=none] (5) at (-6.5, -1) {};

\node [style=vertex] (6) at (2, 1.25) {};

\node [style=none] (7) at (0, 0.5) {};

\node [style=none] (8) at (1.25, 0.25) {};

\node [style=none] (9) at (1.5, 1.75) {$a$};

\node [style=none] (10) at (-6, -2) {};

\node [style=none] (11) at (-4.75, -1) {};

\end{pgfonlayer}

\begin{pgfonlayer}{edgelayer}

\fill[lightgray]  (-3.5,0)--(-2.25,1)--(2,1.25)--(1,1.5)--(2.75,1.25)--(-1.25,.75)-- cycle;	

\draw [style=simple] (4) to (2);
		\draw [style=simple] (2) to (0);
		\draw [style=simple] (4) to (6);
		\draw [style=graydash] (4) to (1.center);
		\draw [style=graydash] (4) to (5.center);
		\draw [style=simple] (6) to (3);
		\draw [style=dasharrow] (7.center) to (8.center);
		\draw [style=simple] (0) to (7.center);
		\draw [style=dasharrow] (11.center) to (10.center);
		\draw [style=simple] (3) to (11.center);
	\end{pgfonlayer}
\end{tikzpicture}

\end{centering}

\caption{\small Lower bound construction for point 2-transmitters in general polygons.}
%\caption{A five-vertex gadget, where vertex $a$ is only 2-visible from the shaded region. Arranging $\lfloor n/5\rfloor$ of these gadgets around a circle yields an example $P_n$ of a simple polygon requiring $\lfloor n/5\rfloor$  2-transmitters to cover its interior. }
\label{fig:point-lower-gadget}

\end{figure}
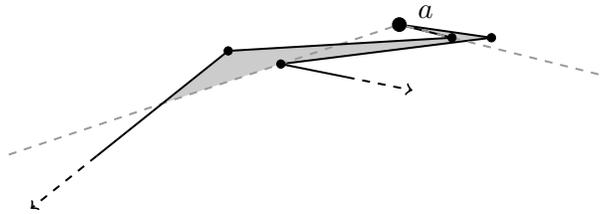

\begin{theorem}\label{thm:general-point-lower}
There exist simple polygons on $n$ vertices that require $\lfloor n/5 \rfloor$ 2-transmitters to cover their interior. 
\end{theorem}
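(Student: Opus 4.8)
The plan is to build a comb-type polygon assembled from $t=\lfloor n/5\rfloor$ congruent copies $G_1,\dots,G_t$ of the prong gadget of Figure~\ref{fig:point-lower-gadget}, arranged side by side along a common base and enclosed so that the boundary is a single simple closed curve. Each copy contributes exactly five new vertices, and only a constant number of extra vertices is needed to close off the base, so choosing the number of prongs to be $\lfloor n/5\rfloor$ accounts for all $n$ vertices (up to the harmless constant, which one absorbs into the base or pads with collinear subdivisions). In each gadget $G_i$ I single out the tip vertex $a_i$ (the large vertex labelled $a$ in the figure) as a \emph{witness point}; the argument will show that merely covering all the witnesses already forces $\lfloor n/5\rfloor$ transmitters.

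The heart of the construction is a local-visibility claim: for each $i$, the set $S_i$ of points of $P$ from which $a_i$ is $2$-visible is contained in $G_i$ together with a bounded neighbourhood of its mouth, and the regions $S_1,\dots,S_t$ are pairwise disjoint. Such confinement is attainable even for $2$-transmitters—which may look through two walls—because of the reflex spikes flanking the tip (the gray sight-lines in Figure~\ref{fig:point-lower-gadget} indicate the only unobstructed viewing directions): any ray reaching $a_i$ from outside $G_i$ must cross the flanking spikes \emph{and} a wall separating the prong from the shared region before it can leave the gadget, producing at least three connected components of intersection with $P$ along $\overline{x a_i}$. I would prove this by fixing a point $x$ outside $S_i$, tracing the segment $\overline{x a_i}$, and counting its connected components with $P$, exhibiting at least three and hence violating $2$-visibility.

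Given the claim, the theorem follows by a pigeonhole argument: a $2$-transmitter cover $C$ must contain, for every $i$, a transmitter that $2$-sees $a_i$ and therefore lies in $S_i$; since the $S_i$ are pairwise disjoint, distinct witnesses need distinct transmitters, whence $|C|\ge t=\lfloor n/5\rfloor$. The main obstacle is establishing the locality and disjointness of the $S_i$ rigorously: I must choose the spike angles steep enough and the inter-prong spacing wide enough that no segment from a point of $G_j$ ($j\ne i$), or from a point of the central region, to $a_i$ meets the boundary at most twice, while simultaneously keeping the gadget coverable—i.e. checking that a single transmitter placed inside $G_i$ does $2$-see the remainder of $G_i$—so that the example is genuinely tight. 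This reduces to fixing the gadget's geometry and verifying the crossing count along the finitely many relevant directions.
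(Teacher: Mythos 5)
Your proposal uses the same gadget and the same witness--pigeonhole skeleton as the paper: take $\lfloor n/5\rfloor$ copies of the five-vertex prong of Figure~\ref{fig:point-lower-gadget}, designate each tip $a_i$ as a witness, argue that the locus of points $2$-seeing $a_i$ is confined to its own gadget, and conclude by disjointness. The one substantive divergence is the global arrangement, and it is where your version falls short of the stated constant. The paper strings the gadgets \emph{around a circle}, with the dashed arrows of the figure indicating that consecutive gadgets share edges; a closed ring of $t$ gadgets therefore has exactly $5t$ vertices, and after subdividing at most $4$ edges one obtains exactly $n$ vertices with $t=\lfloor n/5\rfloor$ witnesses. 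Your comb needs a base to close the boundary, so you get $n=5t+c$ for some constant $c\ge 1$ of closing vertices, hence only $t=\lfloor (n-c)/5\rfloor$ witnesses --- which is $\lfloor n/5\rfloor-1$ for infinitely many $n$. ``Absorbing the constant into the base'' does not repair this: padding by collinear subdivision only increases $n$ without increasing $t$. So as written you prove a $\lfloor (n-c)/5\rfloor$ lower bound, not $\lfloor n/5\rfloor$.

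The circular arrangement also does the geometric work for you that your comb leaves open. The confinement of the $2$-visibility region of $a_i$ is stated in the paper under the hypothesis that \emph{all other gadgets lie below both gray sight lines} emanating from $a_i$; placing the gadgets on a circle and adjusting the gadget's angles makes this automatic, since the two sight lines of each tip diverge away from the rest of the ring. With parallel prongs standing side by side at the same height, the neighbouring prongs sit laterally rather than ``below'' those lines, so the three-component count along $\overline{x\,a_i}$ for $x$ in an adjacent prong genuinely has to be re-verified for your geometry (your own caveat about spike angles and spacing is exactly this point, but it is the crux of the proof, not a routine check). I would switch to the paper's ring layout: it resolves both the vertex count and the disjointness condition simultaneously, after which your pigeonhole argument goes through verbatim.
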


\begin{proof}
%\todo{If someone has a nicer image/explanation, feel free to use it!}
We provide an example of a polygon $P_n$ where any valid covering requires at least $\lfloor n/5 \rfloor$ 2-transmitters. Figure~\ref{fig:point-lower-gadget} depicts a five-vertex gadget. Dashed arrows indicate edges that connect to neighboring gadgets. One can arrange  $\lfloor n/5\rfloor$ of these gadgets sequentially around a circle, subdividing up to 4 edges if necessary so that $P_n$ has exactly $n$ vertices and $n$ edges. Modified angles allow for an arbitrary large number to be placed around a circle.

Vertex $a$ is only 2-visible from the shaded region (if all other gadgets remain entirely below both gray dashed 2-visibility lines). For distinct gadgets these shaded regions are disjoint. Thus at least one point from each gadget must be included in any valid 2-transmitter cover. 
 \end{proof}

\iffalse
\begin{proof}
\todo{If someone has a nicer image/explanation, feel free to use it!}
We provide an example of a polygon $P_n$ where any valid covering requires at least $\lfloor n/5 \rfloor$ 2-transmitters. Figure~\ref{fig:point-lower-gadget} depicts a five-vertex gadget. Dashed arrows indicate edges that connect to neighboring gadgets. One can arrange  $\lfloor n/5\rfloor$ of these gadgets sequentially around a circle, subdividing up to 4 edges if necessary so that $P_n$ has exactly $n$ vertices and $n$ edges. We note that the angles in the gadget of Figure \ref{fig:point-lower-gadget} can be easily modified so that an arbitrarily large number of these gadgets can be placed around a circle.

Recall that $2$-transmitters must be placed in the interior or on the boundary of $P$. Then, vertex $a$ on the tip of this gadget is only 2-visible from the shaded region, provided all other gadgets remain entirely below both dashed gray 2-visibility lines from $a$ that are shown in Figure~\ref{fig:point-lower-gadget}. For distinct gadgets these shaded regions are disjoint. Thus at least one point from each gadget must be included in any valid 2-transmitter cover, implying at least $\lfloor n/5 \rfloor$ 2-transmitters are necessary to cover this polygon. 
 \end{proof}
\fi

%\todo{I added the general polygon lower bound, but nothing on monotone ones, due to the paper of fabila-monroy et al., ok?}

\vspace*{-.2cm}
%!TEX root = main.tex

\section{Edge 2-transmitters}\label{sec:edge}

Analogous to the direction of inquiry taken for traditional (0-transmitting) art gallery guards, in this section we consider the natural generalization of point 2-transmitters to edge 2-transmitters. 
We first consider general polygons, %and then move on to monotone polygons and %finally 
and then look at monotone and 
monotone orthogonal polygons. 
\iffalse
We consider several polygon classes for the natural %generalization 
extension of point 2-transmitters to edge 2-transmitters. 
\fi
%\vspace*{-.2cm}
\subsection{General Polygons}
%{\noindent\bf General Polygons.}

%We recall that 
For general $n$-gons, %and %standard (0-transmitting) 
%(0-transmitting) edge guards, 
%$\lfloor \frac{n}{4} \rfloor$ %(0-transmitting) 
%edge guards are sometimes necessary and 
the upper bound of $\lfloor \frac{3n}{10} \rfloor +1$ edge guards from~\cite{s-rrag-92} %.  This upper bound are always sufficient% to cover an $n$-gon% polygon with $n$ vertices
obviously %also 
holds for more powerful edge 2-transmitters.
%It follows that for edge 2-transmitters, $\lfloor 3n/10 \rfloor +1$ are also always sufficient to cover a polygon with $n$ vertices by simply considering edge 2-transmitters as weaker 0-transmitters instead. 
%The $\lfloor \frac{n}{4} \rfloor$ edge guards example only necessitates $\lfloor \frac{n}{8} \rfloor$ edge 2-transmitters, and we improve on this lower bound:
The polygon requiring $\lfloor \frac{n}{4} \rfloor$ edge guards only necessitates $\lfloor \frac{n}{8} \rfloor$ edge 2-transmitters, and next we improve on this lower bound:
%The given example in \cite{s-rrag-92} requiring $\lfloor n/4 \rfloor$ edge guards can be covered using only $\lfloor n/8 \rfloor$  edge 2-transmitters; we show that in fact $\lfloor n/6 \rfloor$ edge 2-transmitters are sometimes necessary. 

\iffalse
\begin{figure}[t!]

\begin{centering}

\input{images/edge-lower-new.tikz}

\end{centering}

\caption{\small Lower bound construction for general polygons.}
%A six-edge gadget, where vertex $a$ is only 2-visible from edges in the gadget. Arranging $\lfloor n/6\rfloor$ of these gadgets around a circle yields an example $P_n$ of a simple polygon requiring $\lfloor n/6\rfloor$ edge 2-transmitters to cover its interior. }
\label{fig:edge-lower-gadget}

\end{figure}
\fi

%\vspace*{-.2cm}
\begin{theorem} \label{thm:general-edge-lower}
There exist simple $n$-gons that require $\lfloor n/6 \rfloor$ edge 2-transmitters.
%There exist simple polygons on $n$ vertices that require $\lfloor n/6 \rfloor$ edge 2-transmitters to cover their interior. 
\end{theorem}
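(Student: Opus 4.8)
The plan is to mimic the circular gadget construction used for the point lower bound in Theorem~\ref{thm:general-point-lower}, but with gadgets of six vertices each and with a disjointness argument that accounts for the extended reach of edge 2-transmitters. First I would design a single gadget $G$ on six vertices containing a distinguished witness feature---a thin spike with tip $a$---together with two shielding spikes placed so that $a$ is 2-visible only from a small pocket $R(G)$ lying entirely inside $G$. The two shielding spikes play the role of the extra walls: any ray reaching $a$ that originates outside the pocket must cross at least three boundary components, so such a point is not 2-visible to $a$. This localizes the witnesses exactly as the shaded regions localize them in the point construction, while spending the extra sixth vertex precisely on the shielding needed to defeat the larger reach of edges.

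Next I would assemble the polygon $P_n$ by arranging $\lfloor n/6\rfloor$ copies of $G$ consecutively around a circle, subdividing up to five edges so that $P_n$ has exactly $n$ vertices. As in Theorem~\ref{thm:general-point-lower}, the gadget angles can be tuned so that arbitrarily many gadgets fit around the circle and so that, for distinct gadgets $G_i\neq G_j$, the pockets $R(G_i)$ and $R(G_j)$ are disjoint and mutually out of 2-sight. The central region of $P_n$ is arranged to lie below the shielding spikes of every gadget, so that no point of the central region is 2-visible to any tip $a_i$, ruling out a single edge placed centrally that might otherwise reach many tips at once.

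The crucial step is the edge analogue of the disjointness argument. For an edge 2-transmitter $e$ to cover the tip $a_i$ of gadget $G_i$, some point of $e$ must lie in the pocket $R(G_i)$. I would show that the pockets are separated strongly enough---by the shielding spikes and by the width of the central region---that a single straight segment cannot simultaneously contain a point of $R(G_i)$ and a point of $R(G_j)$ for $i\neq j$, and that an edge lying in $R(G_i)$ cannot 2-reach $a_j$. Consequently each gadget forces at least one edge to meet its own pocket, and since the pockets are pairwise disjoint these edges are distinct. Hence every edge 2-transmitter cover of $P_n$ uses at least $\lfloor n/6\rfloor$ edges, which establishes the bound.

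I expect the main obstacle to be exactly this last point: unlike a point guard, an edge is an extended object and a 2-transmitter may see through two walls, so I must rule out both long edges that straddle two pockets and edges that exploit the two permitted crossings to peer from one gadget into a neighbor's witness region. Controlling this is what will pin down the geometry of the shielding spikes and the depth of the pockets; once the spikes are steep enough and the pockets deep enough, the two-crossing budget is already exhausted by the local walls, so no cross-gadget coverage is possible and the per-gadget accounting goes through.
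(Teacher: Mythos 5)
Your proposal follows essentially the same route as the paper: a six-edge spike gadget with a witness vertex $a$ whose 2-visibility region is confined to a small pocket inside its own gadget, $\lfloor n/6\rfloor$ such gadgets arranged around a circle with a few edges subdivided to reach exactly $n$ vertices, and the observation that no edge (in particular none shared with or belonging to a neighboring gadget) can reach two witnesses, forcing one distinct edge per gadget. The concerns you flag at the end---straddling edges and cross-gadget peeking through the two permitted wall crossings---are exactly what the paper's gadget geometry and its remark about the shared endpoints are designed to handle, so your plan is sound and matches the published argument.
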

%\vspace*{-.3cm}

\begin{proof}
We provide %an example of 
a polygon $P_n$ where any valid covering by edge 2-transmitters requires at least $\lfloor n/6 \rfloor$ edges. Figure~\ref{fig:edge-lower-gadget}(a) depicts a six-edge gadget. The dashed arrows indicate the beginnings of edges of neighboring gadgets; one can arrange  $\lfloor n/6\rfloor$ of these gadgets sequentially around a circle, subdividing up to 5 edges if necessary so that $P_n$ has $n$ vertices and $n$ edges. %We 
(Modified angles allow for an arbitrarily large number to be placed around a circle.)
\iffalse
Note that the angles in the gadget %of Figure \ref{fig:edge-lower-gadget} 
can be easily modified so that an arbitrarily large number %of these gadgets 
can be placed around a circle.
\fi
Vertex $a$ %on the tip of this gadget 
is only 2-visible from one of the six edges in the gadget %, provided all other gadgets remain entirely below the dashed 2-visibility lines shown. 
(if all other gadgets remain entirely below the gray dashed 2-visibility lines).
In addition, $a$ is not visible from the endpoints of the gadget edges that are incident to edges from adjacent gadgets.
Thus, at least one edge from each gadget must be included in any valid edge 2-transmitter cover.%, implying at least $\lfloor n/6 \rfloor$ edge 2-transmitters are necessary to cover this polygon. 
\end{proof}

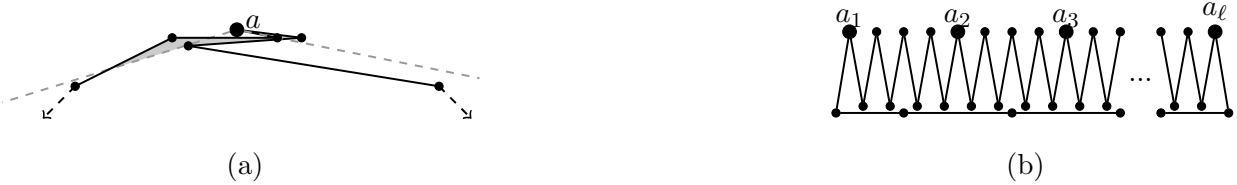
\begin{figure}[t!]

\begin{centering}
\begin{tikzpicture}[scale=0.43]
	\begin{pgfonlayer}{nodelayer}
\node [style=none] (38) at (-1, -2) {(a)};		

\node [style=vertex] (0) at (-2.75, 1.75) {};

\node [style=vertex] (1) at (5, 0.5) {};

\node [style=none] (2) at (6.25, 0.7499998) {};

\node [style=vertex] (3) at (0.75, 2) {};

\node [style=vertex] (4) at (-3.25, 2) {};

\node [style=bigvertex] (5) at (-1.25, 2.25) {};

	\node [style=none] (6) at (-6.25, 0.5) {};

	\node [style=none] (7) at (-8.5, 0) {};

\node [style=vertex] (8) at (-6.25, 0.5) {};

\node [style=vertex] (9) at (0, 2) {};

\node [style=none] (10) at (-7.25, -0.5) {};

	\node [style=none] (11) at (5, 0.5) {};

	\node [style=none] (12) at (6, -0.5) {};

		\node [style=none] (13) at (-0.75, 2.5) {$a$};

\end{pgfonlayer}
	\begin{pgfonlayer}{edgelayer}
	\fill[lightgray]  (-5.23,1.01)--(-3.25,2)--(0,2)--(-1.25,2.25)--(0.75,2)--(-2.80,1.75)-- cycle;

	\draw [style=simple] (5) to (3);
		\draw [style=simple] (3) to (0);
		\draw [style=simple] (0) to (1);
		\draw [style=simple] (5) to (9);
		\draw [style=graydash] (5) to (2.center);
		\draw [style=graydash] (5) to (7.center);
		\draw [style=simple] (9) to (4);
		\draw [style=simple] (4) to (8);
		\draw [style=simple] (8) to (6.center);
		\draw [style=simple] (1) to (11.center);
		\draw [style=dasharrow] (11.center) to (12.center);
		\draw [style=dasharrow] (6.center) to (10.center);
	\end{pgfonlayer}
\end{tikzpicture}\hfill
\begin{tikzpicture}[scale=0.36]
	\begin{pgfonlayer}{nodelayer}
	\node [style=none] (30) at (0, -2) {(b)};
		\node [style=vertex] (0) at (-1.5, 3) {};
		\node [style=vertex] (1) at (-2, 0.25) {};
		\node [style=vertex] (2) at (-3.5, 3) {};
		\node [style=bigvertex] (3) at (-2.5, 3) {};
		\node [style=vertex] (4) at (-3, 0.25) {};
		\node [style=vertex] (5) at (-4, 0.25) {};
		\node [style=vertex] (6) at (-5, 0.25) {};
		\node [style=vertex] (7) at (-6, 0.25) {};
		\node [style=vertex] (8) at (-4.5, 3) {};
		\node [style=vertex] (9) at (-5.5, 3) {};
		\node [style=vertex] (10) at (-1, 0.25) {};
		\node [style=vertex] (11) at (0, 0.25) {};
		\node [style=vertex] (12) at (1, 0.25) {};
		\node [style=vertex] (13) at (-0.5, 3) {};
		\node [style=vertex] (14) at (0.5, 3) {};
		\node [style=bigvertex] (15) at (1.5, 3) {};
		\node [style=bigvertex] (16) at (-6.5, 3) {};
		\node [style=vertex] (17) at (-0.5, -0) {};
		\node [style=vertex] (18) at (-4.5, -0) {};
		\node [style=vertex] (19) at (-7, -0) {};
		\node [style=vertex] (20) at (3.5, -0) {};
		\node [style=none] (21) at (-6.5, 3.5) {$a_1$};
		\node [style=none] (22) at (-2.5, 3.5) {$a_2$};
		\node [style=none] (23) at (1.5, 3.5) {$a_3$};
		\node [style=vertex] (24) at (2, 0.25) {};
		\node [style=vertex] (25) at (3, 0.25) {};
		\node [style=vertex] (26) at (2.5, 3) {};
		\node [style=vertex] (27) at (3.5, 3) {};
		\node [style=vertex] (28) at (5, 3) {};
		\node [style=vertex] (29) at (5.5, 0.25) {};
		\node [style=vertex] (30) at (5, -0) {};
		\node [style=vertex] (31) at (6, 3) {};
		\node [style=vertex] (32) at (6.5, 0.25) {};
		\node [style=vertex] (33) at (7.5, -0) {};
		\node [style=bigvertex] (34) at (7, 3) {};
		\node [style=none] (35) at (7, 3.75) {$a_\ell$};
		\node [style=none] (36) at (4.25, 1.25) {...};
	\end{pgfonlayer}
	\begin{pgfonlayer}{edgelayer}
		\draw [style=simple] (3) to (1);
		\draw [style=simple] (1) to (0);
		\draw [style=simple] (3) to (4);
		\draw [style=simple] (4) to (2);
		\draw [style=simple] (16) to (7);
		\draw [style=simple] (7) to (9);
		\draw [style=simple] (9) to (6);
		\draw [style=simple] (6) to (8);
		\draw [style=simple] (8) to (5);
		\draw [style=simple] (5) to (2);
		\draw [style=simple] (0) to (10);
		\draw [style=simple] (10) to (13);
		\draw [style=simple] (13) to (11);
		\draw [style=simple] (11) to (14);
		\draw [style=simple] (14) to (12);
		\draw [style=simple] (12) to (15);
		\draw [style=simple] (16) to (19);
		\draw [style=simple] (19) to (18);
		\draw [style=simple] (18) to (17);
		\draw [style=simple] (17) to (20);
		\draw [style=simple] (15) to (24);
		\draw [style=simple] (24) to (26);
		\draw [style=simple] (26) to (25);
		\draw [style=simple] (25) to (27);
		\draw [style=simple] (28) to (29);
		\draw [style=simple] (29) to (31);
		\draw [style=simple] (31) to (32);
		\draw [style=simple] (32) to (34);
		\draw [style=simple] (34) to (33);
		\draw [style=simple] (33) to (30);
	\end{pgfonlayer}
\end{tikzpicture}

\end{centering}

\caption{\small  Lower bound construction for edge 2-transmitters for (a) general and (b) monotone polygons.}
%A six-edge gadget, where vertex $a$ is only 2-visible from edges in the gadget. Arranging $\lfloor n/6\rfloor$ of these gadgets around a circle yields an example $P_n$ of a simple polygon requiring $\lfloor n/6\rfloor$ edge 2-transmitters to cover its interior. }
\label{fig:edge-lower-gadget}

\end{figure}

%\vspace*{-.45cm}
\subsection{Monotone Polygons}
%{\noindent\bf Monotone Polygons.}

%We show that $\lceil (n-3)/8 \rceil$ edge 2-transmitters are always sufficient and $\lceil (n-2)/9 \rceil$ edge 2-transmitters are sometimes necessary to cover monotone polygons with~$n$ vertices. 
%\vspace*{-.05cm}
\begin{theorem} \label{thm:m-edge-lower}
There exist monotone $n$-gons that require $\lceil (n-2) /9 \rceil$ edge 2-transmitters.
%There exist monotone polygons on $n$ vertices requiring $\lceil (n-2) /9 \rceil$ edge 2-transmitters to cover their interior. 
\end{theorem}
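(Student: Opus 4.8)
The plan is to adapt the general-polygon construction of Theorem~\ref{thm:general-edge-lower} into a gadget that respects $x$-monotonicity, as indicated by Figure~\ref{fig:edge-lower-gadget}(b). I would build $P_n$ as a left-to-right concatenation of $\ell$ identical gadgets, each contributing a distinguished upward spike with apex $a_i$ on the upper chain, separated by shallow valleys, while the lower chain carries a matching pattern of wells. Each gadget would consist of exactly nine vertices, and two further vertices at the extreme left and right would join the upper and lower chains; this gives $n = 9\ell + 2$ and hence $\ell = \lceil (n-2)/9\rceil$ gadgets (subdividing a few edges as needed to realize arbitrary $n$, which also explains the $-2$ in the bound). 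Because monotonicity forbids folding copies around a circle as in the general case, the slopes of the spike sides and well walls must be fixed explicitly so that the separation between consecutive gadgets is large enough for the visibility argument below.

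The core of the proof is the claim that, for every $i$, the apex $a_i$ is $2$-visible only from points on edges of the $i$-th gadget. To prove it I would take any candidate transmitter point $p$ lying outside gadget $i$ and bound from below the number of connected components of $\overline{p\,a_i}\cap P_n$. By monotonicity $p$ is strictly to the left or to the right of gadget $i$, so the segment $\overline{p\,a_i}$ must pass over at least one intervening valley of the upper chain and at least one well of the lower chain; choosing the geometry so that each such feature forces the segment to exit and re-enter $P_n$, I would show $\overline{p\,a_i}$ meets $P_n$ in at least three connected components, so $a_i$ is not $2$-visible from $p$. Conversely, only the handful of edges adjacent to $a_i$ within its own gadget can $2$-see the apex, and across distinct gadgets these edge sets are pairwise disjoint.

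Granting the claim, the counting is immediate. Any edge $2$-transmitter cover of $P_n$ must contain, for each $i$, at least one edge that $2$-sees $a_i$; since the edges able to do so lie in disjoint per-gadget sets, the cover uses at least one edge per gadget, i.e.\ at least $\ell = \lceil (n-2)/9\rceil$ edges.

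The step I expect to be the main obstacle is the visibility verification under monotonicity. In the general polygon of Theorem~\ref{thm:general-edge-lower} the apex angles may be ``modified arbitrarily,'' but here the spikes cannot be made arbitrarily sharp without destroying $x$-monotonicity, so I must pin down concrete coordinates for the valley and well vertices and check rigorously that the sightline from $a_i$ to any external point genuinely crosses a third component (equivalently, leaves $P_n$ at least twice before reaching $a_i$). The remaining, more routine task is the vertex bookkeeping: confirming that nine vertices per gadget plus two terminal vertices, together with the subdivision of a bounded number of edges, yields exactly the claimed $\lceil (n-2)/9\rceil$.
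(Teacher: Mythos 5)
Your high-level plan is the same as the paper's: a left-to-right periodic monotone polygon with one distinguished apex $a_i$ per nine-vertex period, a witness point per apex, disjointness of the covering edges, and a pigeonhole count (your bookkeeping $n=9\ell+2$ differs from the paper's $|V(P)|=9\ell-6$, but both give $\lceil (n-2)/9\rceil$ after subdividing a few edges). The gap is exactly where you predicted it: the blocking mechanism. As described --- ``shallow valleys'' separating the upper-chain spikes and ``wells'' in the lower chain, ``each such feature forces the segment to exit and re-enter $P_n$'' --- it does not work. A sightline passing over a downward dent (a well) of the lower chain never leaves the polygon at all, and a sightline passing a \emph{shallow} upper-chain valley leaves the polygon only if it stays above the valley's bottom; so you have not produced the third connected component you need. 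The construction in Figure~\ref{fig:edge-lower-gadget}(b) does something sharper: the upper chain is a comb of tall, thin upward teeth whose intervening notches plunge almost all the way down to a nearly flat bottom chain, and between consecutive distinguished apexes $a_i$ and $a_{i+1}$ there are three undistinguished teeth and four such deep notches. Any segment joining points close to two distinct apexes must then cross at least two exterior gaps, hence meets $P$ in at least three components.

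A second, related correction: your claim that ``$a_i$ is 2-visible only from points on edges of the $i$-th gadget'' is too strong and will fail for any reasonable monotone realization --- a point near $a_i$ \emph{is} 2-visible from the adjacent teeth (one notch crossing gives only two components) and from the bottom chain directly beneath it, and bottom-chain edges can span gadget boundaries. What the paper asserts, and what suffices, is the weaker statement that two interior witness points within $\varepsilon$ of distinct apexes $a_i,a_j$ are never 2-visible from the same edge; that is what forces one edge per apex. So to repair your proof you should (i) replace shallow valleys by deep near-separating notches with several of them between consecutive apexes, (ii) take the witnesses to be interior points near the apexes rather than the apexes themselves, and (iii) prove only the pairwise non-co-coverability of witnesses by a single edge, paying particular attention to the long horizontal edges of the bottom chain.
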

%\vspace*{-.4cm}
\begin{proof}
%Consider Fig.~\ref{fig:monotone-edge-lower}. 
Consider polygon $P$ in Figure~\ref{fig:edge-lower-gadget}(b).
%Fig.~\ref{fig:monotone-edge-lower}. 
%Any 
Two points $p_i, p_j\in {\rm int}(P)$ within $\varepsilon$ from $a_i, a_j$
are not  2-visible from the same 
%a single 
edge.
%Any two points in the interior each at distance $\varepsilon$ from some $a_i$ are not 2-visible from the same edge.
%There is no single edge from which two points in the interior each at distance $\varepsilon$ from some vertex $a_i$ are 2-visible. 
%Thus, 
%So, at least one edge is required to cover each $a_i$. 
So, each $a_i$ requires an edge for coverage. %This polygon 
%$P$ has $9\ell - 6$ vertices 
$|V(P)|=9\ell - 6$ and  $\ell$ edge 2-transmitters are necessary. %to cover it. %Thus, 
For any value of $n$, % the example shown 
the construction (possibly with up to 8 edges subdivided) necessitates $\lfloor (n+6)/9 \rfloor = \lceil (n-2)/9 \rceil$ edge 2-transmitters. %in any valid covering. 
\end{proof}

\iffalse
\begin{figure}[b]
\centering
\input{images/monotone-edge-lower.tikz}
\caption{\small Lower bound for monotone polygons.}
%A polygon with $9\ell - 6$ vertices requiring $\ell$ edge 2-transmitters to be covered. The topmost row of vertices are placed on an arc that is slightly convex down. }
\label{fig:monotone-edge-lower}
\end{figure}
\fi
%\vspace*{-.2cm}
%Before we prove that $\lceil (n-3)/8 \rceil $ edge 2-transmitters are always sufficient to cover a monotone $n$-gon, we present a crucial lemma.
Before we can prove the upper bound for monotone $n$-gons, we present a crucial lemma.
%\vspace*{-.2cm}

\begin{lemma}\label{lem:monotone10gon}
Any monotone 10-gon $P$ can be covered by a single edge 2-transmitter $e$, and for every point $p \in P$, there exists $q$ on $e$ such that $p$ is 2-visible from $q$, where $q$ is left of at least two vertices of $P$ and right of at least two vertices of~$P$. 
\end{lemma}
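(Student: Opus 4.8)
\emph{Plan.} The idea is to cut $P$ into two small pieces with the Splitting Lemma (Lemma~\ref{lem:split}), cover each piece from a point of the shared cut using Lemma~\ref{lem:6gon}, and then exhibit a \emph{single} edge of $P$ carrying all the needed witnesses and lying in the required $x$-band. First I would apply Lemma~\ref{lem:split} with $m=6$ to write $P=L'\cup R'$, where $L'$ and $R'$ are monotone hexagons (a pentagon on one side, handled even more easily by Lemma~\ref{lem:5gon}) meeting along a vertical segment $l=\overline{vz}$; here $v$ is an original vertex of $P$ near the middle of the left-to-right order, so the line $x=x_l$ has at least two vertices of $P$ on either side. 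Hence every point of $l$---in particular $v$ and the crossing point $z$---already satisfies the position requirement. I would also record the \emph{lifting principle}: since $l$ is vertical and $P$ agrees with $L'$ strictly left of $l$ and with $R'$ strictly right of $l$, any segment joining two points on the same side of $l$ stays in that region, so its number of connected components with $P$ equals that with the hexagon; thus covering a hexagon from a point of $l$ (or of its own side) covers the corresponding side of $P$.

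For the covering claim I would use the crossing point $z$. In general position $z$ lies in the relative interior of an edge $e=\overline{ac}$ of $P$ that straddles $l$ (with $a$ left and $c$ right), so that $\overline{az}$ is an edge of $L'$ and $\overline{zc}$ is an edge of $R'$. Applying Lemma~\ref{lem:6gon} to $L'$ with the edge $\overline{az}$ and to $R'$ with the edge $\overline{zc}$ shows that $L'$ is covered by $a$ or $z$ and $R'$ by $z$ or $c$. Since $a,z,c$ all lie on the single edge $e$, the lifting principle makes $e$ an edge $2$-transmitter covering all of $P$, which establishes the first assertion.

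It remains to guarantee that the witness for each point can be taken central. If $z$ covers a given side I use $z$ itself, which is at $x_l$. Otherwise Lemma~\ref{lem:6gon} applied to that hexagon \emph{via the vertical edge $l=\overline{vz}$} forces the original vertex $v$ to cover it, and $v$ is again at $x_l$; the complication is that $v$ need not lie on $e$, in which case I would instead take the transmitter edge incident to $v$ and rerun the hexagon argument for the opposite side through that edge. The main obstacle is exactly this position bookkeeping: in the extremal splits the far endpoint of $e$, or the chain-neighbour of $v$ produced by Lemma~\ref{lem:6gon}, can be pushed toward the first or last vertex of $P$, and I must verify that the witness actually used still has two vertices of $P$ beyond it. I expect this to require a short case analysis exploiting the centrality of the split vertex $v$ together with the fact that Lemma~\ref{lem:6gon} always offers the central endpoint of the chosen edge as an alternative witness; this is where the real work lies.
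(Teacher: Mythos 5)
Your decomposition does not match what the Splitting Lemma actually produces, and this breaks the symmetric application of Lemma~\ref{lem:6gon}. With $m=6$ the lemma gives two \emph{hexagons} $L$ and $R$, but only one of them has $l$ as an edge and is a genuine subpolygon of $P$; the other has $l$ as a chord and contains a triangle lying \emph{outside} $P$, formed by extending the two edges crossed near the split to their intersection point. On that side $\overline{zc}$ (resp.\ $\overline{az}$) is not an edge of the hexagon --- the corresponding edge runs from the triangle's apex to $c$ --- and the actual piece of $P$ on that side ($R'$ or $L'$) is a 7-gon, so Lemma~\ref{lem:6gon} does not apply to it at all. If you instead apply Lemma~\ref{lem:6gon} to the hexagon containing the fake triangle, it may hand you the apex as the covering vertex, which is not a point of $P$ and lies on no edge of $P$. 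A count shows the obstruction is unavoidable: cutting a 10-gon by a vertical segment through $p_k$ yields genuine pieces with $k+1$ and $12-k$ vertices, which can never both equal $6$.

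The second gap is the positional requirement, which you correctly flag as ``the real work'' but do not carry out, and your sketched fix (rerunning the hexagon argument through the edge incident to $v$) does not obviously converge: Lemma~\ref{lem:6gon} never lets you choose which endpoint covers the hexagon, so the witness it returns can be $a=v_1$ or $c=v_{10}$, and the iteration can keep landing on extreme vertices. The paper's proof sidesteps both problems with a different decomposition: it takes $e$ to be the edge crossed by the vertical line through $v_5$, cuts off \emph{pentagons} at the vertical lines $l_4$ and $l_7$ through $v_4$ and $v_7$, and uses the fact that a pentagon is 2-convex (Lemma~\ref{lem:5gon}), hence covered from \emph{any} point --- in particular from $e\cap l_4$ and $e\cap l_7$, which by construction lie in the required band; a separate case (via Lemma~\ref{lem:monotone6gon}) handles the situation where $e$ ends at $v_6$ and does not reach $l_7$. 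The freedom to place the witness anywhere in a pentagon is exactly what the 6-gon lemma does not give you, and it is the ingredient your argument is missing.
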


%\iffalse
\begin{figure}[t]
\centering
{
\begin{tikzpicture}[scale=0.6]
	\begin{pgfonlayer}{nodelayer}
		\node [style=vertex] (0) at (-0.75, -1.75) {};
		\node [style=vertex] (1) at (-1, 0.75) {};
		\node [style=vertex] (2) at (1, -0.5) {};
		\node [style=vertex] (3) at (-0.25, -1) {};
		\node [style=vertex] (4) at (2.75, 1.25) {};
		\node [style=vertex] (5) at (5.25, 0.25) {};
		\node [style=vertex] (6) at (3.75, -1) {};
		\node [style=none] (7) at (6, 1.5) {};
		\node [style=vertex] (8) at (4, -0.25) {};
		\node [style=none] (9) at (2.75, 1.5) {};
		\node [style=none] (10) at (2.75, -1.5) {};
		\node [style=none] (11) at (1, 1.5) {};
		\node [style=none] (12) at (1, -1.5) {};
		\node [style=vertex] (13) at (5.75, -0.75) {};
		\node [style=none] (14) at (1.25, -1.75) {$l_4$};
		\node [style=none] (15) at (3, -1.75) {$l_5$};
		\node [style=none] (16) at (-0.25, -0.25) {$P_L$};
		\node [style=none] (17) at (4.25, 0.25) {$P_R$};
		\node [style=none] (18) at (-1, 1.25) {$v_1$};
		\node [style=none] (19) at (-1.25, -1.75) {$v_2$};
		\node [style=none] (20) at (0.5, 0.25) {$v_4$};
		\node [style=none] (21) at (0, -1.5) {$v_3$};
		\node [style=none] (22) at (3.25, 1.5) {$v_5$};
		\node [style=none] (23) at (4, -1.5) {$v_6 $};
		\node [style=none] (24) at (4.25, -0.75) {$v_7$};
		\node [style=none] (25) at (6.25, -1) {$v_9$};
		\node [style=none] (26) at (5, 0.75) {$v_8$};
		\node [style=none] (27) at (2, -0.75) {$e$};
		\node [style=none] (28) at (6, 1.5) {};
		\node [style=none] (29) at (3.5, 0.5) {$f$};
		\node [style=none] (30) at (6.5, 1.25) {$v_{10}$};
		\node [style=vertex] (31) at (6, 1.5) {};
		\node [style=none] (32) at (1.75, 0.75) {$e'$};
		\node [style=none] (33) at (-7.5, -1.5) {};
		\node [style=none] (34) at (-3.75, -1.75) {$v_9$};
		\node [style=none] (35) at (-3.75, 0.75) {};
		\node [style=none] (36) at (-11.5, -1.75) {$v_1$};
		\node [style=vertex] (37) at (-9.75, -0) {};
		\node [style=vertex] (38) at (-10.5, 0.75) {};
		\node [style=none] (39) at (-7.75, -1.75) {$l_5$};
		\node [style=none] (40) at (-9.5, 0.5) {$v_3$};
		\node [style=none] (41) at (-9, -1.5) {};
		\node [style=none] (42) at (-6.25, 1.25) {$v_6$};
		\node [style=none] (43) at (-9, 1.5) {};
		\node [style=none] (44) at (-9.25, -1.75) {$l_4$};
		\node [style=vertex] (45) at (-7.5, 0.25) {};
		\node [style=none] (46) at (-8.5, -1) {$v_4$};
		\node [style=vertex] (47) at (-5.75, 0.25) {};
		\node [style=none] (48) at (-5.75, -1.5) {};
		\node [style=none] (49) at (-11, 1) {$v_2$};
		\node [style=vertex] (50) at (-4.25, -1.75) {};
		\node [style=none] (51) at (-8, 0.5) {$v_5$};
		\node [style=vertex] (52) at (-4.75, -0.5) {};
		\node [style=none] (53) at (-7, -0.25) {$e$};
		\node [style=vertex] (54) at (-3.75, 0.75) {};
		\node [style=none] (55) at (-5.75, 1.5) {};
		\node [style=vertex] (56) at (-11, -1.75) {};
		\node [style=none] (57) at (-3.25, 0.5) {$v_{10}$};
		\node [style=vertex] (58) at (-6.75, 1.25) {};
		\node [style=none] (59) at (-5, -1) {$v_8$};
		\node [style=none] (60) at (-5.25, 0.75) {$v_7$};
		\node [style=none] (61) at (-6, -1.75) {$l_7$};
		\node [style=none] (62) at (-10.25, -0.75) {$P_L$};
		\node [style=vertex] (63) at (-9, -0.5) {};
		\node [style=none] (64) at (-7.5, 1.5) {};
		\node [style=none] (65) at (-3.75, 0.75) {};
		\node [style=none] (66) at (-4.5, -0) {$P_R$};
		\node [style=none] (67) at (2.5, -2.75) {(b)};
		\node [style=none] (68) at (-7.75, -2.75) {(a)};
	\end{pgfonlayer}
	\begin{pgfonlayer}{edgelayer}
		\draw [style=simple] (1) to (0);
		\draw [style=simple] (0) to (3);
		\draw [style=simple] (3) to (6);
		\draw [style=simple] (6) to (8);
		\draw [style=simple] (1) to (2);
		\draw [style=wide] (2) to (4);
		\draw [style=simple] (4) to (5);
		\draw [style=simple] (5) to (7.center);
		\draw [style=dash] (11.center) to (12.center);
		\draw [style=dash] (9.center) to (10.center);
		\draw [style=simple] (8) to (13);
		\draw [style=simple] (13) to (28.center);
		\draw [style=dash] (4) to (6);
		\draw [style=simple] (38) to (56);
		\draw [style=simple] (56) to (63);
		\draw [style=wide] (63) to (52);
		\draw [style=simple] (52) to (50);
		\draw [style=simple] (38) to (37);
		\draw [style=simple] (37) to (45);
		\draw [style=simple] (45) to (58);
		\draw [style=simple] (58) to (47);
		\draw [style=dash] (43.center) to (41.center);
		\draw [style=dash] (64.center) to (33.center);
		\draw [style=dash] (55.center) to (48.center);
		\draw [style=simple] (50) to (35.center);
		\draw [style=simple] (47) to (65.center);
	\end{pgfonlayer}
\end{tikzpicture}}
\vspace{-3mm}
\caption{ \small Examples of a monotone 10-gon $P$; edge 2-transmitters covering $P$ are thickened. In the proof of Lemma~\ref{lem:monotone10gon}, (a) falls under case 1 and  (b) case 2.}
%Example (a) falls under Case 1 and (b) falls under Case 2 in the proof of Lemma \ref{lem:monotone10gon}.}
\label{fig:m-edge-cases}
\end{figure}
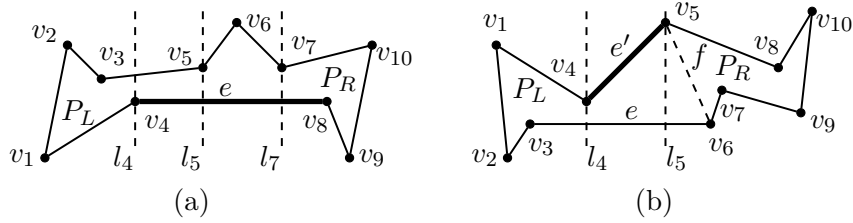
%\fi

%\vspace*{-.3cm}
\begin{proof}
 Label vertices of $P$ %in order 
%from 
left to right: %as %$v_1$, $v_2$,..., $v_{10}$.
$v_1,\ldots,v_{10}$. %We 
Assume no two vertices are on the same vertical line (otherwise perturb one by some sufficiently small %distance 
$\varepsilon$).
%We suppose no two vertices are on the same vertical line; if they are, perturb one by some sufficiently small distance $\varepsilon$. 
 
Draw vertical line $l_5$ though %vertex 
$v_5$. %Let the edge it hits on the other monotone chain be called $e$, %and note $e$'s 
Let $e$ be the edge it hits on the other monotone chain;
its endpoints are $a = v_i$ where $i \leq 4$ and $b = v_j$ where $j \geq 6$. 
First consider the vertical line $l_4$ through $v_4$. This separates from $P$ a 5-gon $P_L \subset P$ with vertices $v_1$, $v_2$, $v_3$, $v_4$, and the other intersection of $l_4$ with $P$'s %the polygon 
boundary. By Lemma~\ref{lem:5gon}, %this 5-gon 
$P_L$ can be covered by a point 2-transmitter placed anywhere in the interior or boundary of 
$P_L$, and in particular %can be placed 
anywhere along the intersection of $l_4$ with $P$.

{\bf Case 1:} See Figure~\ref{fig:m-edge-cases}(a). If $b = v_j$ and $j \geq 7$, %then 
the vertical line $l_7$ through $v_7$ cuts a 5-gon $P_R$ from %the polygon
$P$, with vertices $v_7$, $v_8$, $v_9$, $v_{10}$, and the %second 
other intersection of $l_7$ with $P$'s %the polygon 
boundary.  
By Lemma~\ref{lem:5gon}, $P_R$ can be covered by a 2-transmitter placed anywhere in $P_R$. %in its interior or boundary. 
Then $P_L$ is 2-visible from %$e$'s intersection with $l_4$, 
$e\cap l_4$, $P_R$ is 2-visible from %$e$'s intersection with $l_7$, 
$e\cap l_7$ and the interior of $e$ covers %the region $P\setminus (P_L \cup P_R)$ between $l_4$ and~$l_7$. 
$P\setminus (P_L \cup P_R)$. %Edge $e$ thus covers $P$ and satisfies the necessary condition.

 % as its left endpoint $a \in \{v_1, v_2, v_3, v_4\}$ covers $P_L$, its right endpoint $b \in \{ v_7, v_8, v_9, v_{10}\}$ covers $P_R$, and 

{\bf Case 2:}  Otherwise, $b = v_6$; see Figure~\ref{fig:m-edge-cases}(b). Consider the line segment $f$ connecting $v_5$ and $v_6$. %which lies entirely in the interior of $P$ and crosses from one monotone chain to the other.
We have $f\subseteq P$ and it crosses from one monotone chain to the other.
%The interior of $P$ right of $f$ is a 6-gon, $P_R$. %which we will call $P_R$. 
To its right $f$ separates from $P$ a 6-gon  $P_R$.
By Lemma~\ref{lem:monotone6gon}, $P_R$ is 2-visible from a point 2-transmitter placed at $c$, %where 
$c = v_5$ or $c = v_6$. %The 
%Edge $e'\in P$ 
$P$'s edge $e'$ with right endpoint $c$ (possibly $e' = e$) has left endpoint in the set $\{v_1, v_2, v_3, v_4\}$. %, so as in the previous case above, $e'$ covers the entirety of~$P$. 
%This edge covers $P$: 
Edge $e'$ covers $P$: $P_R$ is 2-visible from $c$, $P_L$ is 2-visible from %the intersection of $e'$ with $l_4$, 
$e'\cap l_4$, %and the remainder of $P$ is visible from the interior of $e'$. %Thus $e'$ satisfies the necessary conditions. 
and the interior of $e'$ covers $P\setminus (P_L \cup P_R)$. 
\end{proof}

\begin{figure}[t!]
\hspace{-2 in}
  \centering
\begin{minipage}{0.3 in}
 \subfloat[]{
 \includegraphics[width = 2.3in]{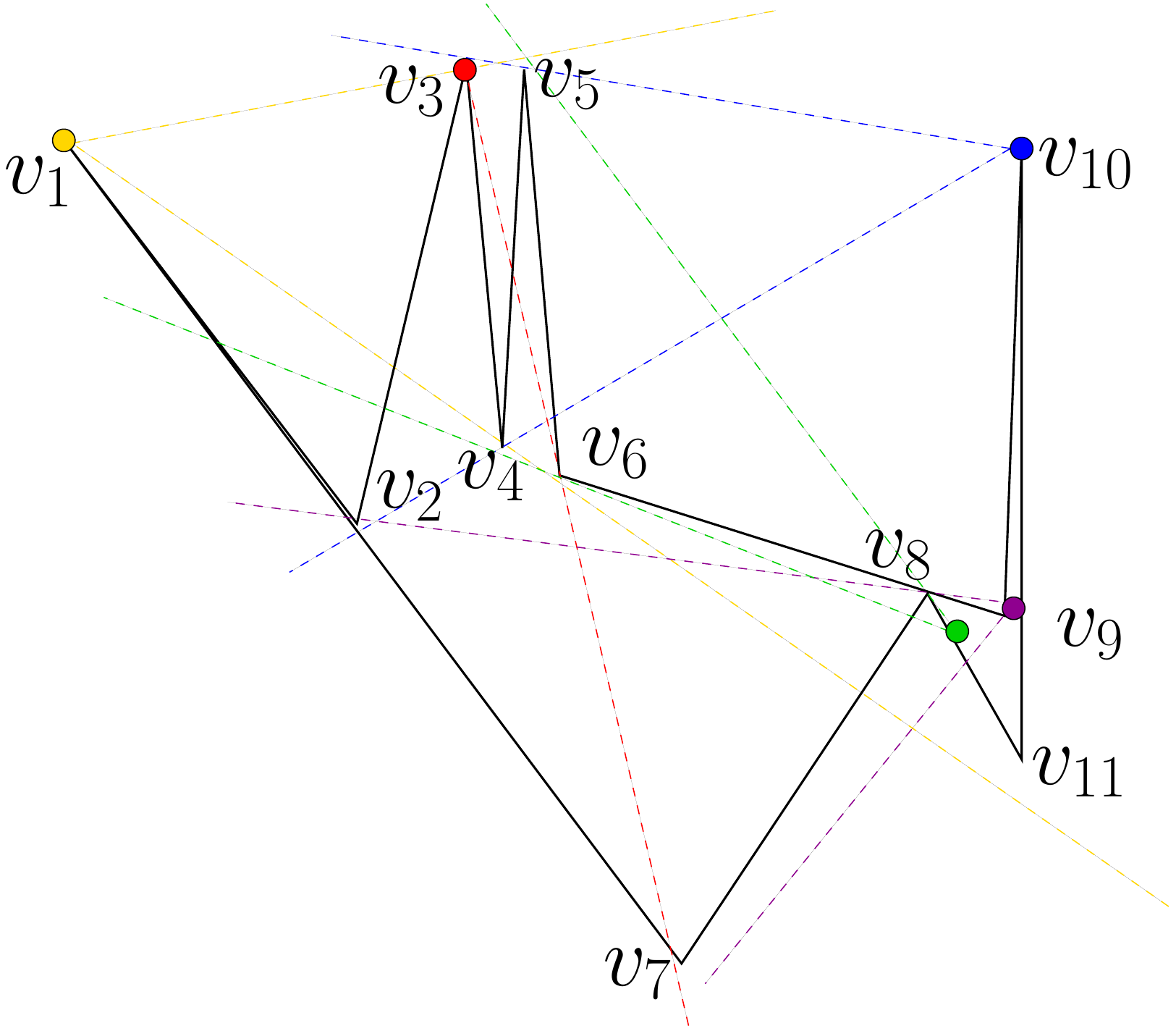}}
 \end{minipage}
  %\hfill
  \hspace{2.5 in}
  \begin{minipage}{0.3 in}
  \subfloat[]{
  {\footnotesize
%\scriptsize
  \begin{tabular}{p{.75cm}|p{0.2cm}p{0.2cm}p{0.2cm}p{0.2cm}p{0.2cm}}
    %\toprule
    Edge & \multicolumn{5}{ c }{Witness points} \\
    \midrule
    $v_1 v_7$ & \mycirc[goldenyellow] &  \mycirc[red] & \mycirc[limegreen] && \mycirc[blue]\\
    $v_1 v_2$ & \mycirc[goldenyellow] &  \mycirc[red] & \mycirc[limegreen] &&\\
     $v_2 v_3$ & \mycirc[goldenyellow] &  \mycirc[red] & \mycirc[limegreen] &\mycirc[pansypurple]&\\
     $v_3 v_4$ & \mycirc[goldenyellow] &  \mycirc[red] & &\mycirc[pansypurple]&\mycirc[blue] \\
     $v_4 v_5$ & \mycirc[goldenyellow] &  \mycirc[red] & &\mycirc[pansypurple]&\mycirc[blue] \\
     $v_5 v_6$ & &  \mycirc[red] & \mycirc[limegreen] &\mycirc[pansypurple]&\mycirc[blue] \\
     $v_6 v_9$ & &  \mycirc[red] & \mycirc[limegreen] &\mycirc[pansypurple]&\mycirc[blue] \\
     $v_7 v_8$ & \mycirc[goldenyellow] &  & \mycirc[limegreen] &\mycirc[pansypurple]&\mycirc[blue] \\
     $v_8 v_{11}$ & & & \mycirc[limegreen] &\mycirc[pansypurple]&\mycirc[blue] \\
     $v_9 v_{10}$ & &  & \mycirc[limegreen] &\mycirc[pansypurple]&\mycirc[blue] \\
     $v_{10} v_{11}$ & & & \mycirc[limegreen] &\mycirc[pansypurple]&\mycirc[blue] \\
    %\bottomrule
  \end{tabular}}
  }
  \end{minipage}
  %\vspace{-4mm}
 \caption{\small  (a) A monotone 11-gon requiring two edge 2-transmitters; boundaries of 2-visibility regions of the five colored witness points are dotted. (b) $\mycirc[goldenyellow] $ denotes the edge sees the gold witness point, etc. No edge 2-transmitter covers all five witness points.}
%(a) Dotted sight lines denote boundaries of 2-VRs of witness points. In (b), $\mycirc[goldenyellow] $ indicates that the edge sees the gold witness point, etc. 
%No edge sees all five witness points.}
%The dotted sight lines denote boundaries of the visibility regions of witness points. In the table, $\mycirc[goldenyellow] $ means that the edge sees the gold witness point, etc. No edge sees all five witness points.}%
\label{fig:m-edge-11gon}
\end{figure}

%\vspace*{-.2cm}
The bound of 10 %from Lemma~\ref{lem:monotone10gon}
is the best we %could 
can hope for: 
the monotone 11-gon from Figure~\ref{fig:m-edge-11gon}(a) necessitates two edge 2-transmitters.

The next lemma follows immediately from the proof of the Splitting Lemma in \cite{affhhuv-mimp-09}, and is crucial to the subsequent sufficiency result. 
%We use the Splitting Le. and Le.~\ref{lem:split-edge} for the following sufficiency result. 
%For the upper bound on edge 2-transmitters %in monotone polygons 
%we use the Splitting Lemma. %in an inductive argument. 
%The omitted proof follows immediately from the proof of the Splitting Lemma in \cite{affhhuv-mimp-09}:
%We now will use the Splitting Lemma in an inductive argument to bound the number of edge 2-transmitters always sufficient to cover a monotone polygon. However, we will need one more result that is implied by the proof of the Splitting Lemma. The proof follows immediately from the proof of the Splitting Lemma in \cite{affhhuv-mimp-09}.
%\vspace*{-.1cm}
\begin{lemma}\label{lem:split-edge}
Let $P$, $L$, $R$, $L'$, $R'$, and $l$ be as in the Splitting Lemma. % (Lemma \ref{lem:split}). 
Then for every edge $e\neq l$ of $L$, the subset of $e$ left of $l$ is a subset of an edge of $P$. For every edge $e \neq l$ of $R$, the subset of $e$ right of $l$ is a subset of an edge of $P$. 
\end{lemma}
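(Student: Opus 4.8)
The plan is to read the claim directly off the construction used in the proof of the Splitting Lemma, checking edge by edge that the only genuinely new pieces of boundary are $l$ itself and the two sides of the attached triangle, and that even those two triangle sides coincide with edges of $P$ once we restrict to the relevant side of $l$. Because the two cases in that proof are symmetric (the roles of $L$ and $R$, and of ``left'' and ``right'', are interchanged), it suffices to treat the case where the extensions of $e$ and $f$ meet to the left; then $l$ is the vertical segment through $p_{m-1}$, $L$ is the part of $P$ left of $l$, and $R$ is the part of $P$ right of $l$ together with the triangle $T$ bounded by $l$ and the extensions of $e$ and $f$ to their apex $t$.

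For $L$, I would first note that $L$ lies entirely (weakly) to the left of $l$, so for each edge $e'\neq l$ of $L$ the ``subset of $e'$ left of $l$'' is all of $e'$. Since $l$ is vertical through the vertex $p_{m-1}$ and meets the opposite monotone chain within a single edge of $P$ (in its relative interior, or at a vertex), the set $\partial L\setminus l$ is exactly the portion of $\partial P$ lying left of $l$: a union of full edges of $P$ together with at most one truncation of the edge that contains the far endpoint of $l$. Each such piece is contained in an edge of $P$, which settles the claim for $L$.

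For $R$, I would first observe that $R'$, the part of $R$ right of $l$, is just the part of $P$ right of $l$, since $T$ lies left of $l$. The edges of $R$ come in two kinds. Those inherited from $\partial P$ (full edges of $P$, or the single truncated edge at $l$) already lie right of $l$ and are subsets of edges of $P$. The remaining two edges bound $T$ and run from $t$ along the line through $e$ and the line through $f$, respectively. The one point that needs care is exactly these two: because $e$ and $f$ are the edges crossed by a vertical line strictly between $p_{m-1}$ and $p_m$, and no vertex has $x$-coordinate in $(x(p_{m-1}),x(p_m))$, the $x$-range of each of $e,f$ contains $[x(p_{m-1}),x(p_m)]$. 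Hence each of the lines through $e$ and $f$ meets $l$ (at $x=x(p_{m-1})$) within the closure of the corresponding edge of $P$, so the portion of each triangle side lying right of $l$ is contained in $e$ (respectively $f$). This exhausts every edge of $R$.

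Finally, I would remark that the case where $e$ and $f$ meet to the right is identical after interchanging $L\leftrightarrow R$ and left$\leftrightarrow$right, since there $l$ runs through $p_m$ and the triangle is attached to $L$ on the right side of $l$. The main (indeed the only) subtlety is verifying that the two triangle sides restrict to subsets of $e$ and $f$; everything else is immediate bookkeeping about the vertical cut along $l$.
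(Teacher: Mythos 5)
Your proposal is correct and takes essentially the same route as the paper, which gives no written argument at all and simply asserts that the lemma follows immediately from the construction in the proof of the Splitting Lemma; you have carried out that verification explicitly, including the one genuinely non-obvious point, namely that the two triangle sides, restricted to the relevant side of $l$, lie inside the edges $e$ and $f$ of $P$ because no vertex has $x$-coordinate strictly between $x(p_{m-1})$ and $x(p_m)$. Nothing further is needed.
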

\iffalse
\begin{proof}
Follows immediately from the proof of the Splitting Lemma in \cite{affhhuv-mimp-09}. See the sketch of the proof of the Splitting Lemma in Section \ref{sec:point}. 
\end{proof}
\fi
%\vspace*{-.4cm}
\begin{theorem}\label{thm:m-edge-upper}
$\left\lceil \frac{n-2}{8}\right\rceil$ edge 2-transmitters are always sufficient to cover a monotone $n$-gon with $n \geq 4$.% vertices. 
\end{theorem}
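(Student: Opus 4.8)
The plan is to prove the statement by strong induction on $n$, peeling off a monotone $10$-gon from the left at each step (Splitting Lemma, Lemma~\ref{lem:split}, with $m=10$) and covering it by a single edge $2$-transmitter via Lemma~\ref{lem:monotone10gon}. To make the induction close I would prove the following strengthened claim: every monotone $n$-gon $P$ with $n\ge 3$ admits an edge $2$-transmitter cover of size $\lceil\frac{n-2}{8}\rceil$ whose transmitters are genuine edges of $P$, with the extra property that every point of $P$ is $2$-visible from a point $q$ on one of these edges that lies strictly to the right of at least two vertices of $P$. The base case $3\le n\le 10$ needs a single transmitter (triangles are trivial), and indeed $\lceil\frac{n-2}{8}\rceil=1$ there; for $7\le n<10$ one subdivides edges to regard $P$ as a monotone $10$-gon (keeping its two leftmost vertices genuine) and applies Lemma~\ref{lem:monotone10gon}, which already furnishes the required right-of-two witnesses.

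For the inductive step with $n>10$, I would apply Lemma~\ref{lem:split} with $m=10$ to obtain a monotone $10$-gon $L$, a monotone $(n-8)$-gon $R$, and a vertical splitting segment $l$ with $P=L'\cup R'$. Cover $L$ by one edge $2$-transmitter using Lemma~\ref{lem:monotone10gon}, and cover $R$ by $\lceil\frac{(n-8)-2}{8}\rceil=\lceil\frac{n-2}{8}\rceil-1$ transmitters from the induction hypothesis; combined this is exactly $\lceil\frac{n-2}{8}\rceil$ transmitters, since the arguments $(n-2)$ and $(n-8)-2$ differ by $8$. Since $n-8\ge 3$ for $n\ge 11$, the recursive call never drops below a triangle.

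The main obstacle, and the reason Lemmas~\ref{lem:monotone10gon} and~\ref{lem:split-edge} are phrased as they are, is transferring coverage from $L$ and $R$ back to $P$: one of the two children is not literally a subpolygon of $P$ but carries a fictitious triangle glued along $l$ (the two cases in the proof sketch of Lemma~\ref{lem:split}). Two facts resolve this. First, if a witness $q$ and the point $p$ it covers lie weakly on the same side of $l$, then $\overline{pq}$ stays in that closed half-plane, where $P$ coincides with the relevant child; hence $\overline{pq}\cap P$ equals $\overline{pq}\cap L$ (resp.\ $\overline{pq}\cap R$), the number of connected components is unchanged, and $2$-visibility transfers. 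To force $q$ onto the correct side I use the witness conditions of Lemma~\ref{lem:monotone10gon}: for the clean child the two extreme vertices toward $l$ lie on $l$, and for the child carrying the triangle the triangle contributes exactly one vertex beyond $l$, so "right of at least two vertices'' (resp.\ "left of at least two vertices'') forces $q$ strictly past $l$ in the needed direction; the same conditions guarantee the covering edge is never $l$ itself. Second, Lemma~\ref{lem:split-edge} ensures that the portion of each child edge on the far side of $l$ lies inside a genuine edge of $P$, so the actual transmitters can be placed on those edges of $P$. Finally, since each witness I use lies to the right of two vertices of its child, and there are at least two genuine vertices of $P$ no further right, the witness lies to the right of at least two vertices of $P$, which re-establishes the strengthened hypothesis and closes the induction. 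The $11$-gon of Figure~\ref{fig:m-edge-11gon} shows the value $10$ in Lemma~\ref{lem:monotone10gon} is optimal, so $m=10$ is the correct peeling size for the bound $\lceil\frac{n-2}{8}\rceil$.
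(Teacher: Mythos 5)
Your proposal is correct and follows essentially the same route as the paper's proof: strong induction peeling off a monotone 10-gon via the Splitting Lemma with $m=10$, covering it by Lemma~\ref{lem:monotone10gon}, and using Lemma~\ref{lem:split-edge} together with the ``right of at least two vertices'' witness invariant to transfer the cover back to genuine edges of $P$. The only additions are your explicit half-plane argument for why 2-visibility transfers across $l$ and the edge-subdivision handling of the small base cases, both of which the paper leaves implicit.
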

%\vspace*{-.4cm}
\begin{proof}
We induct on $n$.
%We induct on the number of vertices $n$ of a monotone polygon $P$. %, with slightly stronger assumptions. % the stronger condition that there exists a covering by $\lceil (n-3)/8 \rceil$ edge 2-transmitters, none of which are placed on the edge connecting $P$'s leftmost two vertices.
%As a base case, 
Base case: one edge 2-transmitter $e$ covers %every 
monotone k-gons, $k = 3,4,...,10$ %, can be covered by one edge 2-transmitter $e$
by Lemmata~\ref{lem:5gon}, \ref{lem:monotone6gon}, and \ref{lem:monotone10gon}. %, every monotone k-gon for $k = 3,4,...,10$ can be covered by one %a single 
%edge 2-transmitter $e$, %and every point in $P$ is 2-visible from a point $q$ on $e$ right of at least two vertices of $P$. 
Each $p \in P$ is 2-visible from some $q\in e$, with $q$ to the right of at least two vertices of $P$.

Suppose $n > 10$, and that for all $n' < n$, every monotone $n'$-gon %polygon 
$P'$ %with $n'$ vertices 
can be covered by a set %collection 
$C$ of $\left\lceil \frac{n'-2}{8} \right\rceil$ edge 2-transmitters, and %additionally 
each $p \in P'$ is 2-visible from some point $q\in e \in C$, with $q$ to the right of at least two vertices of $P'$.
%that every point in $P'$ is 2-visible from some point $q$ on an edge of $C$, where $q$ is right of at least two vertices of $P$. 
Apply the Splitting Lemma (Lemma~\ref{lem:split}) 
for $m = 10$ to obtain a monotone 10-gon $L$ and a monotone $(n-8)$-gon $R$. Let $l$, $L'$, and $R'$ be as in %the statement of 
the lemma. Then by Lemma~\ref{lem:monotone10gon}, 10-gon $L$ can be covered by a single edge 2-transmitter $e$ such that every point $p \in L$ is 2-visible from some point $q$ on $e$ that is to the left of at least two vertices of $L$ and to the right of at least two vertices of $L$. %Note 
%This implies 
It follows that $e \neq l$, and the portion of $e$ consisting of all such points $q$ is entirely to the left of $l$, so by Lemma~\ref{lem:split-edge} is a subset of an edge of $P$. Thus, there exists a single edge 2-transmitter $t$ of $P$ that covers $L'$. %Additionally 
Moreover, all points in $L'$ are 2-visible from a point %$q$ on this edge of $P$, 
$q\in t\in P$, where $q$ is to the right of at least two vertices of $P$, because the same statement holds for edge $e$ of $L$.

By the induction hypothesis, monotone $(n-8)$-gon $R$ can be covered by a collection $C$ of $\left\lceil \frac{(n-8)-2}{8}\right\rceil$ edge 2-transmitters, and every point in $R'$ is 2-visible from some point %$q$ on an edge of $C$, 
$q\in e \in C$, where $q$ is to the right of at least two vertices of $R$. %Note 
%This implies that 
Hence, for each edge $e \in C$, the portion $t$ of $e$ consisting of all such points %$q$ i
s entirely to the right of $l$, so by Lemma~\ref{lem:split-edge} 
$t\subset$ edge $\in P$. %is a subset of an edge of $P$. 
Thus, there exists a collection $C'$ of $\left\lceil \frac{(n-8)-2}{8} \right\rceil$ edge 2-transmitters covering $R'$.

So,  $P$ has an edge 2-transmitter cover %a collection 
$C$ %$|C| = 
of %cardinality 
size $
1 + \left\lceil \frac{(n-8)-2}{8} \right\rceil = \left\lceil \frac{n-2}{8}\right\rceil$ with the assumed property.
\iffalse
, where %, %edge 2-transmitters that cover $P$, 
%such that 
every point in $P$ is 2-visible from some point $q$ on an edge in %$\in$
$C$ that is right of at least two vertices of $P$ %(because 
(as the same holds for $L'$). 
\fi
\end{proof}

 %    Label vertices of $P$ in order from left to right as $v_1,v_2,...,v_n$. We suppose no two vertices are on the same vertical line; if they are, perturb one by some sufficiently small distance $\varepsilon$. Let $l_{10}$ be the vertical line through vertex $v_{10}$; let $v_{10}'$ be the point where $l_{10}$ intersects the other polygonal chain. Note that $l_{10}$ separates a monotone 11-gon $Q$ with vertices $v_1, v_2,..., v_{10}, v_{10}'$ from the rest of $P$. By Lemma \ref{lem:monotone11gon}, $Q$ can be covered by one edge 2-transmitter, not placed on the edge between its two leftmost vertices (which are also the two leftmost vertices of $P$). The region $P \setminus Q$ has $n - 9 + 1 = n-8$ vertices $v_{10}, v_{10}', v_{11},...,v_n$. As $n > 11$, then $P\setminus Q$ has at least 4 vertices, and by the induction hypothesis, can be covered by $\lceil ((n-8)-3)/8 \rceil$ edge 2-transmitters, all of which are placed on actual edges of $P$ because none are placed on leftmost edge $v_{10}v_{10}'$. Altogether, this implies $P$ can be covered by $ 1 + \lceil ((n-8)-3)/8\rceil = \lceil (n-3)/8 \rceil$ edge 2-transmitters, where none are placed on the edge connecting $P$'s two leftmost vertices. 

%\vspace*{-.5cm}
\subsection{Monotone Orthogonal Polygons}
%{\noindent\bf Monotone Orthogonal Polygons.}

We now consider a more restrictive class of simple polygons, monotone orthogonal polygons, and give a tight bound that $\left\lceil \frac{n-2}{10}\right\rceil$ edge 2-transmitters are always sufficient and sometimes necessary to cover monotone polygons with~$n$ vertices. 
%We note the following, see Fig.~\ref{fig:mo-edge-lower}; for omitted proofs, see the Appendix.% for the construction used.
%\vspace*{-.2cm}
\begin{theorem}\label{thm:mo-edge-lower}
%$\lceil (n-2)/10\rceil$ edge 2-transmitters are sometimes necessary to cover a monotone orthogonal polygon with $n$ vertices. 
There exist monotone orthogonal (MO) $n$-gons that require $\left\lceil \frac{n-2}{10}\right\rceil$ edge 2-transmitters.
\end{theorem}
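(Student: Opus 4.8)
The plan is to build an explicit lower-bound construction that forces one edge 2-transmitter per ``unit'' of roughly $10$ vertices, mirroring the structure of the general and monotone lower bounds (Theorems~\ref{thm:general-edge-lower} and \ref{thm:m-edge-lower}) but respecting orthogonality. First I would design a single orthogonal gadget with a distinguished spike or reflex feature, carrying a vertex $a_i$ (or a point $p_i$ just interior to it), such that the set of edges from which $a_i$ is $2$-visible is confined to that gadget. Because we are orthogonal, the gadget will use axis-parallel steps to create a narrow ``well'' or comb whose spike $a_i$ can only be reached by a sight-line crossing at most two walls if the sight-line originates within the gadget itself; any edge belonging to a neighboring gadget would have to cross three or more walls to reach $a_i$, hence cannot $2$-cover it.

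Next I would chain $\ell$ copies of this gadget along the $x$-axis so that the whole polygon remains $x$-monotone: each gadget contributes a fixed number of vertices (I expect $10$ per gadget, to match the denominator), and the gadgets are glued along shared horizontal edges at top and bottom so that the union is a single monotone orthogonal polygon with two endpoints contributing the leftover ``$-2$''. The key counting step is to verify that the gadget genuinely uses exactly $10$ vertices after accounting for the shared connecting edges, so that $|V(P)| = 10\ell + 2$ and the necessity count is $\ell = \lceil (n-2)/10 \rceil$. As in the monotone proof, I would handle arbitrary $n$ by subdividing up to a bounded number of edges, which does not affect $2$-visibility and raises the vertex count to any desired value, yielding the stated ceiling formula.

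The central claim to nail down is the \emph{separation property}: for distinct gadgets $i \neq j$, no single edge of $P$ can simultaneously $2$-cover both $a_i$ and $a_j$. I would argue this exactly as in Theorem~\ref{thm:m-edge-lower}: pick points $p_i, p_j$ within $\varepsilon$ of $a_i, a_j$ deep inside their respective wells, and show that any candidate edge $e$ must, for at least one of $p_i, p_j$, have every sight-line from $e$ cross at least three walls of the orthogonal comb. The monotonicity and the $22.5^\circ$-free, axis-aligned geometry make these wall counts clean: a sight-line entering a foreign gadget to reach its spike is forced through the gadget's outer wall, an intermediate step wall, and the spike's own wall. Once separation holds, each $a_i$ demands its own edge, giving at least $\ell$ edge $2$-transmitters.

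The main obstacle I anticipate is \emph{calibrating the orthogonal spike geometry} so the visibility count is exactly right: in an orthogonal comb it is easy to accidentally create a sight-line that only crosses two walls (letting a neighbor's edge cheat), or to block the legitimate within-gadget edge as well. I would therefore tune the depths and horizontal offsets of the well so that the ``staircase'' of reflex vertices places $a_i$ at the bottom of a two-wall-deep pocket visible from exactly the intended local edge, while any external approach necessarily traverses a third wall. A secondary, but routine, obstacle is confirming that the glued chain stays monotone and orthogonal at every junction and that the endpoint bookkeeping produces precisely $n-2$ rather than $n$ in the numerator; this I would settle by drawing the shared-edge interfaces explicitly and counting vertices gadget by gadget.
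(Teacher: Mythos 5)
Your plan is a witness--separation argument: one isolated witness point per $10$-vertex gadget, with the claim that no single edge of $P$ can $2$-cover witnesses from two distinct gadgets. The paper does something genuinely different and substantially easier: it takes a single staircase-shaped monotone orthogonal polygon and argues by \emph{coverage capacity} rather than by separated witnesses --- a single edge $2$-transmitter can see the \emph{entirety} of at most five of the $k$ staircase segments, the polygon has $n=2k+2$ vertices, and hence $\lceil k/5\rceil=\lceil (n-2)/10\rceil$ edges are needed. No witness point ever has to be isolated from all foreign edges; it is enough that each edge's full-coverage footprint is bounded, and overlapping footprints are harmless to the count.

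The gap in your proposal is precisely the step you defer to ``calibration'': the existence of a $10$-vertex monotone orthogonal gadget whose distinguished point $a_i$ is $2$-visible from no edge outside the gadget. You say you would argue this ``exactly as in Theorem~\ref{thm:m-edge-lower},'' but that gadget relies on slanted triangular spikes, which are unavailable under orthogonality. The orthogonal substitute for a spike is a rectangular notch, and a single notch does \emph{not} isolate its apex: a segment from a distant point to the apex of one notch exits and re-enters the polygon at most once, so the apex remains $2$-visible from essentially the whole polygon. Forcing three connected components requires the sight line to traverse at least two intervening notches, so the isolation of $a_i$ must be supplied by the teeth of \emph{neighboring} gadgets --- and then edges lying on those nearby teeth are exactly the candidates that threaten to see both $a_i$ and $a_{i\pm 1}$, which is what your separation property must exclude. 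Budgeting roughly $4$ vertices per blocking tooth against a total of $10$ vertices per witness leaves no slack, and it is not at all clear the numbers close; this is presumably why the paper's monotone (non-orthogonal) witness construction only achieves denominator $9$, while the orthogonal bound with denominator $10$ is obtained by switching to the capacity-counting argument. As written, the central geometric claim of your proof is unsubstantiated and, in its most natural orthogonal realization, likely false; you would either need to exhibit and verify the gadget explicitly or adopt the staircase/capacity argument.
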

%\vspace*{-.2cm}
%\iffalse
\begin{proof}
Consider the staircase-shaped polygon of Figure \ref{fig:mo-edge-lower}. %As shown, 
A single edge 2-transmitter can see the entirety of at most five segments, horizontal or vertical, of the staircase.  The total number of edges is twice the number $k$ of segments plus 2, and $\left\lceil\frac{k}{5} \right\rceil =  \left\lceil \frac{n-2}{10}\right\rceil$ edge 2-transmitters are sometimes necessary.  
\end{proof}
%\fi

\begin{figure}[t!]

\begin{centering}
\vspace{-25mm}
\includegraphics[scale = 0.5, angle = 45]{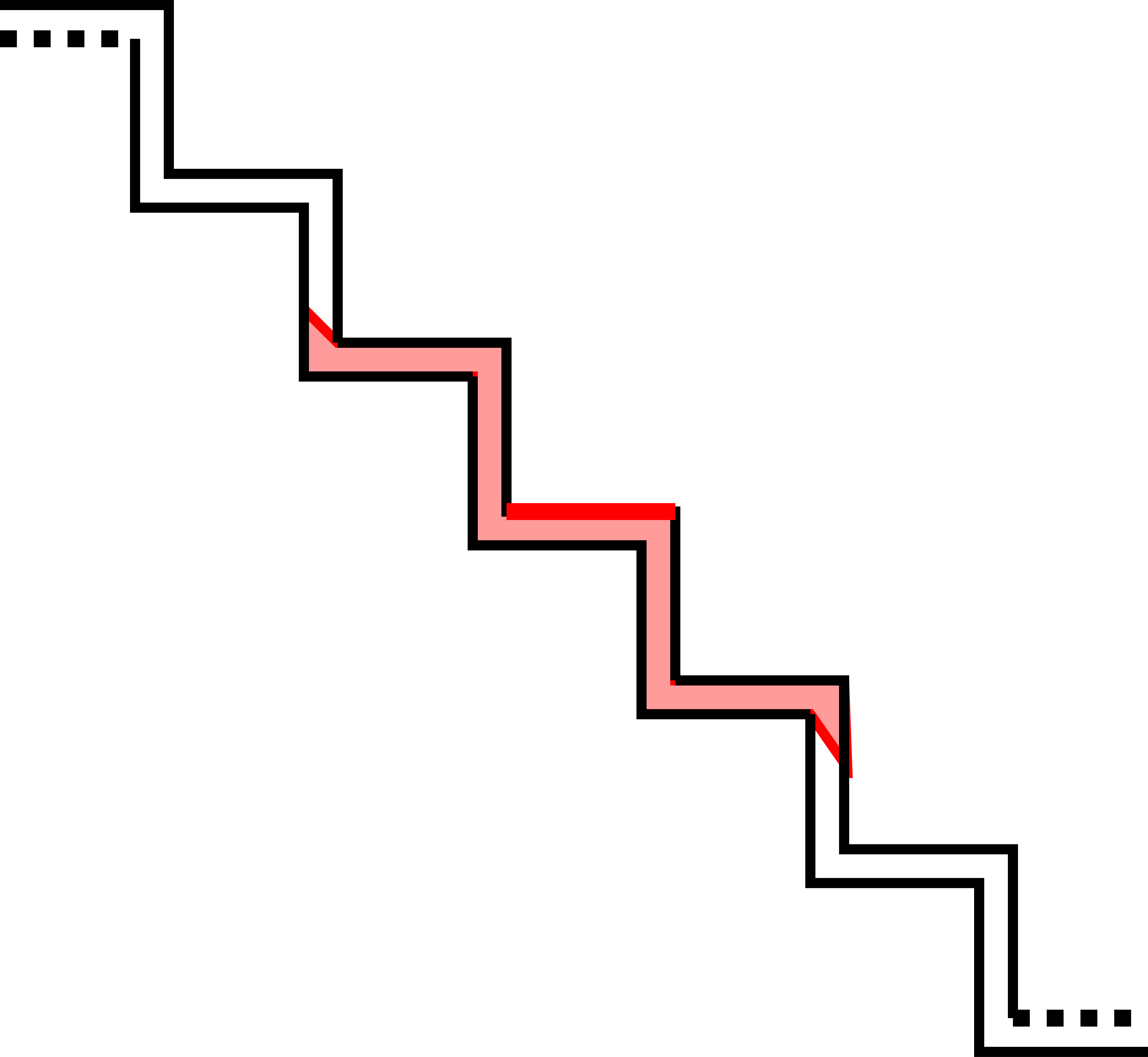}
\vspace{-25mm}

\end{centering}

\caption{\small Lower bound construction for monotone orthogonal polygons, necessitating $\left\lceil \frac{n-2}{10}\right\rceil$ edge 2-transmitters. The 2-visibility region of the bold edge is shaded.}
%A construction showing  $\lceil (n-2)/10\rceil$ edge 2-transmitter are sometimes necessary to cover a monotone orthogonal polygon. The 2-visibility region of the bold edge is shaded.}
\label{fig:mo-edge-lower}
\end{figure}

\noindent We now proceed to show the upper bound. 
%We now proceed to show that $\lceil (n-2) / 10 \rceil$ edge 2-transmitter are always sufficient to cover a monotone polygon with $n$ vertices. 
%\vspace*{-.2cm}
\begin{lemma}\label{lem:mo6gon}
Any monotone orthogonal 6-gon is covered by a (point) 2-transmitter placed anywhere. 
\end{lemma}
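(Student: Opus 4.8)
The plan is to reduce this to 2-convexity, exactly as in the proof of Lemma~\ref{lem:5gon}: if every line meets $P$ in at most two connected components, then for any $p,q\in P$ the segment $\overline{pq}$ lies on a line $\ell$, and $\overline{pq}\cap P\subseteq \ell\cap P$ has at most two components, so $q$ is 2-visible from $p$; since $p$ and $q$ are arbitrary, a single 2-transmitter placed anywhere covers $P$. Thus the whole task reduces to proving that a monotone orthogonal 6-gon is 2-convex.

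First I would pin down the shape of $P$. Because $P$ is orthogonal, its edges alternate horizontal and vertical, so $P$ has exactly three horizontal and three vertical edges; a standard angle count (with $a$ convex $90^\circ$ vertices and $b$ reflex $270^\circ$ vertices satisfying $a+b=6$ and $90a+270b=720$) forces $b=1$, so $P$ has a unique reflex vertex and is an L-shape. Monotonicity makes this concrete: the leftmost and rightmost edges are vertical, the third vertical edge is a single ``step'' on one of the two monotone chains, and the opposite chain is a single horizontal edge. Cutting $P$ along the axis-parallel chord issuing from the unique reflex vertex then splits $P$ into two axis-aligned rectangles $R_1$ and $R_2$ with $P=R_1\cup R_2$.

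With this decomposition in hand, 2-convexity is immediate: any line $\ell$ meets the convex set $R_i$ in a (possibly empty) connected segment, so $\ell\cap P=(\ell\cap R_1)\cup(\ell\cap R_2)$ is a union of at most two collinear segments, hence has at most two connected components. Combined with the reduction in the first paragraph, this proves the lemma. I expect the only mild obstacle to be the bookkeeping behind the structural claim --- verifying that monotonicity together with exactly one reflex vertex really yields a clean two-rectangle decomposition --- but this is routine once the edge alternation and the angle count are established, and no genuinely hard step remains.
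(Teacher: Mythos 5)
Your proposal is correct and follows essentially the same route as the paper: the key observation in both is that an orthogonal 6-gon has exactly one reflex vertex, which forces any line (hence any segment $\overline{pq}$) to meet $P$ in at most two connected components, i.e., $P$ is 2-convex. Your angle count and explicit two-rectangle decomposition are just a more detailed justification of the one-reflex-vertex step that the paper states without elaboration.
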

%\vspace*{-.3cm}
%\iffalse
\begin{proof}
Any orthogonal 6-gon $P$ has at most one reflex vertex.  Thus, for any point $p$, the line $\overline{pq}$ for any other point $q\in P$ crosses at most two edges of $P$.
\end{proof}
%\fi

\begin{lemma}\label{lem:mo12gon}
Any monotone orthogonal 12-gon $P$ can be covered by one edge 2-transmitter, not placed on its leftmost or rightmost edge.
\end{lemma}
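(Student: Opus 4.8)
The plan is to mirror the proof of Lemma~\ref{lem:monotone10gon}, but using the orthogonal building block Lemma~\ref{lem:mo6gon} (any monotone orthogonal $6$-gon is covered by a point $2$-transmitter placed anywhere) in place of Lemma~\ref{lem:5gon}. A monotone orthogonal $12$-gon has exactly six vertical edges; after a perturbation I would assume their $x$-coordinates are distinct, say $X_1 < X_2 < \cdots < X_6$, where the edge at $X_1$ is the leftmost edge of $P$ and the edge at $X_6$ is the rightmost. (Degenerate configurations in which two vertical edges share an $x$-coordinate are handled by a limiting argument on a small horizontal shift of one such edge, which preserves orthogonality and monotonicity.) The two middle vertical edges sit at $X_3$ and $X_4$, and since no vertical edge lies strictly between them, the part of $P$ with $X_3 \le x \le X_4$ is an axis-aligned rectangle $R$.

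Next I would take the edge $2$-transmitter $e$ to be the horizontal edge of $P$ containing the bottom side of $R$; being a horizontal edge interior to the $x$-range of $P$, it is neither the leftmost nor the rightmost edge, as required. Cutting $P$ along the vertical line $x = X_3$ yields $P_L = \{(x,y)\in P : x \le X_3\}$, whose vertical edges are the left edge at $X_1$, the interior edge at $X_2$, and the right edge produced by the cut at $X_3$; hence $P_L$ is a monotone orthogonal $6$-gon, and symmetrically $P_R = \{(x,y)\in P : x \ge X_4\}$ is a monotone orthogonal $6$-gon. This vertex count holds regardless of whether the interior vertical edges step on the upper or the lower chain, so no case analysis on their placement is needed. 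Writing $b$ for the height of the bottom of $R$, the endpoints $q_L = (X_3,b)$ and $q_R = (X_4,b)$ of that bottom side both lie on $e$, and inspecting the cross-section of $P$ at $x=X_3$ (resp.\ $x=X_4$) shows $q_L \in \overline{P_L}$ (resp.\ $q_R \in \overline{P_R}$).

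To finish I would verify three coverage claims. The rectangle $R$ is covered because every $p=(x_p,y_p)\in R$ is seen vertically from the point $(x_p,b)\in e$. The piece $P_L$ is covered from the single point $q_L$: by Lemma~\ref{lem:mo6gon}, $q_L$ is $2$-visible to all of $P_L$, and I would transfer this to $2$-visibility inside $P$. Symmetrically $P_R$ is covered from $q_R$. Hence $e$ covers $P = P_L \cup R \cup P_R$.

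The step requiring the most care, and the main obstacle, is the transfer of $2$-visibility from the subpolygon $P_L$ to all of $P$. For $p\in P_L$ the segment $\overline{p\,q_L}$ may leave $P_L$ by crossing the cut chord at $x=X_3$, but that chord is interior to $P$, so crossing it does not cross $\partial P$. Since both endpoints $p,q_L$ lie in $P$, the number of connected components of $\overline{p\,q_L}\cap P$ is one plus the number of maximal subintervals lying outside $P$, and every such ``gap'' outside $P$ is also a gap outside $P_L$; thus the component count for $P$ is at most that for $P_L$, which is at most two by Lemma~\ref{lem:mo6gon}. The same bookkeeping handles $P_R$, and it also records that the witnessing points $q_L,q_R$ lie strictly between the extreme vertical edges (to the right of those at $X_1,X_2$ and to the left of those at $X_5,X_6$), which is the form in which this lemma feeds the subsequent inductive sufficiency argument.
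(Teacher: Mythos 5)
Your setup is fine --- the middle strip $X_3\le x\le X_4$ is indeed a rectangle $R$, the two side pieces are monotone orthogonal hexagons, and any horizontal edge is automatically neither the leftmost nor the rightmost edge. The proof breaks at the choice of witness points, and the sentence ``no case analysis on their placement is needed'' is exactly where: the placement of the interior vertical edges is what the paper's whole proof is a case analysis on. If the vertical edge at $X_3$ lies on the lower chain and steps \emph{down} from left to right, then the hexagon $P_L$ has its bottom-right corner at the \emph{top} of that vertical edge, while your witness $q_L=(X_3,b)$ sits at its bottom; $P\cap\{x\le X_3\}$ is the hexagon plus a dangling segment, and $q_L\notin P_L$. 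For $p\in P_L$ with $x_p<X_3$, the segment $\overline{p\,q_L}$ must pass through the region $\{x<X_3,\ y<\text{old floor height}\}$, which is outside $P$, just before reaching $q_L$, so $\{q_L\}$ is an isolated component of $\overline{p\,q_L}\cap P$; if the segment also skirts the one reflex notch of $P_L$, that is three components. This is not a repairable bookkeeping slip --- the edge you pick can genuinely fail to cover $P$. Take the $12$-gon with leftmost edge $(0,5)$--$(0,20)$, upper-chain step $(2,20)$--$(2,12)$, lower-chain steps $(3,5)$--$(3,0)$, $(3.1,0)$--$(3.1,2)$, $(10,2)$--$(10,4)$, and rightmost edge $(20,4)$--$(20,12)$. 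Here $R=[3,3.1]\times[0,12]$ and your edge is $e=[(3,0),(3.1,0)]$, but for $p=(1.9,19)$ and \emph{every} $q=(x_0,0)$ with $x_0\in[3,3.1]$, the set $\overline{pq}\cap P$ has three components: a piece with $x\le 2$, a piece inside the band $5\le y\le 12$ at some $x\in(2,3)$, and a piece with $x\ge 3$. So $e$ covers nothing near the upper-left corner. Separately, your transfer principle (``the component count for $P$ is at most that for $P_L$'') is false in general for $P_L\subseteq P$: a single gap of $\overline{pq}\setminus P_L$ can split into several gaps of $\overline{pq}\setminus P$, and an isolated boundary point (as above) adds a component.

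The paper avoids this by not fixing the transmitter to be the bottom (or top) of $R$. It cuts along the supporting lines of the third and fourth vertical edges, and takes the horizontal edge $e$ that crosses the first cut line in the chain \emph{opposite} the cut vertex $v_5$; by construction that edge meets the cut line at an actual boundary point of the hexagon $P_L$, so the witness really lies in $P_L$ and the half-plane argument goes through. It then distinguishes whether $e$ (or a second edge emanating from $v_6$) reaches past the second cut line; if not, it falls back on the diagonal $v_5v_7$ and shows the edge $v_7v_8$ covers everything. In the counterexample above the correct covering edge is the long top edge $(2,12)$--$(20,12)$, not the bottom of $R$; some case distinction on which chain the middle vertical edges lie on, and which way they step, is unavoidable.
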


%\vspace*{-.4cm}
\begin{proof}
%We 
Assume no two vertical edges in $P$ have the same %horizontal 
$x$-coordinate. Order the vertices of $P$ %in order 
left to right, %by their horizontal coordinate.  
%two 
and order vertices with the same $x$-value (endpoints of a vertical edge $e_v$) %, are ordered
 according to a left to right traversal of the monotone chain containing $e_v$.
%For two vertices with the same horizontal coordinate (i.e., endpoints of a common vertical edge), order them according to their order along a left to right traversal of the monotone chain containing their common edge. 
The two leftmost and rightmost vertices can be put in any order.
%The two leftmost vertices can be put in any order, as can the two rightmost vertices of $P$. 
Label the vertices as $v_1$, $v_2$,..., $v_{12}$ in this order, see Figure~\ref{fig:mo12-gon}. %We note that 
Vertical edges have endpoints $v_{2i-1}, v_{2_i}$, %for 
$i = 1,2,...,6$. 
For horizontal edges (except for rightmost and left most vertices), the right (left) vertex has odd (even) index.%, due to monotonicity.
%Except for rightmost and leftmost vertices, horizontal edges of $P$ extend leftward from vertices with odd index and rightward from vertices of even index because of monotonicity. 

Consider the supporting line $l$ through the vertical edge 
$\overline{v_5,v_6}$.
%with endpoints $v_5$, $v_6$. %We note that 
This separates from %the rest of 
$P$ a 6-gon with vertices $v_1$, $v_2$, $v_3$, $v_4$, $v_5$, and %the intersection of $l$ with 
$l\cap e$, where $e$ is an edge in the opposite monotone chain from $v_5$. 
We have $v_6 \subset l$,
%Vertex $v_6$ is also contained in this supporting line, 
though it may lie on the boundary of $P_L$, as in Figure~\ref{fig:mo12-gon}(a), or not, as in Figure~\ref{fig:mo12-gon}(b). 
The supporting line $l'$ through the vertical edge with 
$\overline{v_7,v_8}$
%endpoints $v_7$ and $v_8$ 
also separates from $P$ a 6-gon $P_R$ with vertices $v_8$, $v_9$, $v_{10}$, $v_{11}$, $v_{12}$ and 
$l'\cap e'$, where $e'$ is an edge in the opposite chain from $v_8$. 
%the intersection of $l'$ with some edge in the opposite chain from $v_8$. 
Note edge $e$ always extends rightward past the right boundary of $P_L$, %implying 
thus, its right endpoint is $v_i$ for $i \geq 7$. We consider two cases for the number of edges extending from $P_L$ rightward beyond $l$ (based on $v_6$'s location).

\iffalse
Based on %the location of $v_6$, 
$v_6$'s location, 
there are two cases %we must consider 
for the number of edges extending rightward beyond $l$ whose left endpoint is in $P_L$. Note edge $e$ always extends rightward past the right boundary of $P_L$, %implying 
thus, its right endpoint is $v_i$ for $i \geq 7$.
\fi

{\bf Case 1:} 
  If $v_6$ is on the boundary of $P_L$, a second edge extends rightward from $v_6$, also with its left endpoint in $P_L$ and its right endpoint some $v_i$ for $i \geq 7$; see Figure~\ref{fig:mo12-gon}(a). In this case at least one of these two edges has right endpoint $v_i$ for $i \geq 8$, %meaning it covers the entirety of $P_R$ in additional to 
so it covers $P_R$, $P_L$ and the rectangular region $P \setminus (P_L \cup P_R)$.%, as desired. 
  
  {\bf Case 2:} Else, $v_6$ is not on the boundary of $P_L$, and $e$ is the only edge extending rightward from $P_L$; see Figure~\ref{fig:mo12-gon}(b).  
Similar to case 1, if $e$ has right endpoint $v_i$ for $i \geq 8$ we are done, so, we suppose $v_7$ is $e$'s right endpoint.
%Similar to above, if $e$ has right endpoint $v_i$ for $i \geq 8$ we are also done, so we restrict our attention to the case where $e$'s right endpoint is $v_7$. 
In this case, $e$ might not cover the entirety of $P_R$. Let $f$ be the segment connecting $v_5$ and $v_7$. %, whose interior is entirely contained in the interior of $P$ by the assumptions defining this case. 
We have $f\subset P$. We note that %the edge 
$f$ separates from %the rest of 
$P$ a (non-orthogonal) 6-gon $P_L'$ with vertices $v_1, v_2, v_3, v_4, v_5, v_7$, and $P_L \subseteq P_L'$. %By the same argument as Lemma \ref{lem:mo6gon}, 
$P_L'$ %still 
contains at most one reflex vertex and, thus, can be covered by a point 2-transmitter placed anywhere on its boundary.  In particular, the edge $v_7v_8$ covers $P_L'$ from $v_7$, $P_R$ from $v_8$, and $P\setminus (P_L \cup P_R)$ from its interior.%, as desired. 
  
  In neither case above do we pick the rightmost or leftmost edge to cover $P$.\end{proof}

\begin{figure}

\begin{centering}

\begin{tikzpicture}[scale=0.6]
	\begin{pgfonlayer}{nodelayer}
		\node [style=vertex] (0) at (3, 1.25) {};
		\node [style=vertex] (1) at (3, -1) {};
		\node [style=vertex] (2) at (7, -1) {};
		\node [style=vertex] (3) at (7, 0.25) {};
		\node [style=vertex] (4) at (8.25, 2) {};
		\node [style=vertex] (5) at (3.75, 1.25) {};
		\node [style=vertex] (6) at (3.75, -0) {};
		\node [style=vertex] (7) at (9.25, 0.25) {};
		\node [style=vertex] (8) at (5.75, -0) {};
		\node [style=vertex] (9) at (5.75, 1) {};
		\node [style=vertex] (10) at (8.25, 1) {};
		\node [style=vertex] (11) at (9.25, 2) {};
		\node [style=none] (12) at (5.75, -2) {};
		\node [style=none] (13) at (5.75, 2.25) {};
		\node [style=none] (14) at (7, 2.25) {};
		\node [style=none] (15) at (7, -2) {};
		\node [style=none] (16) at (5.5, -1.75) {$l$};
		\node [style=none] (17) at (6.75, -1.75) {$l'$};
		\node [style=none] (18) at (3.5, -0.5) {$P_L$};
		\node [style=none] (19) at (8.75, 0.75) {$P_R$};
		\node [style=vertex] (20) at (-0.75, 1.5) {};
		\node [style=none] (21) at (-4.5, -1.75) {$l$};
		\node [style=vertex] (22) at (-6.25, 1.25) {};
		\node [style=vertex] (23) at (-0.75, -0.25) {};
		\node [style=vertex] (24) at (-5.5, -1) {};
		\node [style=vertex] (25) at (-1.75, 0.25) {};
		\node [style=none] (26) at (-3, 2) {};
		\node [style=none] (27) at (-3.25, -1.75) {$l'$};
		\node [style=vertex] (28) at (-3, -1) {};
		\node [style=vertex] (29) at (-4.25, 1.25) {};
		\node [style=vertex] (30) at (-6.25, -0) {};
		\node [style=none] (31) at (-5, 0.5) {$P_L$};
		\node [style=vertex] (32) at (-1.75, 1.5) {};
		\node [style=none] (33) at (-1.25, 0.25) {$P_R$};
		\node [style=none] (34) at (-4.25, -2) {};
		\node [style=vertex] (35) at (-5.5, -0) {};
		\node [style=none] (36) at (-3, -2) {};
		\node [style=none] (37) at (-4.25, 2) {};
		\node [style=vertex] (38) at (-4.25, 0.25) {};
		\node [style=vertex] (39) at (-3, -0.25) {};
		\node [style=none] (40) at (-6.75, 1.5) {$v_1$};
		\node [style=none] (41) at (-6.75, -0) {$v_2$};
		\node [style=none] (42) at (-6, -0.5) {$v_3$};
		\node [style=none] (43) at (-5.75, -1.25) {$v_4$};
		\node [style=none] (44) at (-3.75, 1.5) {$v_5$};
		\node [style=none] (45) at (-3.75, 0.75) {$v_6$};
		\node [style=none] (46) at (-2.5, -1) {$v_7$};
		\node [style=none] (47) at (-2.5, -0.5) {$v_8$};
		\node [style=none] (48) at (-2.25, 0.75) {$v_9$};
		\node [style=none] (49) at (-2.25, 1.75) {$v_{10}$};
		\node [style=none] (50) at (-0.25, 1.75) {$v_{11}$};
		\node [style=none] (51) at (-0.25, -0.75) {$v_{12}$};
		\node [style=none] (52) at (4.25, 1.5) {$v_3$};
		\node [style=none] (53) at (7.5, -0.25) {$v_8$};
		\node [style=none] (54) at (2.5, -1) {$v_2$};
		\node [style=none] (55) at (5.25, 1.25) {$v_6$};
		\node [style=none] (56) at (7.5, -1.25) {$v_7$};
		\node [style=none] (57) at (9.75, 2.25) {$v_{11}$};
		\node [style=none] (58) at (4.25, 0.5) {$v_4$};
		\node [style=none] (59) at (7.75, 2.25) {$v_{10}$};
		\node [style=none] (60) at (2.5, 1.25) {$v_1$};
		\node [style=none] (61) at (5.25, 0.5) {$v_5$};
		\node [style=none] (62) at (9.75, -0.25) {$v_{12}$};
		\node [style=none] (63) at (7.75, 1.5) {$v_9$};
		\node [style=none] (64) at (-4.75, -0.75) {$e$};
		\node [style=none] (65) at (4.75, -0.75) {$e$};
		\node [style=none] (66) at (6.5, -0.25) {$f$};
		\node [style=none] (67) at (-3.75, -2.75) {(a)};
		\node [style=none] (68) at (6.25, -2.75) {(b)};
	\end{pgfonlayer}
	\begin{pgfonlayer}{edgelayer}
		\draw [style=simple] (0) to (1);
		\draw [style=simple] (1) to (2);
		\draw [style=wide] (2) to (3);
		\draw [style=simple] (3) to (7);
		\draw [style=simple] (0) to (5);
		\draw [style=simple] (5) to (6);
		\draw [style=simple] (6) to (8);
		\draw [style=simple] (8) to (9);
		\draw [style=simple] (9) to (10);
		\draw [style=simple] (7) to (11);
		\draw [style=simple] (11) to (4);
		\draw [style=simple] (4) to (10);
		\draw [style=dash] (12.center) to (13.center);
		\draw [style=dash] (14.center) to (15.center);
		\draw [style=dash] (8) to (2);
		\draw [style=simple] (22) to (30);
		\draw [style=simple] (30) to (35);
		\draw [style=simple] (35) to (24);
		\draw [style=simple] (24) to (28);
		\draw [style=simple, in=180, out=0, looseness=1.00] (22) to (29);
		\draw [style=simple] (29) to (38);
		\draw [style=wide] (38) to (25);
		\draw [style=simple] (25) to (32);
		\draw [style=simple] (32) to (20);
		\draw [style=simple] (28) to (39);
		\draw [style=simple] (39) to (23);
		\draw [style=simple] (23) to (20);
		\draw [style=dash] (37.center) to (34.center);
		\draw [style=dash] (26.center) to (36.center);
	\end{pgfonlayer}
\end{tikzpicture}

\end{centering}
%\vspace{-4mm}
\caption{\small Examples of a monotone orthogonal 12-gon $P$; edge 2-transmitters covering $P$ are thickened. In the proof of Lemma~\ref{lem:mo12gon}, (a) falls under case 1 and (b) case 2.}
%Two examples of a MO 12-gon $P$;  the edges covering $P$ are thickened. Polygon (a) falls under case 1 and (b) falls under case 2 in the proof of Lemma~\ref{lem:mo12gon}.}
\label{fig:mo12-gon}
\end{figure}
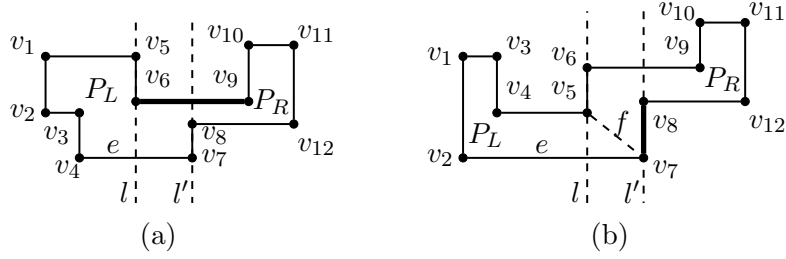

\begin{theorem} \label{thm:mo-edge-upper}
$\left\lceil \frac{n-2}{10} \right\rceil$ edge 2-transmitters are always sufficient to cover a monotone orthogonal $n$-gon.
\end{theorem}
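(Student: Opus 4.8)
The plan is to induct on $n$, following the same skeleton as the proof of Theorem~\ref{thm:m-edge-upper} but with the monotone orthogonal $12$-gon lemma (Lemma~\ref{lem:mo12gon}) playing the role of the monotone $10$-gon lemma, so that each edge $2$-transmitter absorbs ten vertices of the count rather than eight. Concretely, I would prove the stronger statement that every monotone orthogonal $n$-gon admits an edge $2$-transmitter cover of size $\left\lceil \frac{n-2}{10}\right\rceil$ in which \emph{no chosen edge is the leftmost or the rightmost edge of the polygon}. This invariant is the orthogonal counterpart of the ``$q$ is right of at least two vertices'' bookkeeping used in Theorem~\ref{thm:m-edge-upper}: in a monotone orthogonal polygon the leftmost edge is the vertical edge of minimum $x$-coordinate and the rightmost edge the vertical edge of maximum $x$-coordinate, so an edge that is neither lies (except possibly at a shared endpoint) strictly to the right of two vertices and strictly to the left of two other vertices. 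That is exactly what is needed to push the covering portions of edges onto the correct side of a vertical cut.

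For the base case $4\le n\le 12$ we have $\left\lceil\frac{n-2}{10}\right\rceil = 1$, and a single edge $2$-transmitter suffices: Lemma~\ref{lem:mo6gon} handles $n\le 6$ (a point $2$-transmitter placed anywhere works, so any interior edge does), and Lemma~\ref{lem:mo12gon} handles $n=12$, its case analysis applying verbatim to $n=8,10$ since having fewer than six vertical edges only shortens the argument; in each case the covering edge may be taken to be neither extreme, so the invariant holds. For the inductive step, take $n>12$ and split $P$ along a vertical line $l$ into a monotone orthogonal $12$-gon $L$ (left) and a monotone orthogonal $(n-10)$-gon $R$ (right), with $L'$, $R'$ and $P=L'\cup R'$ as in the Splitting Lemma; the accounting $12+(n-10)=n+2$ matches Lemma~\ref{lem:split} with $m=12$.

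By Lemma~\ref{lem:mo12gon}, $L$ is covered by a single edge $e$ that is neither its leftmost nor its rightmost edge; being non-rightmost, $e$ lies (save for possibly one shared endpoint) entirely to the left of $l$, so by the orthogonal reading of Lemma~\ref{lem:split-edge} it is a subset of an edge $t$ of $P$, and $t$ covers $L'=L$. Being non-leftmost in $L$, $e$ is not on $P$'s leftmost edge; being left of $l$, $t$ is not $P$'s rightmost edge. By the induction hypothesis $R$ has a cover $C$ of size $\left\lceil\frac{(n-10)-2}{10}\right\rceil$ using no extreme edge of $R$; each edge of $C$ is right of $R$'s leftmost edge, which sits on $l$, hence right of $l$, hence by Lemma~\ref{lem:split-edge} a subset of an edge of $P$ covering the matching piece of $R'$, and each inherits the non-extreme property for $P$ (non-leftmost of $R$ gives non-leftmost of $P$; non-rightmost of $R$ gives non-rightmost of $P$, since $R$ and $P$ share their rightmost edge). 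Combining, $P$ is covered by $1+\left\lceil\frac{n-12}{10}\right\rceil=\left\lceil\frac{n-2}{10}\right\rceil$ edges, all non-extreme, preserving the invariant.

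The main obstacle is the splitting step: Lemma~\ref{lem:split} is stated for general monotone polygons, and its proof appends a triangle formed by extending the two edges a vertical line crosses. In a monotone orthogonal polygon that line crosses two \emph{horizontal} edges, whose extensions are parallel and never meet, so the triangle construction degenerates. I would resolve this by cutting along the supporting line of the sixth vertical edge $E_6$ from the left: the cut is vertical, so both pieces stay orthogonal and monotone, and no triangle is needed. One then verifies the vertex accounting directly. With exactly ten original vertices (the endpoints of $E_1,\dots,E_5$) strictly left of $l$, the two endpoints of $E_6$ splitting one to each side, and a single new vertex $w$ introduced where $l$ meets the opposite chain and shared by both pieces, $L$ receives $10+2=12$ vertices and $R$ receives $n-10$, reproducing the $+2$ overlap of Lemma~\ref{lem:split}. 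It remains to confirm that Lemma~\ref{lem:split-edge} survives this orthogonal cut, namely that every edge of $L$ left of $l$ and every edge of $R$ right of $l$ (all unchanged save the shared cut edge) is a subset of an edge of $P$; this is immediate for a vertical cut but is the point that must be checked to make the induction airtight.
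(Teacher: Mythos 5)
Your proposal matches the paper's proof essentially step for step: the paper also inducts on $n$, uses Lemma~\ref{lem:mo12gon} to cover the base cases $4\le m\le 12$ and the left piece, and cuts directly along the supporting line of the sixth vertical edge (rather than invoking the general Splitting Lemma), peeling off a monotone orthogonal $12$-gon and leaving an $(n-10)$-gon for the induction. The only cosmetic difference is that the paper's inductive invariant records only that the cover avoids the leftmost edge, with avoidance of the cut edge on the left piece supplied by Lemma~\ref{lem:mo12gon} itself, whereas you symmetrically track both extreme edges.
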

%\vspace*{-.2cm}
\begin{proof}
%We proceed by induction on $n$. %Recall 
We induct on $n$.
%For MO $n$-gons $P$: $n$ even and $P$ has $n/2$ vertical and horizontal edges. 
For a %monotone orthogonal polygon $P$, %note 
monotone orthogonal $n$-gon $P$,
$n$ is even and $P$ has $n/2$ vertical and horizontal edges.
%Monotone orthogonal polygons always have an even number of vertices, and the same number $n/2$ of vertical and horizontal edges.
By  Lemma~\ref{lem:mo12gon}, all %monotone orthogonal 
% Le.~\ref{lem:mo12gon} implies that all monotone orthogonal %polygons with at least 4 and at most 12 edges 
monotone orthogonal $m$-gons 
with $4\leq m \leq 12$ can be covered by one %a single 
edge 2-transmitter, not placed on its leftmost edge. 

Label the vertical edges of $P$ in order left to right as $e_1, e_2, \,\ldots, e_{n/2}$. Consider the supporting line of edge $e_6$; this separates from $P$ a monotone orthogonal 12-gon $Q$ with six vertical edges $e_1, e_2, \,\ldots, e_5$, and some segment of the supporting line of $e_6$. Polygon $Q$ can be covered by a single edge 2-transmitter, not placed on its leftmost or rightmost edge. The remainder $P \setminus Q$ has $n/2 - 5\geq 2 $ vertical edges $e_7, e_8, \;\ldots, e_{n/2}$, and some segment of the supporting line of $e_6$, % meaning this remainder has $n - 10$ vertices. 
so, $|V(P \setminus Q)| = n-10$. By the inductive hypothesis, it can be covered by  $\left\lceil \frac{(n-10)-2}{10}\right\rceil$ edge 2-transmitters, none of which are placed on its leftmost edge. Together with the single edge covering $Q$, this yields a cover of $P$ by $
1+\left\lceil \frac{(n-10)-2}{10}\right\rceil= \left\lceil \frac{n-2}{10} \right\rceil$ edge 2-transmitters, not including $P$'s leftmost edge.
\end{proof}

\iffalse
\subsection{Orthogonal Polygons}

\todo{maybe delete this section?}

The class of orthogonal polygons is often of interest in art gallery-type results, though this class did not prove amenable to 2-transmitter exploration. The result of Theorem \ref{thm:mo-edge-lower} for monotone orthogonal polygons is still the best known lower bound for orthogonal polygons, implying $\lceil (n-2)/10\rceil$ edge 2-transmitters are sometimes necessary. % As $\lfloor n/4\rfloor$ 0-transmitting edge guards are always sufficient \cite{???},  it follows that the same number of stronger 2-transmitting edge guards are also always sufficient; it is unclear how one would be able to improve this bound.
%\todo{better bound than n/4:} 
Bjorling-Sachs~\cite{b-egrp-98} showed that $\lfloor \frac{3n+4}{16}\rfloor$ 0-transmitting edge guards are always sufficient to cover an orthogonal polygon. It follows that the same number of stronger 2-transmitting edge guards are also always sufficient. %it is unclear how one would be able to improve this bound.
 \fi

%\vspace*{-.3cm}
\section{Conclusion}\label{sec:concl}%\todo{}
We presented NP-hardness, necessity and sufficiency results for point and edge 2-transmitters.
For edge 2-transmitters, only the result on monotone orthogonal polygons is tight. For the other classes, a gap between upper and lower bound remains. In particular, the upper bound in two cases comes from the less powerful 0-transmitters.

In addition, the point 2-transmitter problem in general polygons remains %to be 
a very interesting question: we improved the current lower bound to $\lfloor n/5\rfloor$, but the best upper bound is still given by the upper bound from the general AGP, that is, $\lfloor n/3\rfloor$.

%\small
\section*{ Acknowledgments.}
%\noindent{\bf Acknowledgments.} 
We would like to thank Joseph O'Rourke for sharing the 2-transmitter problem with us during the Mathematics Research Communities Workshop on Discrete and Computational Geometry in 2012 
and the AMS for their financial support of this program. %\todo{check}

%We would like to thank the AMS for accepting us into the %Mathematical Research Community 
%MRC on Discrete and Computational Geometry in 2012 and their financial support for a research %meeting; and Joseph O'Rourke for including us in the discussion on the 2-transmitter problem.

%\section*{Acknowledgments}
%\vspace*{-.3cm}
{%\footnotesize
\bibliographystyle{abbrv}
\bibliography{lit}

\begin{thebibliography}{10}

\bibitem{affhhuv-mimp-09}
O.~Aichholzer, R.~Fabila-Monroy, D.~Flores-Pe{\~n}aloza, T.~Hackl, C.~Huemer,
  J.~Urrutia, and B.~Vogtenhuber.
\newblock Modem illumination of monotone polygons.
\newblock In {\em Proc. 25th Europ. Workshop Comp. Geom.}, pages 167--170,
  2009.

\bibitem{aff-mimp-14}
O.~Aichholzer, R.~Fabila-Monroy, D.~Flores-Pe{\~{n}}aloza, T.~Hackl,
  J.~Urrutia, and B.~Vogtenhuber.
\newblock Modem illumination of monotone polygons.
\newblock {\em CoRR}, abs/1503.05062.

\bibitem{bbal-cktpo-10}
B.~Ballinger, N.~Benbernou, P.~Bose, M.~Damian, E.~Demaine, V.~Dujmovi\'c,
  R.~Flatland, F.~Hurtado, J.~Iacono, A.~Lubiw, P.~Morin, V.~Sacrist\'an,
  D.~Souvaine, and R.~Uehara.
\newblock Coverage with k-transmitters in the presence of obstacles.
\newblock In W.~Wu and O.~Daescu, editors, {\em Comb. Opt. and Appl.}, volume
  6509 of {\em Lecture Notes in Computer Science}, pages 1--15. Springer Berlin
  Heidelberg, 2010.

\bibitem{biikm-gp-11}
T.~C. Biedl, M.~T. Irfan, J.~Iwerks, J.~Kim, and J.~S.~B. Mitchell.
\newblock Guarding polyominoes.
\newblock In {\em Proc. 27th {ACM} Sympos. on Comp. Geom.}, pages 387--396,
  2011.

\bibitem{b-egrp-98}
I.~Bjorling-Sachs.
\newblock Edge guards in rectilinear polygons.
\newblock {\em Comput. Geom.}, 11(2):111--123, 1998.

\bibitem{bhn-gl2lp-01}
B.~Brod{\'{e}}n, M.~Hammar, and B.~J. Nilsson.
\newblock Guarding lines and 2-link polygons is apx-hard.
\newblock In {\em Proc. 13th Canad. Conf. Comput. Geom.}, pages 45--48, 2001.

\bibitem{cfils-npppe-14}
S.~Cannon, T.~Fai, J.~Iwerks, U.~Leopold, and C.~Schmidt.
\newblock {NP}-hardness proofs for point and edge 2-transmitters.
\newblock In {\em 24th Fall Workshop on Computational Geometry}, 2014.

\bibitem{cfils-ce2ta-15}
S.~Cannon, T.~Fai, J.~Iwerks, U.~Leopold, and C.~Schmidt.
\newblock Combinatorics of edge 2-transmitter art gallery problems.
\newblock In {\em 31st European Workshop on Computational Geometry}, pages
  40--43, 2015.

\bibitem{c-ctpg-75}
V.~Chv\'atal.
\newblock A combinatorial theorem in plane geometry.
\newblock {\em J. Combin. Theory Ser. B}, 18:39--41, 1975.

\bibitem{jls-rphs-88}
J.~A. Dean, A.~Lingas, and J.-R. Sack.
\newblock Recognizing polygons, or how to spy.
\newblock {\em The Visual Computer}, 3(6):344--355, 1988.

\bibitem{mvu-mip-09}
R.~Fabila-Monroy, A.~Vargas, and J.~Urrutia.
\newblock On modem illumination problems.
\newblock In {\em XIII Encuentros de Geometria Computacional, Zaragoza}. 2009.

\bibitem{f-spcwt-78}
S.~Fisk.
\newblock A short proof of {C}hv\'atal's watchman theorem.
\newblock {\em Journal of Combinatorial Theory}, (B 24):374, 1978.

\bibitem{kkk-tgrfw-83}
J.~Kahn, M.~Klawe, and D.~Kleitman.
\newblock Traditional art galleries require fewer watchmen.
\newblock {\em SIAM J. on Algebraic and Discrete Methods}, 4(2):194--206, 1983.

\bibitem{ll-ccagp-86}
D.~T. Lee and A.~K. Lin.
\newblock Computational complexity of art gallery problems.
\newblock {\em IEEE Trans. Inf. Theor.}, 32(2):276--282, 1986.

\bibitem{m-phe-75}
G.~H. Meisters.
\newblock Polygons have ears.
\newblock {\em Amer. Math. Monthly}, 82:648--651, 1975.

\bibitem{ms-sp-94}
N.~Mouawad and T.~Shermer.
\newblock The superman problem.
\newblock {\em The Visual Computer}, 10(8):459--473, 1994.

\bibitem{o-agta-87}
J.~O'Rourke.
\newblock {\em Art Gallery Theorems and Algorithms}.
\newblock International Series of Monographs on Computer Science. Oxford
  University Press, New York, NY, 1987.

\bibitem{o-cgc52-12}
J.~O'Rourke.
\newblock Computational geometry column 52.
\newblock {\em SIGACT News}, 43(1):82--85, Mar. 2012.

\bibitem{os-snpdp-83}
J.~O'Rourke and K.~Supowit.
\newblock Some {NP}-hard polygon decomposition problems.
\newblock {\em Information Theory, IEEE Transactions on}, 29(2):181--190, Mar
  1983.

\bibitem{sh-tnhag-95}
D.~Schuchardt and H.-D. Hecker.
\newblock Two {NP}-hard art-gallery problems for ortho-polygons.
\newblock {\em Mathematical Logic Quarterly}, 41:261--267, 1995.

\bibitem{s-rrag-92}
T.~C. Shermer.
\newblock Recent results in art galleries.
\newblock In {\em Proceedings of the IEEE}, volume~80, pages 1384--1399, 1992.

\end{thebibliography}

}
%\todo{I deleted the orth. polygons section - ok?}
%\todo{Superman thing reference move to wads?}

%\newpage
%\input{appendix}

\end{document}